%
%
%
%
%
%
%
\documentclass[%
 reprint,
 amsmath,amssymb,
 aps,
pra,
]{revtex4-2}

\usepackage{graphicx}
\usepackage{dcolumn}
\usepackage{bm}
\usepackage[dvipsnames]{xcolor}
\usepackage{physics}
\usepackage{orcidlink}
\usepackage{mathrsfs}
\usepackage{hyperref}
\hypersetup{
    colorlinks = true,
    citecolor = magenta,
    linkcolor = magenta,
}

\usepackage{algorithm}
\usepackage{algpseudocode}
\usepackage{amsthm}
\usepackage{braket}
\usepackage[english]{babel}
\usepackage{enumitem}
\newtheorem{theorem}{Theorem}[section]

\newtheorem{proposition}{Proposition}[section]
\newtheorem{lemma}{Lemma}[section]
\newtheorem{corollary}{Corollary}[section]
\newtheorem{definition}{Definition}[section]

\newtheorem{maintheorem}{Theorem}
\newtheorem{maindefinition}{Definition}
\newtheorem{mainproposition}{Proposition}

\newif\ifenablecommands  
\enablecommandsfalse

\begin{document}

\preprint{APS/123-QED}



\title{Conditional Mutual Information and Information-Theoretic Phases of Decohered Gibbs States}



\author{Yifan Zhang}
\email{yz4281@princeton.edu}
\affiliation{Department of Electrical and Computer Engineering, Princeton University, Princeton, NJ 08544}
 
\author{Sarang Gopalakrishnan}
\email{sgopalakrishnan@princeton.edu}
\affiliation{Department of Electrical and Computer Engineering, Princeton University, Princeton, NJ 08544}
\date{\today}

\begin{abstract}

Classical and quantum Markov networks---including Gibbs states of commuting local Hamiltonians---are characterized by the vanishing of conditional mutual information (CMI) between spatially separated subsystems. Adding local dissipation to a Markov network turns it into a \emph{hidden Markov network}, in which CMI is not guaranteed to vanish even at long distances. The onset of long-range CMI corresponds to an information-theoretic mixed-state phase transition, with far-ranging implications for teleportation, decoding, and state compressions. Little is known, however, about the conditions under which dissipation can generate long-range CMI. In this work we provide the first rigorous results in this direction. We establish that CMI in high-temperature Gibbs states subject to local dissipation decays exponentially, (i) for classical Hamiltonians subject to arbitrary local transition matrices, and (ii) for commuting local Hamiltonians subject to unital channels that obey certain mild restrictions. Conversely, we show that low-temperature hidden Markov networks can sustain long-range CMI. Our results establish the existence of finite-temperature information-theoretic phase transitions even in models that have no finite-temperature thermodynamic phase transitions. We also show several applications in quantum information and many-body physics.

\end{abstract}

\maketitle

\ifenablecommands
\section{Introduction (bookmark purpose)}
\fi

It is generally believed that long-range correlation quantities cannot be generated under finite-time dynamics because of the Lieb-Robinson bound. However, such a belief does not hold true in \emph{conditional mutual information} (CMI), a multipartite correlation measure. Unlike standard measures of entanglement and mutual information, the CMI does \emph{not} obey a light cone \cite{lee2024universal}, in quantum or classical systems, and can be generated using local channels. In the quantum context, the ability of local channels to generate long-range CMI can be seen as a form of generalized teleportation: informally, if Bob separately shares a Bell pair with Alice and one with Claire, he can implement a Bell measurement and induce entanglement between Alice and Claire that is conditional on the measurement outcome \cite{zhang2024nonlocal}.

The generation of long-range CMI under local or finite-time dynamics is closely related to quantum error corrections and mixed-state phase transitions. Under local noisy dynamics, error accumulates in the code and destroys the codeword after an $O(1)$ time threshold. This threshold transition challenges the standard paradigms of statistical mechanics as correlation length cannot diverge in $O(1)$ time. However, noise accumulation can generate long-range CMI, which operationally reflects the failure of quasi-local decoders after passing the threshold. Similar argument also shows that long-range CMI is a crucial ingredient in all mixed-state phase transitions \cite{sang2024stability}. However, very little is systematically known about the states that can generate long-range CMI under local decoherence.

%

The bulk of this work is dedicated to establishing that local decoherence \emph{cannot} induce long-range CMI in a wide class of high-temperature Gibbs states. Our considerations apply to both quantum and classical CMI. The classical states we consider are \emph{Markov networks}, states with zero CMI, but local dissipation generates CMI and turns them into \emph{hidden Markov networks}.
%
%
we call the analogous quantum states \emph{quantum hidden Markov networks}. In both cases, we construct low-temperature states that provably generate long-range CMI under decoherence. Together, our results imply the existence of nonzero-temperature information-theoretic transitions in the structure of Gibbs states, even when there is no thermodynamic phase transition at nonzero temperature: dissipation generates long-range CMI only when the initial state is below a critical temperature.

\begin{figure*}[t]
\includegraphics[width=\linewidth]{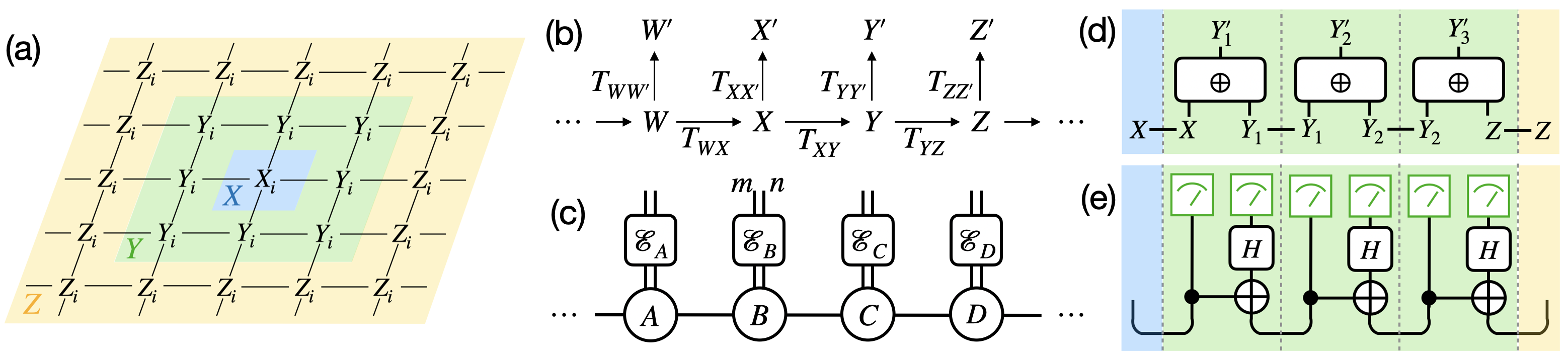}
\caption{\label{fig:setup}(a) An example of partitioning a Markov network exhibiting conditional independence. $\textcolor{Green}{Y}$ separates $\textcolor{Cerulean}{X}$ and $\textcolor{Goldenrod}{Z}$ and we have $I(\textcolor{Cerulean}{X}:\textcolor{Goldenrod}{Z}|\textcolor{Green}{Y})=0$. (b) An example of hidden Markov chain. $WXYZ$ forms a Markov chain and each site is subject to a transition matrix. (c) An example of quantum hidden Markov chain. $\rho_{ABCD}$ forms a quantum Markov chain and each site is subject to a quantum channel. Note that we use the double leg to denote the bra and ket, so fixing the index $mn$ corresponds to choosing $\ket{m}\bra{n}$. (d) An example of classical hidden Markov network. Grey dashed lines separate sites and we partition the system into $\textcolor{Cerulean}{X}$, $\textcolor{Green}{Y}$, and $\textcolor{Goldenrod}{Z}$. The white boxes represent a channel that takes two bits as input and outputs their parity. (e) An example of quantum hidden Markov network. Grey dashed lines separate sites and we partition the system into $\textcolor{Cerulean}{A}$, $\textcolor{Green}{B}$, and $\textcolor{Goldenrod}{C}$. The circuit rotates the computation basis to the Bell basis and then measures it.}
\end{figure*}

\emph{Definitions}.---We start with the classical case.
%
The CMI $I(\textcolor{Cerulean}{X}:\textcolor{Goldenrod}{Z}|\textcolor{Green}{Y})$ is the mutual information between the two random variables 
%
$\textcolor{Cerulean}{X}$ and $\textcolor{Goldenrod}{Z}$, conditioned on a third variable, $\textcolor{Green}{Y}$. In terms of Shannon entropies, $I(\textcolor{Cerulean}{X}:\textcolor{Goldenrod}{Z}|\textcolor{Green}{Y}) = H(\textcolor{Cerulean}{X}\textcolor{Green}{Y}) + H(\textcolor{Goldenrod}{Z}\textcolor{Green}{Y}) - H(\textcolor{Green}{Y}) - H(\textcolor{Cerulean}{X}\textcolor{Green}{Y}\textcolor{Goldenrod}{Z})$. Note that this definition also carries over to the quantum setting if one replaces all Shannon entropies with Von Neumann entropies. If $I(\textcolor{Cerulean}{X}:\textcolor{Goldenrod}{Z}|\textcolor{Green}{Y})=0$, a property known as conditional independence, the variables $\textcolor{Cerulean}{X}$, $\textcolor{Green}{Y}$, and $\textcolor{Goldenrod}{Z}$ are said to form a (classical or quantum) \emph{Markov chain}. Intuitively, this means that all correlations between $\textcolor{Cerulean}{X}$ and $\textcolor{Goldenrod}{Z}$ are mediated by $\textcolor{Green}{Y}$. 
In higher dimensions, Markov chains generalize to \emph{Markov networks}, where a lattice of random variables exhibits zero CMI when partitioned such that $\textcolor{Green}{Y}$ separates $\textcolor{Cerulean}{X}$ and $\textcolor{Goldenrod}{Z}$ (see Fig. \ref{fig:setup}(a)). \emph{Quantum Markov chain} and \emph{Markov networks} are analogously defined\cite{hayden2004structure,leifer2008quantum}.
%
Because Markov networks represent distributions without multipartite correlations, they are highly structured and computationally tractable. They have been extensively studied across physics, probability theory, and machine learning. A key property of classical Markov chains is the Hammersley-Clifford theorem, which establishes the equivalence between Markov networks and Gibbs states of local Hamiltonians. However, while quantum Markov networks and Gibbs states of commuting Hamiltonians on Hypercube lattices are equivalent \cite{brown2012quantum}\footnote{There exist quantum Markov networks on graphs with triangles that cannot be written as Gibbs states of commuting local Hamiltonians \cite{brown2012quantum}. However, this non-commuting property disappears after ``coarse-graining'' by combining multiple adjacent sites into a super-site. To our knowledge, no examples exist of quantum Markov networks that are not Gibbs states of commuting local Hamiltonians even after coarse-graining.}, generic Gibbs states of local quantum Hamiltonians do not obey the Hammersley-Clifford theorem.  
%


Local dissipation does not preserve the Markov chain property: instead, it turns Markov chains (networks) into \emph{hidden Markov chains (networks)} (Fig. \ref{fig:setup}(b)), for which the CMI is generically positive. In classical probability, hidden Markov networks are powerful tools for modeling complex distributions due to their ability to encode non-local, multipartite correlations.
Quantum Hidden quantum Markov chains/networks are defined analogously (Fig. \ref{fig:setup}(c))
%
~\footnote{We note that the notion of Hidden Quantum Markov Model is defined in the literature \cite{monras2010hidden}. Their quantum generalization of the hidden Markov model is also well-motivated but is different from ours. To our best knowledge, the quantum generalization of hidden Markov networks has not been discussed in literature}, and again have nonvanishing CMI. The existence of long-range CMI in the quantum setting is associated with phenomena that have no classical counterpart, such as quantum teleportation and measurement-induced entanglement \cite{verstraete2004entanglement,popp2005localizable,bao2024finite,napp2022efficient,lin2023probing,google2023measurement,cheng2024universal,zhang2024nonlocal,mcginley2024measurement}. Which quantum states can potentially serve as resources for these protocols,  however, remains an open question. 

The final concept we need to define is the \emph{Markov length} $\xi$. In general, a hidden Markov model might not obey an \emph{exact} Markov condition; however, we say that this condition is obeyed \emph{approximately} if $I(X:Z|Y) \propto e^{-|Y|/\xi}$ where $|Y|$ is the shortest path length across $Y$ and $\xi$ is the Markov length. If there is a uniformly defined Markov length (independent of the size of $Y$) then the Markov chain property is satisfied to any desired accuracy after a small amount of coarse-graining. When we speak of ``long-range'' CMI we mean that the Markov length diverges.

\ifenablecommands
\section{Examples of Quantum Hidden Markov Networks (bookmark purpose)}
\fi
\emph{Examples of Quantum Hidden Markov Networks}.---We provide examples of classical and quantum hidden Markov models in Fig. \ref{fig:setup}(d,e), where we also show the relation between quantum hidden Markov networks and teleportation. In Fig. \ref{fig:setup}(d), neighboring sites share two identical bits, with a 50\% probability of being either zero or one. These correlated bits are represented by\(X\), \(Y_1\), \(Y_2\), \(Y_3\), and \(Z\). The system forms a Markov chain initially. To generate a hidden Markov model, a channel is applied at each site in \(B\)that only retains the parity of the two bits (e.g., \(Y_1' = X \oplus Y_1\)). The resulting distribution of \(X\), \(Y_1'\), \(Y_2'\), \(Y_3'\), and \(Z\) satisfies the following parity constraint but is otherwise constrained.
\begin{equation}
    X \oplus Y_1' \oplus Y_2' \oplus Y_3' \oplus Z = 1.
\end{equation}
Therefore, by knowing the values of \(Y_1'\), \(Y_2'\), and \(Y_3'\), the parity of $X \oplus Z$ is fixed, so \(X\) and \(Z\) become maximally correlated. This results in \(I(X:Z|Y_1'Y_2'Y_3') = 1\) across long distances.

The quantum analogue of this model is shown in Fig. \ref{fig:setup}(e), which coincides with the quantum network circuit. The neighboring sites share a Bell pair, and the channel corresponds to a quantum instrument \(\mathcal{E}\) that performs measurements in the Bell basis. Equivalently, the quantum instrument measures the \(ZZ\) and \(XX\) stabilizers, while in the classical example only $ZZ$ stabilizers are measured. Knowing the measurement outcomes in \(B\), \(A\) and \(C\) share a maximally entangled state, resulting in \(I_{\mathcal{E}[\rho]}(A:C|B) = 2\) across long distances. This example demonstrates the relation between teleportation and CMI. 

In both cases, the initial Markov chain can be viewed as the ground state of a nearest-neighbor Hamiltonian: the \(ZZ\) stabilizer in the classical case and both the \(ZZ\) and \(XX\) stabilizers in the quantum case. Increasing the temperature introduces noise, degrading the correlations in the classical bits and the Bell states. As a result, the CMI of the hidden Markov models decays exponentially with increasing distance between \(A\) and \(C\). Thus truly long-range CMI is absent at any finite temperature in this example. A natural question is whether long-range CMI is absent at nonzero temperatures \emph{in general}. In what follows we establish that long-range CMI is indeed absent at sufficiently high temperatures, but also that there are models in which it is stable for a range of low but nonzero temperatures.



\ifenablecommands
\section{Finite Markov Length at High Temperature (bookmark purpose)}
\fi
\emph{Finite Markov Length at High Temperature}.---We present our first result that addresses the lack of long-range CMI at high temperatures. Let $H=\sum_a \lambda_a h_a$, where $h_a$ are Hermitian matrices with local support and $\lambda_a$ are real coefficients. We require that $h_a$ are products of some local operator basis such as Pauli operators. $H$ is a commuting Hamiltonian if any $h_a$ and $h_b$ commute. Let $\rho \propto e^{-\beta H}$ be the Gibbs state at temperature $\beta$. 

We will consider the \emph{commutation-preserving} channels defined below. 
\begin{maindefinition}
\label{def:comm_preserv} Given a Hamiltonian $H=\sum_a \lambda_a h_a$ where every pair of $h_a$ and $h_b$ commutes. Consider the set of operators $\{O_m\}$, where each $O_m$ is a product of $h_a$, namely $O_m=\prod_a h_a^{\mu_{a,m}}$, where $\mu_{a,m}$ is a non-negative integer denoting the multiplicity of  $h_a$. Consider a set of channels $\{\mathcal{E}_i\}$ where each $\mathcal{E}_i$ acts on site $i$. The set is commutation-preserving if for any operators $O_m$ and $O_n$ and for any subset of sites $S$, $(\otimes_{i \in S} \mathcal{E}_i)[O_m]$ and $(\otimes_{i \in S} \mathcal{E}_i)[O_n]$ commute.
\end{maindefinition}
We also impose that the local channels are \emph{unital}, namely $\mathcal{E}_i[I]=I$ for all $i$. This category of channels contains many physically-relevant ones. For example, if $h_a$ are Pauli operators, then bit-flip, dephasing, and depolarization channels at any rate and their composition all belong to this category.

We present our first main result below.
\begin{maintheorem}\label{thm:decay_cmi_informal}
    (Informal) Consider the Gibbs state $\rho \propto e^{- \beta H}$ of some local commuting Hamiltonian $H$. Let $\mathcal{E} = \otimes_i \mathcal{E}_i$, where $ \mathcal{E}_i$ is a unital channel acting on site $i$ We demand that $\{\mathcal{E}_i\}$ is commutation-preserving. Consider any three subsystems $ABC$. When $\beta$ is less than a critical temperature $\beta_c = O(1)$, we have the following decay of CMI:
    \begin{equation}
        I_{\mathcal{E}[\rho]}(A:C|B) \propto e^{-\frac{d_{AC}}{\xi}}
    \end{equation}
    Where $d_{AC}$ is the distance between $A$ and $C$, and $\xi=O(\beta/\beta_c)$ is called the Markov length
\end{maintheorem}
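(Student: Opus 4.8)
The plan is to prove the bound in three stages: first reduce the quantum (commutation-preserving) problem to an effective classical one using the commuting structure, then rewrite the conditional mutual information as a relative entropy to its Markov reconstruction, and finally control that relative entropy with a high-temperature cluster expansion.

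\emph{Reduction to a classical problem.} Since $H$ is commuting, $e^{-\beta H}=\prod_a e^{-\beta\lambda_a h_a}$ expands as a linear combination $\sum_m c_m O_m$ of the commuting products $O_m=\prod_a h_a^{\mu_{a,m}}$, with coefficients $c_m$ built from factors of order $\tanh(\beta\lambda_a)$. Applying $\mathcal{E}=\otimes_i\mathcal{E}_i$ gives $\mathcal{E}[\rho]=Z^{-1}\sum_m c_m\,\mathcal{E}[O_m]$, and since each $\mathcal{E}[O_m]=\otimes_i\mathcal{E}_i[(O_m)_i]$ retains the support of $O_m$, the family $\{\mathcal{E}[O_m]\}$ is, by Definition~\ref{def:comm_preserv}, a commuting set of site-factorized operators. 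I would use this to simultaneously diagonalize them in a common product eigenbasis across $A|B|C$, so that the reduced states of $\mathcal{E}[\rho]$ on $AB$, $BC$, $B$, and $ABC$ are all diagonal; the von Neumann entropies then collapse to Shannon entropies of an effective distribution $P$, and the quantum CMI equals the classical CMI $I_P(A:C|B)$. Unitality guarantees that the empty term $\mathcal{E}[I]=I$ survives normalization, so that $P$ is a genuine probability distribution. This same $P$-description is the starting point for the classical case (i), where the $\mathcal{E}_i$ are local stochastic maps acting on a Markov random field.

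\emph{CMI as a distance from the Markov reconstruction.} I would then use the exact identity
\begin{equation}
I_P(A:C|B)=D\!\left(P_{ABC}\,\big\|\,P_{A|B}\,P_{BC}\right),
\end{equation}
which rewrites the CMI as the relative entropy between the true distribution and its conditionally independent reconstruction $P_{A|B}P_{BC}$, an \emph{exact} Markov chain. The CMI thus measures precisely the residual $A$--$C$ correlation that survives after conditioning on the (noisy) image of $B$: for the undecohered Markov network this residual vanishes, and the task becomes showing that the local dissipation reintroduces only exponentially small residual correlation.

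\emph{High-temperature cluster expansion.} I would express $P$ through the polymer expansion obtained by writing each Gibbs factor as $e^{-\beta\lambda_a h_a}=1+(e^{-\beta\lambda_a h_a}-1)$ and resumming, so that $\log P(a,c|b)-\log P(a|b)-\log P(c|b)$ becomes a sum over \emph{connected} clusters of activated interactions. Such a cluster contributes only if it links $A$ to $C$; since $B$ separates them, any contributing cluster must traverse a path of length at least $d_{AC}$, and in the convergent regime $\beta<\beta_c$ each activated interaction carries weight of order $\beta/\beta_c$. The total spanning-cluster weight is therefore bounded by $(\beta/\beta_c)^{d_{AC}}$ up to a combinatorial multiplicity absorbed into a slightly smaller base, and summing the relative entropy over configurations introduces only a bounded logarithmic factor, yielding $I_{\mathcal{E}[\rho]}(A:C|B)\lesssim e^{-d_{AC}/\xi}$ with $\xi=O(\beta/\beta_c)$.

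\emph{Main obstacle.} The hardest part is the reduction step: showing that the commutation-preserving and unital hypotheses truly force all the relevant reduced states to be simultaneously diagonalizable in a \emph{product} basis across $A|B|C$, rather than merely that the global operators commute. This is where quantifying over all subsets $S$ in Definition~\ref{def:comm_preserv} is essential, and it is what lets the quantum entropies collapse to classical ones. A secondary difficulty is handling the non-linear entropy functional: one must establish convergence of the cluster expansion uniformly in system size and verify that all non-spanning clusters cancel exactly against the reconstruction $P_{A|B}P_{BC}$, so that only the exponentially suppressed spanning clusters remain to control the relative entropy.
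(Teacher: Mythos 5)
Your reduction step---the claim that commutation-preservation plus unitality lets you simultaneously diagonalize $\mathcal{E}[\rho_{AB}]$, $\mathcal{E}[\rho_{BC}]$, $\mathcal{E}[\rho_B]$, $\mathcal{E}[\rho_{ABC}]$ in a common \emph{product} basis across $A|B|C$, collapsing the quantum CMI to a Shannon CMI---is genuinely wrong, not merely hard. Two distinct problems. First, a mutually commuting family $\{\mathcal{E}[O_m]\}$ admits a common eigenbasis, but nothing forces that basis to be a product basis: already for $\mathcal{E}=\mathrm{id}$ (trivially unital and commutation-preserving) the Gibbs state of the 1D cluster Hamiltonian $\sum_i Z_{i-1}X_iZ_{i+1}$ is a commuting-Pauli mixture whose joint eigenbasis is an entangled stabilizer basis, and its reduced states are not diagonal in any product basis. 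Second, the four reductions correspond to composing $\mathcal{E}$ with complete depolarization on \emph{different} regions ($C$, $A$, $AC$, $\emptyset$), and Definition~\ref{def:comm_preserv} only guarantees commutation when the \emph{same} subset $S$ of channels is applied to both operators; operators decohered on different regions need not commute with one another, so there is no common eigenbasis at all, product or otherwise. The decisive counterexample sits inside this very paper: Theorem~\ref{thm:long_range_cmi_informal} applies single-site complete bit-flip channels (unital, commutation-preserving) to a cluster-state Gibbs state and obtains conditional states on $AC$ that are near-maximally \emph{entangled}, with CMI $\approx 2k$ from teleported logical Bell pairs. Your reduction, which is temperature-independent as stated, would force those conditional states to be classical---so if it were correct it would contradict the paper's own low-temperature result. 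The theorem is irreducibly quantum, and no effective distribution $P$ exists.

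For contrast, the paper's proof never diagonalizes anything. It bounds $I(A:C|\mathcal{E}[B]) \le \norm{H(A:C|\mathcal{E}[B])}$ where $H(A:C|\mathcal{E}[B])$ is the sum of the four matrix logarithms (Proposition~\ref{prop:cmi_norm_bound}), rewrites partial traces as depolarizing channels so every term has the uniform form $\log((\text{product of local unital channels})[\tilde\rho])$ (Proposition~\ref{prop:trace_channel}), and runs the generalized cluster expansion in the couplings $\lambda_a$ directly on these logarithms. Unitality enters twice: $\mathcal{E}[I]=I$ places the expansion of $\log(\mathcal{E}[\tilde\rho])$ around the identity, and unital channels contract the operator norm (Lemma~\ref{lem:unital_channel_op_norm}), giving $\norm{\mathcal{D}_{\mathbf{V}}\mathcal{E}^{(\mathbf{V})}[\tilde\rho]}\le\beta^{|\mathbf{V}|}$. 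Commutation-preservation is used exactly once, and not where you put it: it lets the commuting factors $\mathcal{D}_{\mathbf{V}}\mathcal{E}^{(\mathbf{V})}[\tilde\rho]$ appearing in powers of the logarithm series be freely reordered, so the series can be reorganized cluster-by-cluster via the graph-coloring combinatorics of Haah et al.---this is precisely the step whose failure for noncommuting objects sank the earlier approach of Kuwahara et al. Your correct intuition that ``only connected clusters spanning $A$ to $C$ survive'' is the paper's Lemma~\ref{lem:conn_AC}, proved at the operator level with no cancellation against a Markov reconstruction needed. One further caution on your classical stage: directly polymer-expanding $\log P$ for the decohered distribution fails for general transition matrices, because $\mathcal{T}[\rho]$ is \emph{not} the Gibbs distribution of any local Hamiltonian (marginalizing the hidden layer delocalizes $\log P$); the paper's classical proof instead writes the CMI as an average of post-selected mutual informations and shows that conditioning on $\mathcal{T}[Y]=y$ restores a local \emph{pinned} Gibbs form (Lemma~\ref{lem:pinned_gibbs}) to which the expansion applies.
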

Our results implies that quantum hidden Markov networks under unital, commutation preserving channels have a $O(1)$ Markov length, and thus cannot support long-range CMI at high temperautre.

Our major technical contribution is to show that the generalized cluster expansion, a standard tool to analyze high-temperature Gibbs states, can be applied to $\mathcal{E}[\rho]$ as well. We defer the proof to Appendix \ref{high_temp_appn} and explain the rough idea here. Without loss of generality, we restrict $\mathcal{E}$ to act on $B$ only since applying local channels to $A$ and $C$ only decreases CMI due to the data processing inequality. We will consider the following matrix $H(A:C|\mathcal{E}[B])$:
\begin{equation}\label{eq:mcmi_main_text}
\begin{split}
    H(A:C|\mathcal{E}[B]) =  &\log(\mathcal{E}[\rho_{AB}]) + \log(\mathcal{E}[\rho_{BC}])\\
    - &\log(\mathcal{E}[\rho_{B}])  - \log(\mathcal{E}[\rho_{ABC}])
    \end{split}
\end{equation}
Where $\rho_{L}$ denotes the reduced density matrix in the region $L$. One can quickly see that ${ I_{\mathcal{E}[\rho]}(A:C|B) = -\text{Tr}\,[\mathcal{E}[\rho]H(A:C|\mathcal{E}[B])]}$, thus the operator norm $\norm{H(A:C|\mathcal{E}[B])}$ upper-bounds the CMI. 

To upper-bound $\norm{H(A:C|\mathcal{E}[B])}$, we Taylor expand $\norm{H(A:C|\mathcal{E}[B])}$ in $\beta$
\begin{equation}\label{eq:mcmi_taylor_expand}
    H(A:C|\mathcal{E}[B]) = \sum_{m=0}^{\infty} \frac{(-\beta)^m}{m!} \frac{\partial^m}{\partial \beta^m} H(A:C|\mathcal{E}[B])
\end{equation}
Cluster expansion reorganizes the above derivative in $\beta$ into derivatives of Hamiltonian coefficients $\lambda_a$, namely
\begin{equation}
    \beta^m \frac{\partial^m}{\partial \beta^m} H(A:C|\mathcal{E}[B]) = \left(\sum_a \lambda_a \frac{\partial}{\partial \lambda_a}\right)^m H(A:C|\mathcal{E}[B])
\end{equation}
We will use the cluster expansion to show that 
\begin{enumerate}
    \item $\frac{1}{m!}\norm{\frac{\partial^m}{\partial \beta^m} H(A:C|\mathcal{E}[B])}$ decays as $c^m$, where $c$ is an $O(1)$ constant, so Eq. (\ref{eq:mcmi_taylor_expand}) converges at high temperature where $\beta < c$.
    \item The first nonzero $\frac{1}{m!}\norm{\frac{\partial^m}{\partial \beta^m} H(A:C|\mathcal{E}[B])}$ is when $m \ge d_{AC}$. This establishes the main result of finite Markov length.
\end{enumerate}
Our proof strategy only works for the restricted class of Gibbs states and noise channels defined above. The difficulty in generalizing our results to non-unital channels and non-commuting Hamiltonians seem essentially technical, and we have not found any candidate physical mechanism that would generate long-range CMI in generic high-temperature Gibbs states for generic one-site channels. We conjecture that our conclusions hold in these cases also. 
%
(We note that the failure of our proof in non-commuting settings is related to the flaw in \cite{kuwahara2020clustering}.)

In classical systems, the notion of channels is replaced by transition matrices. There we overcome the unitality restriction (commutation is also automatic) and show the decay of CMI under any transition matrices.
\begin{maintheorem}\label{thm:classical decay_cmi_informal}
    (Informal) Consider the Gibbs distribution $\rho \propto e^{- \beta H}$ of some local Hamiltonian $H$ that is diagonal in the computational basis. Let $\mathcal{T} = \otimes_i \mathcal{T}_i$, where each $ \mathcal{T}_i$ is a transition matrix acting on site $i$. Consider any three subsystems $XYZ$. When $\beta$ is less than a critical temperature $\beta_c = O(1)$, we have the following decay of CMI:
    \begin{equation}
        I_{\mathcal{T}[\rho]}(X:Z|Y) \propto e^{-\frac{d_{XZ}}{\xi}}
    \end{equation}
    Where $d_{XZ}$ is the distance between $X$ and $Z$, and $\xi=O(\beta/\beta_c)$ gives the Markov length
\end{maintheorem}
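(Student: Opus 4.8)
The plan is to reduce the classical Theorem to the structure already established for Theorem 1, exploiting the fact that classical Gibbs states are diagonal and that transition matrices acting on classical states are exactly the channels one gets by decohering in the computational basis. Concretely, I would first observe that a diagonal Hamiltonian $H$ (commuting trivially, since all $h_a$ are diagonal) has a Gibbs state $\rho \propto e^{-\beta H}$ that is itself diagonal, i.e.\ a probability distribution $p(s) \propto e^{-\beta E(s)}$ over configurations $s$. A transition matrix $\mathcal{T}_i$ on site $i$ maps this to another diagonal state, so $\mathcal{T}[\rho]$ is again a classical distribution and the von Neumann CMI reduces to the Shannon CMI $I(X:Z|Y)$. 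The goal is therefore to bound the matrix $H(X:Z|\mathcal{T}[Y])$ defined exactly as in Eq.~(\ref{eq:mcmi_main_text}), reusing the identity $I_{\mathcal{T}[\rho]}(X:Z|Y) = -\mathrm{Tr}[\mathcal{T}[\rho]\, H(X:Z|\mathcal{T}[Y])]$ and the bound by the operator norm.

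\textbf{Next I would} set up the cluster expansion in exactly the same way as in Theorem 1, Taylor-expanding $H(X:Z|\mathcal{T}[Y])$ in $\beta$ and reorganizing the $m$-th derivative via $\beta^m \partial_\beta^m = \bigl(\sum_a \lambda_a \partial_{\lambda_a}\bigr)^m$. The two facts I must re-establish are (i) the coefficients $\tfrac{1}{m!}\lVert \partial_\beta^m H(X:Z|\mathcal{T}[Y]) \rVert$ decay as $c^m$ for an $O(1)$ constant $c$, giving convergence for $\beta < c$, and (ii) the first nonvanishing term occurs only at order $m \geq d_{XZ}$, which forces the exponential decay with Markov length $\xi = O(\beta/\beta_c)$. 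Claim (ii) is essentially geometric: each derivative $\partial_{\lambda_a}$ brings down one interaction term $h_a$, and to connect the region $X$ to the region $Z$ through the conditioning region $Y$ one needs a chain of at least $d_{XZ}$ overlapping local terms, so any cluster of fewer than $d_{XZ}$ terms cannot span the gap and its contribution to the CMI cancels. This combinatorial locality argument transfers directly from the commuting case.

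\textbf{The crucial new ingredient}, and where the classical setting genuinely buys us something, is removing the unitality restriction needed in Theorem 1. In the quantum proof unitality was used to control the logarithms of the decohered reduced states; classically, I expect to exploit that $\mathcal{T}[\rho]$ remains a genuine diagonal probability distribution whose marginals and conditionals are manipulated by elementary (non-operator-ordering) arithmetic. The plan is to express each $\log \mathcal{T}[\rho_L]$ directly in terms of the transformed distribution and show that the derivatives $\partial_{\lambda_a}$ of these logarithms still admit the cluster-expansion bounds even when $\mathcal{T}_i$ is non-unital—because a non-unital transition matrix merely reweights configuration probabilities locally and cannot create connectivity between sites that were not already correlated through $H$. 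Formally I would track how a single $\mathcal{T}_i$ modifies the conditional structure and argue that it acts as a bounded, site-local reparametrization of the underlying Gibbs weights, so the convergence radius $\beta_c$ shifts by at most an $O(1)$ factor.

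\textbf{I expect the main obstacle} to be exactly this control of the non-unital transition matrices inside the logarithm derivatives: unlike the unital case, the four log-terms in $H(X:Z|\mathcal{T}[Y])$ no longer share a common ``identity-preserving'' reference point, so the naive cancellations that make the order-$m$ bound clean may fail. The key technical lemma I would need is that the classical conditional-mutual-information matrix, written as a sum over clusters of interaction terms, still has every cluster of diameter less than $d_{XZ}$ cancel after the transition matrices are applied—i.e.\ that non-unitality does not generate spurious long-range terms in the expansion. I anticipate proving this by a direct factorization argument: because transition matrices act site-by-site and the underlying $H$ is classical, the joint distribution factorizes over the Markov structure, and the cluster contributions inherit a strict support constraint that no local reweighting can violate. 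Once that lemma is in hand, the exponential bound and the identification $\xi = O(\beta/\beta_c)$ follow by the same summation over clusters as in the commuting quantum case.
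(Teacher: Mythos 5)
Your reduction to Shannon CMI, and your observation that the geometric cancellation of clusters not connecting $X$ and $Z$ transfers from the commuting case, are both correct. But the heart of your proposal---running the cluster expansion \emph{directly} on $\log(\mathcal{T}[\tilde{\rho}_L])$ and arguing that a non-unital $\mathcal{T}_i$ ``merely reweights configuration probabilities locally''---contains a genuine gap, and it is precisely the gap the paper's proof is built to avoid. A transition matrix is \emph{not} a local reweighting: it sums over the hidden input configuration, $P'(y_i)=\sum_{y_i'}\mathcal{T}_i(y_i,y_i')P(y_i')$, and this summation is exactly what makes $\mathcal{T}[\rho]$ a hidden Markov network rather than a Gibbs distribution of any local Hamiltonian. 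Two quantitative steps of the Theorem~1 machinery then fail. First, the graph-coloring resummation (Lemma \ref{lem:graph_coloring}) expands $\log(I+A)$ around the identity, using unitality to get $\mathcal{E}[I]=I$ at zeroth order; for a transition matrix $\mathcal{T}[I]\neq I$ and can have arbitrarily small entries (e.g.\ $\mathcal{T}=\left(\begin{smallmatrix}1&1\\0&0\end{smallmatrix}\right)$ sends the uniform vector to $(2,0)^{T}$), so the expansion reference point is nearly singular and any bound would degrade with the inverse of the smallest matrix element. Second, the derivative bound (Lemma \ref{lem:deriv_bound}) rests on operator-norm contractivity of unital channels (Lemma \ref{lem:unital_channel_op_norm}); column-stochastic matrices are contractive in $\ell_1$ but not in the operator norm, as the same example shows, so $\norm{\mathcal{D}_{\mathbf{V}}\mathcal{T}^{(\mathbf{V})}[\tilde{\rho}]}\le\beta^k$ does not follow. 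Your closing sentence---that the convergence radius ``shifts by at most an $O(1)$ factor''---is asserted, not proved, and no uniform-in-$\mathcal{T}$ bound is available along this route.

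The paper's missing idea is a post-selection trick. Classical CMI decomposes as an average of post-selected mutual informations, $I(X:Z|\mathcal{T}[Y])=\sum_y P(\mathcal{T}[Y]=y)\,I(X:Z|\mathcal{T}[Y]=y)$ (Proposition \ref{prop:post_select})---a step with no quantum analogue, which is why the classical result escapes the unitality restriction. Conditioned on the outcome $y$, the channel tensor $\mathcal{T}_i(y_i,\cdot)$ \emph{does} become a genuine local reweighting of the input variable, absorbed as a single-site pinning field via $e^{-d^{(i)}(y_i')}=\mathcal{T}_i(y_i,y_i')$; the post-selected marginal is then the marginal of a pinned Gibbs distribution $e^{-H^{(y)}}$ with $H^{(y)}=\beta H+\sum_{i\in Y}d^{(i)}$ (Lemma \ref{lem:pinned_gibbs}, requiring the mild assumption of nonzero matrix elements). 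The cluster expansion is run on $H^{(y)}$ with the pinning terms held \emph{fixed}---they may have large norm and cannot be expanded---and since every pinned site is traced out, the zeroth-order term is exactly the identity and the per-cluster bound $\beta^k$ holds uniformly in $y$ and in $\mathcal{T}$ (Lemma \ref{lem:pinned_deriv_bound}, which uses only positivity and trace normalization of $\frac{q}{Z_i}e^{-d^{(i)}}$). Averaging the uniform bound over $y$ gives the theorem. So your instinct that conditioning restores locality is the right one; the repair is to condition first and expand second, rather than to expand the unconditioned hidden Markov network.
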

Therefore, the existence of long-range CMI in high-temperature classical hidden Markov network is ruled out. To prove the above theorem, we exploit the fact that classical CMI is an average of post-selected mutual information, which helps us overcome the unitality restriction. We present the proof in Appendix \ref{classical_high_temp_appn}.

\ifenablecommands
\section{Long-range CMI at Low Temperature (bookmark purpose)}
\fi
\emph{Long-range CMI at Low Temperature}.---  We have established the finite Markov length at high temperature, On the contrary, we provide a family of quantum hidden Markov networks that support long-range CMI below a temperature threshold. This family consists of cluster states implementing fault-tolerant MBQCs. A cluster state, defined on a simple graph, can be considered as the ground state of the following stabilizer Hamiltonian:
\begin{equation}\label{eq:cluster_hamiltonian}
    H = \sum_{i} \left( X_i \prod_{j \setminus i} Z_j \right)
\end{equation}
Where $i$ labels the vertex on the graph and $j \setminus i$ denotes all the vertices adjacent to $i$. Cluster states are short-range entangled states that can support universal MBQC \cite{raussendorf2001one}. By choosing a cluster state with an appropriate underlying graph and measuring each qubit in some appropriate local basis, one can implement any quantum circuit as a MBQC protocol. 

In MBQC, measuring the bulk of a cluster state generates entanglement between boundaries. This fits nicely into quantum hidden Markov networks. The initial cluster state is the unique ground state of a commuting Hamiltonian, thus is a quantum Markov network. The process of measuring each qubit in the appropriate basis maps to applying local dephasing channels in the corresponding basis. By conditioning on the measurement outcome, the entanglement is reflected in the CMI.

Crucially, it is known that MBQC can implement quantum error correcting codes, preserving long-range entanglement even when the underlying cluster states are subject to local noise channels \cite{raussendorf2005long,raussendorf2007topological,raussendorf2007fault}. In particular, the ``foliation'' technique allows one to encode the syndrome measurement circuit of any Calderbank, Shor, and Steane (CSS) code into a cluster state \cite{bolt2016foliated}. This robustness allows us to show the existence of long-range CMI at low temperature.

We will consider the following setup, shown in Fig. \ref{fig:mbqc}(a). We consider a cluster state that foliates a CSS code (see the construction in Ref. \cite{bolt2016foliated}). We denote the two boundaries as \(\textcolor{Cerulean}{A}\) and \(\textcolor{Goldenrod}{C}\) and denote the bulk as \(\textcolor{Green}{B}\). As an example, Fig. \ref{fig:mbqc}(b) shows the unit cell of the cluster state the encodes the Toric code. Measurement of \(\textcolor{Green}{B}\) in the $X$ basis generates entangled code states between \(\textcolor{Cerulean}{A}\) and \(\textcolor{Goldenrod}{C}\). In the corresponding quantum hidden Makov network, we apply the channel $\mathcal B = \otimes_{i \in  \textcolor{Green}{B}} \mathcal{B}_{i}$, where each $\mathcal{B}_{i}$ is a complete bit-flip channel acting on qubit $i$.
\begin{equation}
    \mathcal{B}_{i}[\rho] = \frac{1}{2} \rho + \frac{1}{2} X_i \rho X_i 
\end{equation}
We note that $\mathcal B$ is unital and commutation-preserving with respect to stabilizer Hamiltonians, so it satisfies the condition in the previous section. Suppose the underlying quantum error correcting code has $k$ logical qubits, then measuring \(\textcolor{Green}{B}\) generates entangled logical states between \(\textcolor{Cerulean}{A}\) and \(\textcolor{Goldenrod}{C}\). Therefore, this quantum hidden Markov network has a long-range CMI lower-bounded by $2k$ because the logical subspace is maximally entangled.
\begin{equation}
    I_{\mathcal{B}[\rho]}(\textcolor{Cerulean}{A}:\textcolor{Goldenrod}{C}|\textcolor{Green}{B}) \ge 2k
\end{equation}
\begin{figure}
\includegraphics[width=\linewidth]{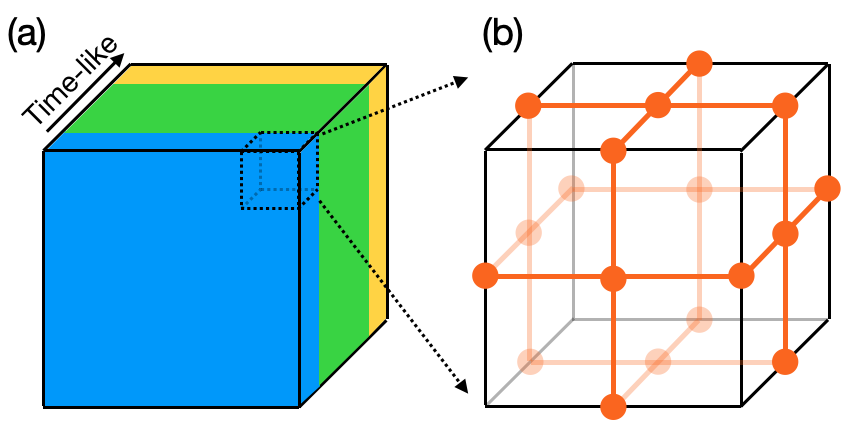}
\caption{\label{fig:mbqc}(a) Geometry of the MBQC cluster state we consider. The bulk region \(\textcolor{Green}{B}\)  is measured, generating entangled code states between \(\textcolor{Cerulean}{A}\) and \(\textcolor{Goldenrod}{C}\) which sit at the boundary. We show a three-dimensional cluster state as an example, but in general the dimensionality is arbitrary as long as a ``time-like'' direction exist, along which quantum information propagates. (b) The unit cell of the three-dimensional cluster state that encodes a Toric code in MBQC. The \textcolor{orange}{graph} that defines the cluster state has vertices on the edge and the face of the unit cell.}
\end{figure}

We now consider the above quantum hidden Markov network at finite temperatures and ask whether the above long-range CMI is stable. An important observation, first pointed out in Ref. \cite{raussendorf2005long}, is that the Gibbs state of the Hamiltonian in Eq. (\ref{eq:cluster_hamiltonian}) can also be generated by applying local dephasing channels to the clean cluster state.
\begin{mainproposition}
\label{prop:cluster_gibbs_noise}
(Eq. (5) in \cite{raussendorf2005long}) Given a cluster state $\ket{\psi}_{ABC}$, consider its finite-temperature Gibbs state $\rho^{(\beta)}_{ABC}$ with parent Hamiltonian Eq. (\ref{eq:cluster_hamiltonian}). Then $\rho^{(\beta)}_{ABC}$ can be generated from $\ket{\psi}_{ABC}$ using local dephasing channels.
    \begin{align}
        \rho^{(\beta)}_{ABC} = (\mathcal{N}_1 \otimes \mathcal{N}_2 \otimes \ldots \otimes \mathcal{N}_n) [\ket{\psi}\bra{\psi}_{ABC}]
    \end{align}
    Where $\mathcal{N}_i[\rho] = (1-p) \rho + p Z_i \rho Z_i$ denotes the dephasing channel on site $i$ with a rate $p = 1/(e^{2\beta}+1)$.
\end{mainproposition}
Therefore, a finite-temperature cluster state is equivalent to a cluster state subject to local noise. This allows us to invoke the existence of fault-tolerant MBQC to establish long-range CMI in the corresponding low-temperature quantum hidden Markov model. 
\begin{maintheorem}\label{thm:long_range_cmi_informal}
(Informal) Given a cluster state $\ket{\psi}_{ABC}$ encoding some fault-tolerant MBQC circuit with $k$ logical qubits. $B$ denotes the region to be measured in some local basis. Under the noisy measurement channel with measurement error rate $p$, suppose that $\ket{\psi}_{ABC}$ has a logical error rate $q$ that scales as $\exp(-\omega(n))$
    
Consider the corresponding finite-temperature cluster state $\rho^{(\beta)}_{ABC}$ with $\beta = \frac{1}{2} \log(p^{-1}-1)$ and apply product of local dephasing channel acting on $B$ under the corresponding measurement basis $\mathcal{E} = \mathcal{E} _1 \otimes \mathcal{E} _2 \otimes \ldots \otimes \mathcal{E} _{|B|}$. After applying the channel, the quantum hidden Markov network $\mathcal{E}_B[\rho^{(\beta)}_{ABC}]$ satisfies the following lower bound on its long-range CMI:
    \begin{align}
        I_{\mathcal{E}_B[\rho^{(\beta)}_{ABC}]}(A:C|B) \ge 2k - \epsilon
    \end{align}
Where $\epsilon$ scales as $\exp(-\omega(n))$
\end{maintheorem}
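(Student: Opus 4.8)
The plan is to combine Proposition~\ref{prop:cluster_gibbs_noise} with the fault-tolerance guarantee of the underlying MBQC, reducing the CMI lower bound to a statement about the entanglement fidelity of a decoded logical state. First I would use Proposition~\ref{prop:cluster_gibbs_noise} to write $\rho^{(\beta)}_{ABC} = \bigotimes_i \mathcal N_i[\ket{\psi}\bra{\psi}_{ABC}]$, where each $\mathcal N_i$ is a $Z$-dephasing channel of rate $p = 1/(e^{2\beta}+1)$; the choice $\beta = \tfrac12\log(p^{-1}-1)$ makes this rate exactly equal to the measurement error rate $p$ in the theorem. Because $\mathcal E_B = \bigotimes_{i\in B}\mathcal E_i$ completely dephases every qubit of $B$ in its measurement basis, the state $\mathcal E_B[\rho^{(\beta)}_{ABC}]$ is block-diagonal on $B$, so $B$ acts as a classical register recording an outcome string $\vec s$. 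For such classical--quantum states the CMI collapses to an average of bipartite mutual informations,
\begin{equation}
    I_{\mathcal E_B[\rho^{(\beta)}]}(A:C|B) = \sum_{\vec s} P(\vec s)\, I_{\sigma^{\vec s}_{AC}}(A:C),
\end{equation}
where $P(\vec s)$ is the outcome distribution and $\sigma^{\vec s}_{AC}$ is the post-measurement state on $AC$. It therefore suffices to lower-bound the typical mutual information $I_{\sigma^{\vec s}_{AC}}(A:C)$.

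The key physical observation I would then exploit is that the thermal $Z$-dephasing on $B$ can be commuted through the measurement: since the dephasing axis anticommutes with the measured observable in the relevant (e.g.\ $X$-basis) measurements, a thermal error on a bulk qubit simply flips its measurement outcome with probability $p$. Hence measuring $B$ on the finite-temperature state is identical to running the ideal MBQC on the clean cluster state $\ket{\psi}$ with independent measurement errors of rate $p$, the residual thermal noise on the boundaries $A$ and $C$ being absorbed as physical errors on the code qubits. By the hypothesis that this noisy MBQC has logical error rate $q = \exp(-\omega(n))$, there exists for each outcome $\vec s$ a decoding channel of product form $\mathcal D_A^{\vec s}\otimes \mathcal D_C^{\vec s}$ (the MBQC byproduct correction followed by CSS recovery, which factorizes across the $A|C$ cut) whose output has entanglement fidelity at least $1-q$ with the $k$ ideal logical Bell pairs. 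Invoking monotonicity of mutual information under the local channel $\mathcal D_A^{\vec s}\otimes\mathcal D_C^{\vec s}$, and then a continuity bound (Fannes--Audenaert/Alicki--Fannes--Winter), gives $I_{\sigma^{\vec s}_{AC}}(A:C)\ge 2k - \delta(q)$ with $\delta(q) = O(\sqrt q\,\mathrm{poly}(n))$, which is still of order $\exp(-\omega(n))$.

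Averaging over $\vec s$ then yields $I_{\mathcal E_B[\rho^{(\beta)}]}(A:C|B)\ge 2k-\epsilon$ with $\epsilon = \exp(-\omega(n))$, as claimed. I expect the main obstacle to be the middle step: converting the essentially classical fault-tolerance guarantee---``logical failure probability $q$''---into the genuinely quantum statement that the \emph{entanglement} of the $k$ logical Bell pairs survives with fidelity $1-q$. This requires care in (i) verifying that the recovery map really factorizes as $\mathcal D_A^{\vec s}\otimes\mathcal D_C^{\vec s}$ so that mutual-information monotonicity applies across the $A|C$ cut, and (ii) handling the residual thermal noise on the boundary regions $A$ and $C$, where no measurement is performed, by arguing it remains below the code's correction threshold so that it degrades the fidelity only by a further $\exp(-\omega(n))$ amount.
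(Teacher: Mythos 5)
Your overall route is the same as the paper's: Proposition~\ref{prop:cluster_gibbs_noise} to trade temperature $\beta=\frac12\log(p^{-1}-1)$ for local dephasing at rate $p=1/(e^{2\beta}+1)$, the classical--quantum decomposition $I(A:C|B)=\sum_b p_b\, I_{\rho_{AC|b}}(A:C)$ once $B$ is fully dephased, an outcome-dependent product decoder $\mathcal D_{b,A}\otimes\mathcal D_{b,C}$ combined with the data-processing inequality, and a Fannes--Audenaert continuity bound on the code space. Your two stated worries, (i) factorization of the recovery across the $A|C$ cut and (ii) residual thermal noise on the unmeasured boundary regions, are both absorbed in the paper directly into the \emph{definition} of the logical error rate (Definition~\ref{def:logical}): $q$ is defined for the noisy post-measurement states $\rho_{AC|b}$, which already carry the boundary noise, with respect to decoders that are products $\mathcal D_{b,A}\otimes\mathcal D_{b,C}$ by fiat, so neither point requires a separate argument beyond assuming such a $q$ exists and is small.

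There is, however, one genuine gap in your middle step. You assert that \emph{for each outcome} $\vec s$ the decoded state has fidelity at least $1-q$ with the $k$ logical Bell pairs, and then apply the continuity bound outcome-by-outcome before averaging. But the logical error rate---both as the paper defines it (Eq.~(\ref{eq:average_fidelity_bound})) and as fault-tolerance results actually deliver it---is an \emph{average} over outcomes, $\sum_b p_b F_b = 1-q$; atypical syndrome strings can have fidelity far below $1-q$. The linear term of Fannes--Audenaert averages harmlessly (via $D_b\le\sqrt{1-F_b}$ and concavity of the square root one gets the $4k\sqrt q$ term), but the binary-entropy term $H(D_b)$ is nonlinear in $D_b$, so your per-outcome bound cannot simply be averaged. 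This is precisely where the paper's proof spends its technical effort: it controls $\sum_b p_b H(D_b)$ by splitting outcomes with Markov's inequality, $P(D_b\ge d')\le \sqrt q/d'$, and optimizing $d'$, which produces the $O(q^{1/6})$ term in Theorem~\ref{thm:long_range_cmi} (the paper notes the exponent $1/6$ is an artifact; alternatively, concavity of $H$ and Jensen's inequality give $\sum_b p_b H(D_b)\le H(\sqrt q)=O(\sqrt q\,\log(1/q))$ once $\sqrt q\le 1/2$). With that repair your argument goes through, and since $q=\exp(-\omega(n))$ the resulting $\epsilon$ is still exponentially small; but as written, the inference from average fidelity to per-outcome fidelity is unjustified.
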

The above theorem essentially generalizes the argument in \cite{raussendorf2005long} to arbitrary fault-tolerant MBQCs, and we defer the proof to Appendix \ref{low_temp_appn}. Roughly speaking, the logical error rate is the average fidelity of the post-measurement, post-error-correction code state in $AC$ to the maximally-entangled code state, averaged over all possible measurement outcomes and all noise patterns. We rigorously define the logical error rate $q$ in Appendix \ref{low_temp_appn} and note that this definition is lower-bounded by the logical error rate in the standard Clifford simulations \cite{gidney2021stim}. There exists extensive numerical and analytic studies which confirm the exponentially small logical error rate in fault-tolerant MBQC \cite{raussendorf2007fault,raussendorf2007topological,claes2023tailored,lee2022universal,van2024fault}.

To summarize the proof, we apply the Fannes–Audenaert inequality which entails that two states that are exponentially close in trace distance are also exponentially close in their entropies. Applying it to the code space and realizing that two sets of entangled $k$ logical qubits have a mutual information of $2k$, we obtain the result.

Lastly, we comment on the recent Ref. \cite{negari2024spacetime} which proposes a similar idea of using Markov length to detect phase transitions in cluster states encoding logical information. we show that our results are complementary to theirs in Appendix \ref{comp_negari}.

\ifenablecommands
\section{Discussion (bookmarking purpose)}
\fi
\emph{Discussion}.---Our first main result is establishing that CMI decays exponentially with a finite Markov length in decohered high-temperature Gibbs states, (i)~for general commuting Hamiltonians subject to single-site unital commutation-preserving channels, and (ii)~for classical Hamiltonians subject to arbitrary single-site decoherence. Therefore, by the results of Ref.~\cite{sang2024stability}, all of these states lie in the same trivial information-theoretic phase as their parent high-temperature Gibbs states. Our second main result shows that this is not the only phase: low-temperature Gibbs states of certain commuting Hamiltonians are capable of fault-tolerant MBQC and consequently their Markov length can diverge under decoherence. Our results establish that finite-temperature information-theoretic transitions are possible even in the absence of any thermodynamic phase transition. 

Our results are relevant to the recent interest in identifying space-time fault-tolerance as a new phase of matter \cite{negari2024spacetime}. 
%
%
%
Our result also has implications for quantum state preparation (see Appendix \ref{recovery}). 
Lastly, it has been shown that for the measurement outcome distribution from any state, if the distribution has a finite Markov length, then it can be efficiently represented by a neural quantum state (See Appendix \ref{neural}) \cite{yang2024can} . Since measurements can be considered as a dephasing channel, our results imply that all high-temperature hidden Markov networks admit efficient neural quantum state representations.

\begin{acknowledgments}
Y.Z. and S.G. would like to thank Angela Capel, Tomotaka Kuwahara, Dominik Wild, Shengqi Sang, and Ewin Tang for useful discussions. Y.Z. and S.G. acknowledge support from NSF QuSEC-TAQS OSI 2326767.
\end{acknowledgments}

\begin{widetext}
\appendix
\section{Generalized Cluster Expansion Under Local Unital, Commutation-Preserving Channels}\label{high_temp_appn}
In this section, we prove the finite Markov length of any high-temperature quantum hidden Markov network under unital, commutation-preserving channels.

We consider an operator basis $\{P_\alpha\}$ of the $q$ dimensional local Hilbert space, where each operator has the operator norm $\norm{P_a} \le 1$. Recall that the operator norm of a matrix $A$ is defined as follows:
\begin{equation}
    \norm{A} = \max_{x} \frac{\norm {Ax}}{\norm {x}}
\end{equation}
Where $x$ denotes a vector with a norm. In the case of finite-dimensional Hermitian matrices, which are what we consider, the operator norm coincides with the Schatten-$\infty$ norm, defined as the largest magnitude of the eigenvalues.

Without loss of generality, we also assume that $\{P_\alpha\}$ contains the identity operator $I$ and $\text{Tr}[P_\alpha]=0$ for all $P_\alpha \neq I$. Finally, we demand that $\{P_\alpha\}$ form a projective representation of a group. In other words, any $P_a P_b$ still belongs to $\{P_\alpha\}$, up to a global phase.

We will consider the commuting Hamiltonian $H=\sum_a \lambda_a h_a$ where all $h_a$ commute mutually. We demand that $h_a$ is a tensor product of the operators in $\{P_\alpha\}$. An example is that $\{P_\alpha\}$ is the set of Pauli operators and $H$ is a stabilizer Hamiltonian. We now state our result formally below.

\begin{theorem}
\label{thm:decay_cmi}
Given a Hamiltonian $H=\sum_a \lambda_a h_a$ where each $h_a$ is a tensor product of operators in $\{P_\alpha\}$,  all $h_a$ commute mutually, and $|\lambda_a| \le 1, \forall a$. Suppose that every site supports at most $\mathfrak{d}$ terms in $H$. Let $\rho = \frac{1}{Z}e^{-\beta H}$ be the Gibbs state at temperature $\beta$. Consider arbitrary subsystems $ABC$ where $A$ and $C$ are separated (not necessarily by $B$). Let $\mathcal{E} = \otimes_i \mathcal{E}_i $ be a product of single-site unital channels $\mathcal{E}_i$ acting on $B$. We demand that $\{\mathcal{E}_i\}$ is commutation-preserving.

When $\beta \le \beta_c = 1/(e (\mathfrak{d}+1)(1+e(\mathfrak{d}-1)))$, we have the following decay of CMI in $\mathcal{E}[\rho]$:
    \begin{equation}
        I(A:C|\mathcal{E}[B]) \le c \, \min(|\partial A|, |\partial C|) e^{-\frac{d_{AC}}{\xi}}
    \end{equation}
Where $c$ is an $O(1)$ constant, $|\partial A|$ denotes the boundary size of $A$, defined as the number of terms in $H$ that are supported both inside and outside $A$. $|\partial C|$ is similarly defined. $d_{AC}$ is the distance between $A$ and $C$, defined as the minimum weight of any connected cluster that connects $A$ and $C$. $\xi = O(\frac{\beta}{\beta_C})$ gives the Markov length.
\end{theorem}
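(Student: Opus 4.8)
The plan is to bound the CMI by $\norm{H(A:C|\mathcal{E}[B])}$, where $H(A:C|\mathcal{E}[B])$ is the operator in Eq.~(\ref{eq:mcmi_main_text}), and then control this norm with a generalized cluster expansion. First I would use the data-processing inequality to reduce to the case where $\mathcal{E}$ acts on $B$ only, so that, writing $\sigma=\mathcal{E}[\rho]$, the reduced states obey $\sigma_{AB}=\mathcal{E}[\rho_{AB}]$, $\sigma_B=\mathcal{E}[\rho_B]$, and so on. Writing each von Neumann entropy as $S(R)=-\text{Tr}[\sigma\log\sigma_R]$ yields the identity $I_{\mathcal{E}[\rho]}(A:C|B)=-\text{Tr}[\mathcal{E}[\rho]\,H(A:C|\mathcal{E}[B])]$; since $\mathcal{E}[\rho]$ is a density matrix and strong subadditivity guarantees the CMI is non-negative, we get $0\le I\le \norm{H(A:C|\mathcal{E}[B])}$, so it suffices to bound the operator norm.

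Next I would Taylor expand $H(A:C|\mathcal{E}[B])$ in $\beta$ about $\beta=0$ as in Eq.~(\ref{eq:mcmi_taylor_expand}). At $\beta=0$ every reduced state is maximally mixed and unitality ($\mathcal{E}_i[I]=I$) forces $H(A:C|\mathcal{E}[B])=0$, so the expansion starts at positive order. Using that $\rho$ depends on $\beta$ and $\{\lambda_a\}$ only through the products $\{\beta\lambda_a\}$, I would convert each $\beta$-derivative into the Euler operator $\sum_a\lambda_a\partial_{\lambda_a}$; the $m$-th order coefficient then becomes a sum over multisets (clusters) of $m$ Hamiltonian terms, and because $\{\mathcal{E}_i\}$ is commutation-preserving the images $\mathcal{E}[O_m]$ still commute, so each $\log\sigma_R$ admits a linked-cluster (cumulant) expansion whose order-$m$ piece is supported on connected clusters of total weight $m$ lying in $R$. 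The crucial cancellation then follows from the inclusion-exclusion structure $AB\cap BC=B$, $AB\cup BC=ABC$: a connected cluster whose support misses $A$ appears with equal and opposite signs in $\log\sigma_{BC}$ and $\log\sigma_{ABC}$ (or in $\log\sigma_B$ and $\log\sigma_{AB}$) and drops out, and symmetrically for clusters missing $C$. Only clusters touching both $A$ and $C$ survive, and any such connected cluster has weight at least $d_{AC}$, so $\partial_\beta^m H(A:C|\mathcal{E}[B])|_{\beta=0}=0$ for all $m<d_{AC}$.

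It then remains to make this quantitative. I would establish a per-order bound of the form $\tfrac{1}{m!}\norm{\partial_\beta^m H(A:C|\mathcal{E}[B])}\le \min(|\partial A|,|\partial C|)\,c^m$ by anchoring each surviving cluster at a boundary term of $A$ (or of $C$), which supplies the prefactor, and then counting connected clusters of weight $m$ through a fixed term---at most $(e(\mathfrak{d}+1)(1+e(\mathfrak{d}-1)))^m$ by the standard degree-$\mathfrak{d}$ combinatorics, so $c=e(\mathfrak{d}+1)(1+e(\mathfrak{d}-1))$---each contributing bounded operator-norm weight because $|\lambda_a|\le 1$, the $P_\alpha$ have unit norm, and $\mathcal{E}_i$ is a contraction. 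Combining the geometric growth $c^m$ with the vanishing of all orders below $d_{AC}$, the series $\sum_{m\ge d_{AC}}\beta^m\,\tfrac{1}{m!}\norm{\partial_\beta^m H}$ is dominated by a geometric tail $\min(|\partial A|,|\partial C|)\sum_{m\ge d_{AC}}(\beta c)^m$, which converges whenever $\beta\le\beta_c=1/c$ and evaluates to the claimed $c'\,\min(|\partial A|,|\partial C|)\,e^{-d_{AC}/\xi}$ with $\xi=O(\beta/\beta_c)$.

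I expect the main obstacle to be the per-order norm control, specifically the rigorous construction of the cumulant expansion of $\log\sigma_R$ for the \emph{reduced}, channel-processed states. The difficulty is that the partial trace over $R^c$ does not commute with the logarithm, so $\log\sigma_R$ is not a truncation of $\log\mathcal{E}[\rho]$ and its cluster coefficients acquire boundary corrections near $\partial R$; one must show these boundary clusters are still exponentially suppressed and that the cancellation survives term by term. This is exactly where the commutation-preserving hypothesis is indispensable: it guarantees that all operators entering $\sigma_R$ share a common eigenbasis, reducing the logarithm to an essentially classical cumulant generating function for which the linked-cluster theorem and its convergence estimates apply uniformly in $R$. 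Without commutativity (general Hamiltonians) or without unitality (which is what keeps the $\beta=0$ anchor and the normalization clean), this step breaks down---mirroring the gap in Ref.~\cite{kuwahara2020clustering}---which is why we can only conjecture the result in those regimes.
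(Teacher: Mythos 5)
Your proposal follows essentially the same route as the paper's proof: bound the CMI by $\norm{H(A:C|\mathcal{E}[B])}$, cluster-expand in the couplings, cancel every cluster that fails to connect $A$ and $C$ pairwise between the four logarithms (using unitality and the trace-as-depolarizing-channel trick of Proposition \ref{prop:trace_channel}), and sum a geometric series of connected-cluster contributions starting at weight $d_{AC}$, with the commutation-preserving hypothesis entering exactly where you say it must, to make the reorganization of the matrix-logarithm expansion well defined. The one step you leave schematic---the per-order bound $\tfrac{1}{\mathbf{W}!}\norm{\mathcal{D}_{\mathbf{W}}\log(\mathcal{E}[\tilde{\rho}])}\le (2e(\mathfrak{d}+1)\beta)^{|\mathbf{W}|+1}$---is precisely where the paper imports the graph-coloring/spanning-tree combinatorics of Lemmas \ref{lem:graph_coloring} and \ref{lem:combinatorial_estimate} (the step whose naive version was the flaw in \cite{kuwahara2020clustering}), so your plan is correct but would need that estimate, rather than a generic appeal to the linked-cluster theorem, to be complete.
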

Note that we restrict the action of the channel to $B$ only. This is because applying local channels to $A$ and $C$ can only decrease the CMI due to the data processing inequality.

\subsection{Preliminaries}
We first introduce the definition of clusters. We consider a set of $n$ spins, each spin supporting a $d$ dimensional local Hilbert space. We consider a Hamiltonian $H = \sum_{a \in M} \lambda_a h_a$ acting on the $d^n$ dimensional Hilbert space, where $a$ labels the Hamiltonian term and $M$ denotes the collection of all Hamiltonian terms. $\lambda_a$ are coefficients satisfying $|\lambda_a| \le 1$ and $h_a$ are Hermitian matrices with bounded operator norms $\norm{h_a} \le 1$ (we will always use the operator norm in this manuscript, unless otherwise noted).

We now define a cluster. A cluster $\mathbf{W}$ is a multiset of Hamiltonian terms. Specifically, $\mathbf{W}$ is a set of tuples of the form $(a, \mu(a))$, where $a \in M$ labels the Hamiltonian term and $\mu(a)$ is a positive integer that labels its multiplicity. We define the weight of $\mathbf{W}$ as $|\mathbf{W}|=\sum_{a} \mu(a)$. We also define $\mathbf{W}!=\prod_a \mu(a)!$. We visualize an example of a cluster in Fig. \ref{fig:cluster}(a). In this example, the cluster has one element supported on qubit 1, 2 and one element supported on qubit 2, 3, 4, 5.  It also has two elements supported on qubit 4, 6, thus a multiplicity of two for this element.

\begin{figure*}
\includegraphics[width=\linewidth]{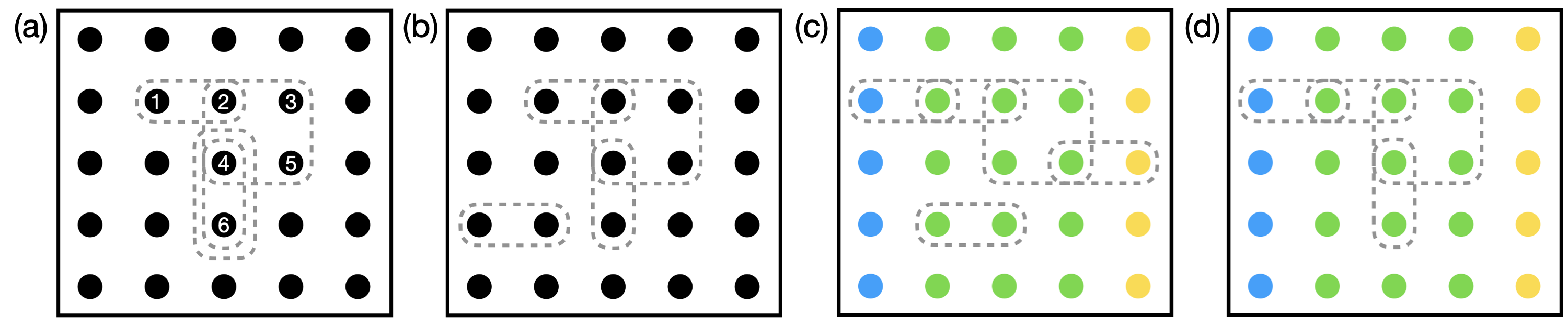}
\caption{\label{fig:cluster}(a) An example of a cluster with one element supported on qubit 1, 2, one element supported on qubit 2, 3, 4, 5, and two elements supported on qubit 4, 6. (b) An example of a disconnected cluster. (c) An example of a cluster connecting $\textcolor{Cerulean}{A}$ and $\textcolor{Goldenrod}{C}$ but is disconnected. (d) An example of a connected cluster but does not connect $\textcolor{Cerulean}{A}$ and $\textcolor{Goldenrod}{C}$.}
\end{figure*}

For any matrix $Q$ that depends on $\lambda_a$, we can write its multivariate Taylor expansion in terms of $\lambda_a$. 
\begin{align}
    Q = \sum_{\mathbf{W}} \frac{\lambda^{\mathbf{W}}}{\mathbf{W}!} \mathcal{D}_{\mathbf{W}} Q
\end{align}
Where we define $\lambda^{\mathbf{W}} = \prod_{a \in \mathbf{W}} \lambda_{a}^{\mu(a)}$ and $\mathcal{D}_{\mathbf{W}}=\prod_{a \in \mathbf{W}}(\frac{\partial}{\partial \lambda_a})^{\mu(a)} \Bigr|_{\lambda=0}$.

\subsection{Generalized Cluster Expansion}
In this section, we show that generalized cluster expansion \cite{kuwahara2020clustering} works even under local unital channels. Let the Hamiltonian be $H = \sum_a \lambda_a h_a$, where for all $a$ we have $|\lambda_a| \le 1$ and $\norm{h_a} \le 1$. We will consider channels of the form: $\mathcal{E}=\mathcal{E}_1 \otimes \mathcal{E}_2 \otimes \ldots \otimes \mathcal{E}_n$, where $\mathcal{E}_i$ are local channels acting on site $i$.

\begin{lemma}
Let $\mathcal{E}=\mathcal{E}_1 \otimes \mathcal{E}_2 \otimes \ldots \otimes \mathcal{E}_n$ be a product of local channels, then the CMI of $\mathcal{E}[\rho]$ can be written as
\begin{align}
    I(A:C|\mathcal{E}[B]) &= -\text{Tr}\,\left[\mathcal{E}[\rho_{ABC} ] H(A:C|\mathcal{E}[B])\right], \\
    H(A:C|\mathcal{E}[B]) &=  \log(\mathcal{E}[\tilde{\rho}_{AB}]) + \log(\mathcal{E}[\tilde{\rho}_{BC}]) - \log(\mathcal{E}[\tilde{\rho}_{B}])  - \log(\mathcal{E}[\tilde{\rho}_{ABC}])
\end{align}
Where $\tilde{\rho_L} = \text{Tr}_{L_c} (e^{-\beta H}) \otimes I_{L_c}$ denotes the embedding of the unnormalized reduced density matrix into the full Hilbert space. $L_c$ denotes the complement of region $L$ and $I_{L_c}$ is the identity operator acting on $L_c$.
\end{lemma}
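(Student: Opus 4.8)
The plan is to unfold the conditional mutual information of the decohered state into an alternating sum of von Neumann entropies, and then to recognize each entropy as a trace of $\mathcal{E}[\rho]$ against the logarithm of one of the four embedded operators $\log(\mathcal{E}[\tilde{\rho}_L])$. Writing $S(\sigma) = -\text{Tr}[\sigma \log \sigma]$ and $\sigma_L := \text{Tr}_{L_c}(\mathcal{E}[\rho])$, the defining identity is
\begin{equation}
I_{\mathcal{E}[\rho]}(A:C|B) = S(\sigma_{AB}) + S(\sigma_{BC}) - S(\sigma_B) - S(\sigma_{ABC}),
\end{equation}
so the whole content of the lemma is to show that the stated operator $H(A:C|\mathcal{E}[B])$ reproduces exactly this combination once traced against $\mathcal{E}[\rho]$.

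First I would record the one structural fact that drives everything: because each $\mathcal{E}_i$ is trace preserving, the partial trace over any region commutes with the product channel, so $\sigma_L = \mathcal{E}_L[\rho_L]$ with $\mathcal{E}_L = \otimes_{i \in L}\mathcal{E}_i$ and $\rho_L = \text{Tr}_{L_c}\rho$. Since $\tilde{\rho}_L = \text{Tr}_{L_c}(e^{-\beta H}) \otimes I_{L_c} = Z\,\rho_L \otimes I_{L_c}$ with $Z = \text{Tr}(e^{-\beta H})$, applying the factorized channel gives
\begin{equation}
\mathcal{E}[\tilde{\rho}_L] = Z\,\mathcal{E}_L[\rho_L] \otimes \mathcal{E}_{L_c}[I_{L_c}] = Z\,\sigma_L \otimes \tau_{L_c}, \qquad \tau_{L_c} := \mathcal{E}_{L_c}[I_{L_c}].
\end{equation}
Taking the logarithm of this tensor product of (full-rank) positive operators splits it cleanly,
\begin{equation}
\log(\mathcal{E}[\tilde{\rho}_L]) = (\log Z)\, I + \log(\sigma_L) \otimes I_{L_c} + I_L \otimes \log(\tau_{L_c}).
\end{equation}
When the channels are unital, as required for the decay bound in Theorem~\ref{thm:decay_cmi}, one simply has $\tau_{L_c} = I_{L_c}$ and the last term drops out; but the identity below does not actually need this.

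The delicate part is the bookkeeping that removes the normalization and complement contributions. Summing the four embedded logs with signs $(+,+,-,-)$ for $L = AB, BC, B, ABC$, the $\log Z$ pieces cancel because the signs sum to zero, and the $I_L \otimes \log(\tau_{L_c})$ pieces cancel by an inclusion--exclusion count: a site $i$ contributes $\log(\mathcal{E}_i[I_i])$ only through those $L$ whose complement contains $i$, and checking the four cases ($i \in A$, $i \in B$, $i \in C$, or $i$ in the environment outside $ABC$) the signed multiplicity of $i$ among the complements of $AB, BC, B, ABC$ is exactly zero in each case. Hence
\begin{equation}
H(A:C|\mathcal{E}[B]) = \log(\sigma_{AB}) \otimes I_{(AB)_c} + \log(\sigma_{BC}) \otimes I_{(BC)_c} - \log(\sigma_B) \otimes I_{B_c} - \log(\sigma_{ABC}) \otimes I_{(ABC)_c}.
\end{equation}
Finally, tracing against $\mathcal{E}[\rho]$ and using $\text{Tr}[\mathcal{E}[\rho]\,(\log\sigma_L \otimes I_{L_c})] = \text{Tr}_L[\sigma_L \log \sigma_L] = -S(\sigma_L)$ term by term recovers $S(\sigma_{AB}) + S(\sigma_{BC}) - S(\sigma_B) - S(\sigma_{ABC})$, which is the CMI. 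I expect the only genuine subtlety to be ensuring the logarithms are well defined, i.e.\ that the relevant reduced operators are full rank; this holds because $e^{-\beta H} \succ 0$ at finite $\beta$, and for the complement factors unitality gives $\tau_{L_c} = I_{L_c} \succ 0$.
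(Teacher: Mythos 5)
Your proof is correct and fills in exactly the computation the paper leaves implicit: the lemma is stated there without proof, treated as bookkeeping, and your route --- factorizing $\mathcal{E}[\tilde{\rho}_L] = Z\,\sigma_L \otimes \mathcal{E}_{L_c}[I_{L_c}]$, splitting the logarithm of the tensor product, cancelling the $\log Z$ terms (signs $+,+,-,-$ sum to zero) and the complement factors by the signed site-counting over the four cases $i \in A, B, C$, or outside $ABC$, then tracing against $\mathcal{E}[\rho]$ to recover the entropy combination --- is the intended argument, consistent with the paper's own remark that normalization cancels inside the logarithm. One qualification: your parenthetical claim that the identity does not need unitality should be softened, since for a non-unital channel $\mathcal{E}_{L_c}[I_{L_c}]$ can be singular (e.g.\ a replacer channel gives a rank-one $\mathcal{E}_i[I_i]$), so the four logarithms are only guaranteed well defined under the unitality hypothesis actually invoked in Theorem \ref{thm:decay_cmi} --- your closing sentence already notes this, and it resolves the issue.
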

We also define $\tilde{\rho}=e^{- \beta H}$ to be the unnormalized density matrix. An important observation is that tracing out is equivalent to the full depolarizing channel.
\begin{proposition}
\label{prop:trace_channel}
The partially traced state $\tilde{\rho}_L$ can be equivalently written as
    \begin{equation}
        \tilde{\rho}_L = \mathcal{E}^{Tr}_{L_c}[\tilde{\rho}]
    \end{equation}
    Where $\mathcal{E}^{Tr}_{L_c} = \otimes_{i \in L_c} \mathcal{E}^{Tr}_{i}$ is the product of all single-qubit complete depolarization channels in $L_c$. Further, Consider any set of commutation-preserving channels $\{\mathcal{E}_i\}$, where each $i$ is in region $L$. Suppose we include another set of depolarization channels $\{\mathcal{E}_{j}^{Tr}\}$ in $L_c$, then the set $\{\mathcal{E}_i\} \cup \{\mathcal{E}_{j}^{Tr}\}$ is also commutation-preserving.
\end{proposition}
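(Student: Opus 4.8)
The plan is to prove the two assertions of Proposition~\ref{prop:trace_channel} separately. The partial-trace identity I would treat as essentially a consequence of the definition of the complete depolarizing channel, while reserving the real work for the statement that adjoining depolarizing channels preserves the commutation-preserving property.

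\emph{Part 1 (partial trace as depolarization).} By definition the single-site complete depolarizing channel sends every input to the maximally mixed state, which with the normalization implicit in the definition of $\tilde\rho_L$ reads $\mathcal{E}^{Tr}_i[X] = \mathrm{Tr}_i[X]\otimes I_i$; the overall scalar is immaterial for the CMI since the prefactors cancel in the combination $AB+BC-B-ABC$. First I would record the twirl representation $\mathcal{E}^{Tr}_i[X] \propto \sum_\alpha P_\alpha^{(i)} X (P_\alpha^{(i)})^\dagger$ over the full operator basis $\{P_\alpha\}$ on site $i$, which makes manifest that $\mathcal{E}^{Tr}_i$ annihilates every traceless single-site operator and fixes the identity. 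Since the channels $\{\mathcal{E}^{Tr}_j\}_{j\in L_c}$ act on mutually disjoint sites they commute as superoperators, and their composition is the simultaneous partial trace over all of $L_c$ re-embedded with identity, i.e. $\mathrm{Tr}_{L_c}[\tilde\rho]\otimes I_{L_c} = \tilde\rho_L$. This gives the first claim.

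\emph{Part 2 (reduction lemma).} The crux is to understand how $\mathcal{E}^{Tr}_j$ acts on the operators $O_m = \prod_a h_a^{\mu_{a,m}}$ appearing in Definition~\ref{def:comm_preserv}. Because each $h_a$ is a tensor product of elements of $\{P_\alpha\}$ and $\{P_\alpha\}$ closes under multiplication up to phase, the product $O_m$ is again, up to a global phase, a tensor product of single-site basis elements, acting on each site $j$ as some $P_{\beta_j}$. Using that $P_{\beta_j}$ is traceless whenever it is not proportional to the identity, I would establish the dichotomy
\begin{equation}
\mathcal{E}^{Tr}_j[O_m] =
\begin{cases}
O_m, & O_m \text{ proportional to the identity on site } j,\\
0, & \text{otherwise.}
\end{cases}
\end{equation}
In other words, complete depolarization behaves as a projector that either fixes $O_m$ or annihilates it, according to whether $O_m$ is trivial on the depolarized site.

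\emph{Part 2 (conclusion via case analysis).} Write $\tilde{\mathcal E}_i = \mathcal E_i$ for $i\in L$ and $\tilde{\mathcal E}_i = \mathcal E^{Tr}_i$ for $i\in L_c$, and fix arbitrary $O_m,O_n$ and any subset $S$ of all sites with $S_L = S\cap L$, $S_{L_c}=S\cap L_c$. As the depolarizing factors act on disjoint sites, $(\otimes_{i\in S}\tilde{\mathcal E}_i)[O_m] = (\otimes_{i\in S_L}\mathcal{E}_i)\big[(\otimes_{j\in S_{L_c}}\mathcal{E}^{Tr}_j)[O_m]\big]$, and by the dichotomy the inner factor returns $0$ if $O_m$ has any nontrivial support on $S_{L_c}$ and returns $O_m$ otherwise. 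If either image is annihilated the two are $0$ and commute trivially; otherwise both reduce to $(\otimes_{i\in S_L}\mathcal{E}_i)[O_m]$ and $(\otimes_{i\in S_L}\mathcal{E}_i)[O_n]$, and since $S_L\subseteq L$ is a subset of the sites on which the original commutation-preserving set $\{\mathcal{E}_i\}$ is defined, applying the hypothesis with the subset $S_L$ shows these commute. This exhausts all cases, so $\{\mathcal{E}_i\}\cup\{\mathcal{E}^{Tr}_j\}$ is commutation-preserving.

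The routine portions are Part 1 and the algebra behind the dichotomy; the conceptual heart is the Part 2 reduction, namely recognizing that complete depolarization acts as an annihilate-or-fix projector on basis-product operators, so the question collapses onto the subset $S_L\subseteq L$ where the original hypothesis applies verbatim. The one genuinely subtle point I would be careful to verify is that $O_m$ retains the tensor-product-of-basis-elements form after multiplying the $h_a$ (which relies on $\{P_\alpha\}$ being a projective group representation), since this is precisely what promotes ``nontrivial support on a depolarized site'' into exact annihilation rather than mere suppression.
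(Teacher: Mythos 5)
Your proposal is correct and follows essentially the same route as the paper's proof: both rest on the observation that, since $\{P_\alpha\}$ is a projective group representation, each $O_m$ is a tensor product of basis elements, so $\mathcal{E}^{Tr}_j$ either fixes $O_m$ or annihilates it, reducing the commutation check for $\{\mathcal{E}_i\}\cup\{\mathcal{E}^{Tr}_j\}$ to the hypothesis on $\{\mathcal{E}_i\}$. Your explicit case analysis over $S_L$ and $S_{L_c}$ merely spells out what the paper leaves implicit.
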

\begin{proof}
The relation between partial trace and depolarization is trivial, so it remains to show that $\{\mathcal{E}_i\} \cup \{\mathcal{E}_{j}^{Tr}\}$ is commutation-preserving. 

Consider $O_m$ in Definition \ref{def:comm_preserv}. Since each $h_a$ is a product of $\{P_\alpha\}$ which forms a projective representation of a group, $O_m$ is also a product of $\{P_\alpha\}$. Next, we use the fact that $\mathcal{E}_{j}^{Tr}[P_\alpha]=0, \forall P_\alpha \neq I$ because $P_\alpha$ are traceless. Therefore, $\mathcal{E}_{j}^{Tr}[O_m]$ is either zero or $O_m$. Thus, the commutation preservation of $\{\mathcal{E}_i\} \cup \{\mathcal{E}_{j}^{Tr}\}$ follows from the commutation preservation of $\{\mathcal{E}_i\}$.
\end{proof}
As a direct corollary, Let  $\mathcal{E}=\mathcal{E}_1 \otimes \mathcal{E}_2 \otimes \ldots \otimes \mathcal{E}_n$ be a product of local channels, then
\begin{align}
    \mathcal{E}[\tilde{\rho}_L]  = (\mathcal{E}^{Tr}_{L_c} \circ \mathcal{E})[\tilde{\rho}]
\end{align}
Where $\mathcal{E}^{Tr}_{L_c} \circ \mathcal{E}$ is also a product of local channels that are all depolarization channels in $L_c$. Therefore, we use the perspective of applying depolarizing channels rather than tracing out qubits throughout the discussion.

Note that the definition of $H(A:C|\mathcal{E}[B])$ here differs slightly from the definition in the main text (Eq. (\ref{eq:mcmi_main_text})), where the normalized density matrix is used. Nevertheless, the two definitions are equivalent as any choice of normalization will cancel out after taking the logarithm. As stated in the main text, the operator norm $\norm{H(A:C|\mathcal{E}[B])}$ upper bounds CMI.
\begin{proposition}
\label{prop:cmi_norm_bound}
    CMI is upper-bounded by the operator norm of $H(A:C|\mathcal{E}[B])$.
    \begin{equation}
        I(A:C|\mathcal{E}[B])  \le \norm{H(A:C|\mathcal{E}[B])} 
    \end{equation}
\end{proposition}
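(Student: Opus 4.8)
The plan is to reduce the claim to a standard matrix-norm inequality, starting from the identity established in the preceding lemma,
\begin{equation}
    I(A:C|\mathcal{E}[B]) = -\text{Tr}\,\left[\mathcal{E}[\rho_{ABC}]\,H(A:C|\mathcal{E}[B])\right].
\end{equation}
The key structural observation I would emphasize is that $\sigma := \mathcal{E}[\rho_{ABC}]$ is itself a genuine density matrix: $\rho_{ABC}$ is a normalized reduced Gibbs state, and each $\mathcal{E}_i$ is a completely positive, trace-preserving map, so $\mathcal{E}=\otimes_i \mathcal{E}_i$ sends density matrices to density matrices. Hence $\sigma \succeq 0$ and $\text{Tr}\,\sigma = 1$, equivalently $\norm{\sigma}_1 = 1$ in the trace norm. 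Meanwhile $M := H(A:C|\mathcal{E}[B])$ is Hermitian, being a real linear combination of logarithms of positive operators.

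First I would use the elementary bound $I(A:C|\mathcal{E}[B]) = -\text{Tr}[\sigma M] \le \abs{\text{Tr}[\sigma M]}$. Then I would apply H\"older's inequality for Schatten norms, $\abs{\text{Tr}[\sigma M]} \le \norm{\sigma}_1 \norm{M}_\infty$, and substitute $\norm{\sigma}_1 = 1$ together with the fact (noted earlier in this appendix) that for a Hermitian matrix the Schatten-$\infty$ norm equals the operator norm $\norm{M}$. This chain gives $I(A:C|\mathcal{E}[B]) \le \norm{H(A:C|\mathcal{E}[B])}$ directly.

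A more self-contained route, which I would include for transparency, is to diagonalize $M = \sum_k e_k \ketbra{k}{k}$ and write $\text{Tr}[\sigma M] = \sum_k e_k \langle k|\sigma|k\rangle$. Because $\sigma$ is a density matrix, the weights $\langle k|\sigma|k\rangle$ are nonnegative and sum to one, so $\text{Tr}[\sigma M]$ is a convex combination of the eigenvalues $e_k$ and therefore lies between $\min_k e_k$ and $\max_k e_k$; its magnitude is at most $\max_k \abs{e_k} = \norm{M}$. I do not anticipate any serious obstacle, since the content is a one-line H\"older or variational estimate. The only step warranting a moment's care is confirming that $\sigma$ is normalized, which hinges solely on trace preservation of the channels $\mathcal{E}_i$ and is consistent with the unitality already assumed. (The positivity of CMI guaranteed by strong subadditivity is reassuring but is not actually needed here, since $x \le \abs{x}$ holds for any real $x$.)
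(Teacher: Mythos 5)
Your proposal is correct and matches the paper's (implicit) reasoning exactly: the paper states this proposition without a written proof, the intended one-liner being precisely your H\"older/duality estimate $I(A:C|\mathcal{E}[B]) = -\text{Tr}[\sigma M] \le \norm{\sigma}_1\,\norm{M} = \norm{M}$ with $\sigma = \mathcal{E}[\rho_{ABC}]$ a density matrix, which your eigenbasis convex-combination argument restates variationally. The only point worth a footnote is that the logarithms in $H(A:C|\mathcal{E}[B])$ are well defined because the Gibbs state is full rank and the channels are unital (so $\mathcal{E}[\rho] \ge \epsilon\,\mathcal{E}[I] = \epsilon I$), a fact your remark on unitality already gestures at.
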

We will use the operator norm throughout this paper, unless otherwise noted. We cluster expand $\mathcal{E}[\tilde{\rho}_L]$ and $\log(\mathcal{E}[\tilde{\rho}_L])$
\begin{align}
\mathcal{E}[\tilde{\rho}_L] &= \sum_{\mathbf{W}} \frac{\lambda_{\mathbf{W}}}{\mathbf{W}!}\mathcal{D}_\mathbf{W} \mathcal{E}[\tilde{\rho}_L]\\
\log(\mathcal{E}[\tilde{\rho}_L]) &= \sum_{\mathbf{W}} \frac{\lambda_{\mathbf{W}}}{\mathbf{W}!}\mathcal{D}_\mathbf{W} \log(\mathcal{E}[\tilde{\rho}_L])
\end{align}

\subsection{Connected Clusters}
The cluster expansions of $\mathcal{E}[\tilde{\rho}_L]$ and $\log(\mathcal{E}[\tilde{\rho}_L])$ admit significant simplifications by considering the connectedness of clusters. 

We take the set $M$ and construct a simple, unidirectional graph termed the dual interaction graph $\mathfrak{B}$ as follows. $\mathfrak{B}$ contains $|M|$ vertices corresponding to elements of $M$. Two vertices $a$ and $b$ of $\mathfrak{B}$ are connected if and only if $h_a$ and $h_b$ have overlapping support. Let the degree of $\mathfrak{B}$ be $\mathfrak{d}$. In other words, $\mathfrak{d}$ is the maximal number of Hamiltonian terms any site supports.

We say that a cluster $\mathbf{W}$ is \emph{connected} if the corresponding dual interaction graph is connected. In other words, $\mathbf{W}$ cannot be decomposed into a union of two clusters $\mathbf{W}_1 \cup \mathbf{W}_2$ such that $\mathbf{W}_1$ and $\mathbf{W}_2$ have disjoint support. See Fig. \ref{fig:cluster}(b) for an example of a disconnected cluster. We define $\mathcal{G}$ as the set of all connected clusters and $\mathcal{G}_m$ as the set of connected clusters with weight $m$.

When $\mathbf{W}$ is disconnected, $\mathcal{D}_\mathbf{W} \mathcal{E}[\tilde{\rho}]$ and $\mathcal{D}_\mathbf{W} \log(\mathcal{E}[\tilde{\rho}])$ admit significant simplifications.
\begin{lemma}
\label{lem:conn_cluster}
Let $\mathcal{E} = \mathcal{E}_1 \otimes \mathcal{E}_2 \otimes \ldots \otimes \mathcal{E}_n$ be a product of local channels. Suppose $\mathbf{W}$ can be decomposed into two disconnected clusters $\mathbf{W}_1$ and $\mathbf{W}_2$. We decompose $\mathcal{E} = \mathcal{E}_{\mathbf{W}_1} \otimes \mathcal{E}_{\mathbf{W}_2} \otimes \mathcal{E}_{\bar{\mathbf{W}}}$, where $\mathcal{E}_{\mathbf{W}_1}$ includes channels acting on the support of $\mathbf{W}_1$, $\mathcal{E}_{\mathbf{W}_2}$ includes channels acting on the support of $\mathbf{W}_2$, and $\mathcal{E}_{\bar{\mathbf{W}}}$ includes channels acting on the remaining system. Let $\mathcal{E}^{({\mathbf{W}_1})} = \mathcal{E}_{\mathbf{W}_1} \otimes I_{\mathbf{W_2 \cup \bar{W}}}$ be the embedding of $\mathcal{E}_{\mathbf{W}_1}$ into the global Hilbert space by tensoring with the identity channel on the remaining system. $\mathcal{E}^{(\mathbf{W_2})}$ and $\mathcal{E}^{(\bar{\mathbf{W}})}$ are similarly defined.

Under the above notations, we have
\begin{align}
    \mathcal{D}_\mathbf{W} \mathcal{E}[\tilde{\rho}]= \mathcal{D}_{\mathbf{W}_1}  \mathcal{E}^{({\mathbf{W}_1})}[\tilde{\rho}] \times \mathcal{D}_\mathbf{W_2} \mathcal{E}^{(\mathbf{W_2})}[\tilde{\rho}] \times \mathcal{E}^{(\bar{\mathbf{W}})}[I]
\end{align}
Where $\times$ denotes matrix multiplication. Furthermore, when $\mathcal{E}$ is a product of unital channels, $\mathcal{D}_\mathbf{W} \mathcal{E}[\tilde{\rho}]=  \mathcal{D}_{\mathbf{W}_1}  \mathcal{E}^{({\mathbf{W}_1})}[\tilde{\rho}] \times \mathcal{D}_\mathbf{W_2} \mathcal{E}^{(\mathbf{W_2})}[\tilde{\rho}]$.
\end{lemma}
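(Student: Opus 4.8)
The plan is to reduce everything to the elementary structure of $\mathcal{D}_\mathbf{W}\tilde{\rho}$ and then exploit the tensor-product factorization of both the operator and the channel. First I would use the fact that all $h_a$ commute to write $\tilde{\rho} = e^{-\beta H} = \prod_a e^{-\beta \lambda_a h_a}$, so that the multivariate derivative at $\lambda = 0$ collapses to a single product of local operators:
\begin{equation}
\mathcal{D}_\mathbf{W}\tilde{\rho} = (-\beta)^{|\mathbf{W}|}\prod_{a\in\mathbf{W}} h_a^{\mu(a)}.
\end{equation}
This is the only place commutativity enters, and it removes any time-ordering subtlety that would otherwise arise from differentiating the matrix exponential.

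Next I would invoke the hypothesis that $\mathbf{W} = \mathbf{W}_1\cup\mathbf{W}_2$ is disconnected. Disconnectedness in the dual interaction graph means the Hamiltonian terms appearing in $\mathbf{W}_1$ have support disjoint from those in $\mathbf{W}_2$; writing $S_1, S_2$ for these supports and $S_0$ for the remaining sites, the operator above factorizes as $(-\beta)^{|\mathbf{W}|}\,O_1\otimes O_2\otimes I_{S_0}$ with $O_j = \prod_{a\in\mathbf{W}_j}h_a^{\mu(a)}$ supported on $S_j$. Because $\mathcal{E}$ is itself a tensor product of single-site channels and $\mathcal{D}_\mathbf{W}$ is a fixed linear operation that is $\lambda$-independent and therefore commutes with $\mathcal{E}$, I can push the channel through the derivative and split it along the same partition, $\mathcal{E} = \mathcal{E}_{\mathbf{W}_1}\otimes\mathcal{E}_{\mathbf{W}_2}\otimes\mathcal{E}_{\bar{\mathbf{W}}}$, to obtain
\begin{equation}
\mathcal{D}_\mathbf{W}\mathcal{E}[\tilde{\rho}] = (-\beta)^{|\mathbf{W}|}\,\mathcal{E}_{\mathbf{W}_1}[O_1]\otimes\mathcal{E}_{\mathbf{W}_2}[O_2]\otimes\mathcal{E}_{\bar{\mathbf{W}}}[I_{S_0}].
\end{equation}

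The final step is to recognize each tensor factor as the corresponding embedded derivative. Since $\mathcal{D}_{\mathbf{W}_j}\tilde{\rho} = (-\beta)^{|\mathbf{W}_j|}O_j$ is supported on $S_j$ and $\mathcal{E}^{(\mathbf{W}_j)}$ acts as the identity off $S_j$, we have $\mathcal{D}_{\mathbf{W}_j}\mathcal{E}^{(\mathbf{W}_j)}[\tilde{\rho}] = (-\beta)^{|\mathbf{W}_j|}\mathcal{E}_{\mathbf{W}_j}[O_j]$, and likewise $\mathcal{E}^{(\bar{\mathbf{W}})}[I] = \mathcal{E}_{\bar{\mathbf{W}}}[I_{S_0}]$. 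Using $|\mathbf{W}| = |\mathbf{W}_1| + |\mathbf{W}_2|$ together with the fact that three operators supported on mutually disjoint regions multiply exactly as their tensor product, the displayed expression equals $\mathcal{D}_{\mathbf{W}_1}\mathcal{E}^{(\mathbf{W}_1)}[\tilde{\rho}]\times\mathcal{D}_{\mathbf{W}_2}\mathcal{E}^{(\mathbf{W}_2)}[\tilde{\rho}]\times\mathcal{E}^{(\bar{\mathbf{W}})}[I]$, which is the claimed identity. For the unital refinement I would simply note that unitality forces $\mathcal{E}_{\bar{\mathbf{W}}}[I_{S_0}] = I_{S_0}$, so the third factor becomes trivial and drops out.

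I do not expect a genuine obstacle here; the content is essentially bookkeeping. The one point that needs care is the passage from \emph{disconnected cluster} to \emph{operators with disjoint support}, which relies on the definition of the dual interaction graph rather than on the clusters merely being distinct multisets, together with the observation that the matrix product of operators on disjoint supports coincides with their tensor product (so the $\times$ in the statement is genuinely the global matrix product). The entire distinction between the general and the unital cases is captured by whether $\mathcal{E}_{\bar{\mathbf{W}}}[I_{S_0}]$ is trivial.
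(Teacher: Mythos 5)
Your proof is correct, and it follows the same overall skeleton as the paper's (split $\mathcal{E} = \mathcal{E}_{\mathbf{W}_1} \otimes \mathcal{E}_{\mathbf{W}_2} \otimes \mathcal{E}_{\bar{\mathbf{W}}}$, factorize the state across the disjoint supports, observe that each factor depends on only one cluster's parameters, and use unitality to trivialize the third factor). The one genuine difference is how you handle the derivative: you invoke mutual commutativity of all $h_a$ to write $\tilde{\rho} = \prod_a e^{-\beta\lambda_a h_a}$ and collapse $\mathcal{D}_{\mathbf{W}}\tilde{\rho}$ into the explicit monomial $(-\beta)^{|\mathbf{W}|}\prod_{a \in \mathbf{W}} h_a^{\mu(a)}$, whereas the paper never evaluates the derivative explicitly---it first sets the coefficients outside $\mathbf{W}_1 \cup \mathbf{W}_2$ to zero, factorizes the \emph{exponentials} $\rho^{(\mathbf{W}_1)} \times \rho^{(\mathbf{W}_2)}$ (which requires only that the two clusters have disjoint supports, not that terms within a cluster commute), and then differentiates the factorized form. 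Your route is more explicit and arguably cleaner bookkeeping, and it is legitimate in the appendix's ambient setting where $H$ is commuting; but the paper's version proves the lemma for arbitrary (non-commuting) local Hamiltonians, which matters for the paper's broader narrative: the obstruction to the non-commuting case lives in the later reorganization of the logarithm expansion (the ordering issue addressed by the commutation-preserving hypothesis in the graph-coloring lemma), not in this factorization lemma. If you wanted to recover full generality with your approach, you would replace the monomial by the symmetrized sum $\frac{(-\beta)^{m}}{m!}\sum_{\sigma \in S_m} h_{\sigma(1)} \cdots h_{\sigma(m)}$, which still factorizes across disjoint supports since cross-cluster terms commute. Two smaller points you got right and that deserve the emphasis you gave them: $\mathcal{D}_{\mathbf{W}}$ is a fixed linear operation in the $\lambda$'s and so commutes with the $\lambda$-independent superoperator $\mathcal{E}$, and ``disconnected'' in the lemma means disjoint supports by the paper's definition of the dual interaction graph, which is exactly what licenses reading the matrix product $\times$ as a tensor product.
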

\begin{proof}
Since $\mathbf{W}$ can be decomposed into two disconnected clusters $\mathbf{W}_1$ and $\mathbf{W}_2$. We have
    \begin{align}
        \mathcal{D}_\mathbf{W} \mathcal{E}[\tilde{\rho}] = \prod_{i \in  \mathbf{W}_1} (\frac{\partial }{\partial \lambda_i})^{\mu_i}  \prod_{j \in  \mathbf{W}_2} (\frac{\partial }{\partial \lambda_j})^{\mu_j}  \mathcal{E}[\tilde{\rho}]\Bigr|_{\lambda=0}
    \end{align}
 We first set the coefficients $\lambda$ that are not in $\mathbf{W}_1$ or $\mathbf{W}_2$ to zero. Because $\mathbf{W}_1$ and $\mathbf{W}_2$ are disconnected, the resulting density matrix factorizes.
 \begin{align}
        \mathcal{D}_\mathbf{W} \mathcal{E}[\tilde{\rho}] &= \prod_{i \in  \mathbf{W}_1} (\frac{\partial }{\partial \lambda_i})^{\mu_i}  \prod_{j \in  \mathbf{W}_2} (\frac{\partial }{\partial \lambda_j})^{\mu_j}\mathcal{E}\left[\rho^{(\mathbf{W}_1)}\times \rho^{(\mathbf{W}_2)}\right] \Bigr|_{\lambda_i=\lambda_j=0} \\
        \rho^{(\mathbf{W}_1)} &= e^{-\beta\sum_{a \in \mathbf{W}_1} \lambda_a h_a}  \label{eq:quantum_w1_def}  \\
        \rho^{(\mathbf{W}_2)} &= e^{-\beta\sum_{a \in \mathbf{W}_2} \lambda_a h_a}  \label{eq:quantum_w2_def} 
    \end{align}
    Where $\rho^{(\mathbf{W}_1)}$ and $\rho^{(\mathbf{W}_2)}$ denote the density matrix after setting all coefficients not in $\mathbf{W}_1$ or $\mathbf{W}_2$ to zero. Under the factorization $\mathcal{E} = \mathcal{E}_{\mathbf{W}_1} \otimes \mathcal{E}_{\mathbf{W}_2} \otimes \mathcal{E}_{\bar{\mathbf{W}}}$, we have
    \begin{equation}
  \mathcal{E}\left[\rho^{(\mathbf{W}_1)}\times \rho^{(\mathbf{W}_2)}\right] =  \mathcal{E}^{\mathbf{W}_1}\left[\rho^{(\mathbf{W}_1)}\right] \times \mathcal{E}^{\mathbf{W}_2}\left[\rho^{(\mathbf{W}_2)}\right] \times \mathcal{E}^{\bar{\mathbf{W}}}\left[I\right] 
    \end{equation}
And one can see that only $\rho_{\mathbf{W}_1}$ depends on $\mathbf{W}_1$ and only $\rho_{\mathbf{W}_2}$ depends on $\mathbf{W}_2$. Therefore,
\begin{align}
    \mathcal{D}_\mathbf{W} \mathcal{E}[\tilde{\rho}] &= \mathcal{D}_{\mathbf{W}_1}\mathcal{E}^{\mathbf{W}_1}\left[\rho^{(\mathbf{W}_1)}\right] \times \mathcal{D}_{\mathbf{W}_2}\mathcal{E}^{\mathbf{W}_2}\left[\rho^{(\mathbf{W}_2)}\right] \times \mathcal{D}_{\bar{\mathbf{W}}}\mathcal{E}^{\bar{\mathbf{W}}}\left[I\right] \\
    &= \mathcal{D}_{\mathbf{W}_1}  \mathcal{E}^{({\mathbf{W}_1})}[\tilde{\rho}] \times \mathcal{D}_\mathbf{W_2} \mathcal{E}^{(\mathbf{W_2})}[\tilde{\rho}] \times \mathcal{E}^{(\bar{\mathbf{W}})}[I]
\end{align}
Lastly, when $\mathcal{E}$ is a product of unital channels, $\mathcal{E}^{(\bar{\mathbf{W}})}[I]=I$, and the result trivially follows.
\end{proof}

\begin{lemma}
\label{lem:conn_log_cluster}
Let $\mathcal{E} = \mathcal{E}_1 \otimes \mathcal{E}_2 \otimes \ldots \otimes \mathcal{E}_n$ be a product of local channels. When $\mathbf{W}$ is disconnected, $\mathcal{D}_\mathbf{W} \log(\mathcal{E}[\rho])=0$
\end{lemma}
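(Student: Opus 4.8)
The plan is to combine the factorization established inside the proof of Lemma~\ref{lem:conn_cluster} with the additivity of the matrix logarithm across disjoint tensor factors. Because $\mathbf{W}$ is disconnected I decompose $\mathbf{W} = \mathbf{W}_1 \cup \mathbf{W}_2$ with $\mathbf{W}_1$ and $\mathbf{W}_2$ both nonempty and supported on disjoint regions $R_1$ and $R_2$. The derivative operator then factorizes as
\begin{equation}
\mathcal{D}_\mathbf{W} = \Big(\prod_{a \in \mathbf{W}_1}\partial_{\lambda_a}^{\mu(a)}\Big)\Big(\prod_{b \in \mathbf{W}_2}\partial_{\lambda_b}^{\mu(b)}\Big)\Big|_{\lambda=0},
\end{equation}
and since both $\mathbf{W}_1$ and $\mathbf{W}_2$ are nonempty it carries at least one derivative in a coefficient of $\mathbf{W}_1$ and at least one in a coefficient of $\mathbf{W}_2$. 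The whole point of the argument will be that $\log(\mathcal{E}[\tilde{\rho}])$ splits into a sum in which no single summand depends on both coefficient groups, so the mixed derivative must vanish.

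First I would reduce to the coefficients in $\mathbf{W}$ alone. Since $\mathcal{D}_\mathbf{W}$ differentiates only the variables $\{\lambda_a : a \in \mathbf{W}\}$ and then sets every coefficient to zero, I may set $\lambda_c = 0$ for all $c \notin \mathbf{W}$ before differentiating, as setting an undifferentiated variable to zero commutes with the remaining partial derivatives. After this substitution the computation in the proof of Lemma~\ref{lem:conn_cluster} yields the tensor factorization
\begin{equation}
\mathcal{E}[\tilde{\rho}] = \mathcal{E}^{\mathbf{W}_1}\big[\rho^{(\mathbf{W}_1)}\big] \otimes \mathcal{E}^{\mathbf{W}_2}\big[\rho^{(\mathbf{W}_2)}\big] \otimes \mathcal{E}^{\bar{\mathbf{W}}}[I],
\end{equation}
where the three operators live on the disjoint regions $R_1$, $R_2$, and the remainder, and where the first factor depends only on $\{\lambda_a : a \in \mathbf{W}_1\}$, the second only on $\{\lambda_b : b \in \mathbf{W}_2\}$, and the third is independent of every $\lambda$.

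Next I would invoke that a logarithm of a tensor product of positive-definite operators splits additively. In a neighborhood of $\lambda = 0$ the operator $\mathcal{E}[\tilde{\rho}]$ is positive definite — this is where unitality is convenient, since $\mathcal{E}[\tilde{\rho}]\big|_{\lambda=0} = \mathcal{E}[I] = I$, so $\log$ is analytic there — giving
\begin{equation}
\log\big(\mathcal{E}[\tilde{\rho}]\big) = \log \mathcal{E}^{\mathbf{W}_1}\big[\rho^{(\mathbf{W}_1)}\big] + \log \mathcal{E}^{\mathbf{W}_2}\big[\rho^{(\mathbf{W}_2)}\big] + \log \mathcal{E}^{\bar{\mathbf{W}}}[I],
\end{equation}
with identity operators implicitly padding each summand. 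The mixed derivative $\mathcal{D}_\mathbf{W}$ annihilates the first summand (it contains a $\partial_{\lambda_b}$ with $b \in \mathbf{W}_2$, to which the first summand is constant), annihilates the second for the symmetric reason, and annihilates the third, which is constant in all $\lambda$. Hence $\mathcal{D}_\mathbf{W}\log(\mathcal{E}[\tilde{\rho}]) = 0$.

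The main obstacle is not the combinatorial cancellation, which is immediate once the additive split is in hand, but justifying that split rigorously. I would need to confirm that the restricted $\mathcal{E}[\tilde{\rho}]$ is a genuine tensor product across $R_1$, $R_2$, and the complement — so that the cross terms one might fear in $\log$ are truly absent — and that it remains invertible on a neighborhood of $\lambda = 0$ so that the logarithm is a convergent power series in the $\lambda_a$. The tensor-product structure follows directly from the disconnectedness of $\mathbf{W}$ and the locality of the channels, exactly as used in Lemma~\ref{lem:conn_cluster}, while unitality secures invertibility at $\lambda = 0$; this is the only place where the analytic hypotheses enter, and everything else is formal.
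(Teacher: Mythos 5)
Your proposal is correct and follows essentially the same route as the paper's own proof: decompose the disconnected $\mathbf{W}$ into $\mathbf{W}_1 \cup \mathbf{W}_2$, set all coefficients outside $\mathbf{W}$ to zero so that $\mathcal{E}[\tilde{\rho}]$ factorizes across disjoint supports as in Lemma \ref{lem:conn_cluster}, split the matrix logarithm additively over the tensor factors, and observe that the mixed cluster derivative annihilates each summand since none depends on both coefficient groups. Your added care about positivity and invertibility of $\mathcal{E}[\tilde{\rho}]$ near $\lambda=0$ (via $\mathcal{E}[I]=I$ for unital channels) is a reasonable technical refinement of a point the paper leaves implicit, but it does not change the argument.
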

\begin{proof}
    Suppose $\mathbf{W}$ can be decomposed into two disconnected clusters $\mathbf{W}_1$ and $\mathbf{W}_2$. We have
    \begin{align}
        \mathcal{D}_\mathbf{W} \log(\mathcal{E}[\tilde{\rho}]) = \prod_{i \in  \mathbf{W}_1} (\frac{\partial }{\partial \lambda_i})^{\mu_i}  \prod_{j \in  \mathbf{W}_2} (\frac{\partial }{\partial \lambda_j})^{\mu_j}  \log(\mathcal{E}[\tilde{\rho}]) \Bigr|_{\lambda=0}
    \end{align}
 We first set the coefficients $\lambda$ that are not in $\mathbf{W}_1$ or $\mathbf{W}_2$ to zero. Because $\mathbf{W}_1$ and $\mathbf{W}_2$ are disconnected, the resulting density matrix factorizes.
    \begin{align}
        \mathcal{D}_\mathbf{W} \log(\mathcal{E}[\tilde{\rho}]) = \prod_{i \in  \mathbf{W}_1} (\frac{\partial }{\partial \lambda_i})^{\mu_i}  \prod_{j \in  \mathbf{W}_2} (\frac{\partial }{\partial \lambda_j})^{\mu_j}  \log(\mathcal{E}\left[\rho^{(\mathbf{W}_1)}\times \rho^{(\mathbf{W}_2)}\right]) \Bigr|_{\lambda_i=\lambda_j=0}
    \end{align}
Where $\rho^{(\mathbf{W}_1)}$ and $\rho^{(\mathbf{W}_2)}$ are defined in Eq. (\ref{eq:quantum_w1_def}, \ref{eq:quantum_w2_def}). Since $\mathcal{E}$ is a product of local channels, we separate it into three parts: $\mathcal{E} = \mathcal{E}_{\mathbf{W}_1} \otimes \mathcal{E}_{\mathbf{W}_2} \otimes \mathcal{E}_{\bar{\mathbf{W}}}$, where $\mathcal{E}_{\mathbf{W}_1}$ includes channels acting on the support of $\mathbf{W}_1$, $\mathcal{E}_{\mathbf{W}_2}$ includes channels acting on the support of $\mathbf{W}_2$, and $\mathcal{E}_{\bar{\mathbf{W}}}$ includes channels acting on the remaining system. Under such factorization, the logarithm becomes
\begin{align}
    \log\left(\mathcal{E}\left[\rho^{(\mathbf{W}_1)}\times \rho^{(\mathbf{W}_2)}\right]\right) &= \log\left(\mathcal{E}^{(\mathbf{W}_1)}\left[\rho^{(\mathbf{W}_1)}\right]\times \mathcal{E}^{(\mathbf{W}_2)}\left[\rho^{(\mathbf{W}_2)}\right] \times \mathcal{E}^{(\bar{\mathbf{W}})}\left[I\right]\right) \\
    &= \log\left(\mathcal{E}^{(\mathbf{W}_1)}\left[\rho^{(\mathbf{W}_1)}\right] \right) + \log \left( \mathcal{E}^{(\mathbf{W}_2)}\left[\rho^{(\mathbf{W}_2)}\right] \right) + \log \left( \mathcal{E}^{(\bar{\mathbf{W}})}\left[I\right]\right)
\end{align}
Where in the second line we use the fact that $\mathcal{E}^{(\mathbf{W}_1)}\left[\rho^{(\mathbf{W}_1)}\right]$, $\mathcal{E}^{(\mathbf{W}_2)}\left[\rho^{(\mathbf{W}_2)}\right]$, and $\mathcal{E}^{(\bar{\mathbf{W}})}\left[I\right]$ have disjoint support to factorize the matrix logarithm into the sum of three terms. After expressing the matrix logarithm as a sum of three terms, notice that the first terms only depend on $\mathbf{W}_1$, the second term only depends on $\mathbf{W}_2$, and the last term is independent of $\mathbf{W}_1$ or $\mathbf{W}_2$. Therefore, all three terms become zero after taking the cluster derivative of $\mathbf{W} = \mathbf{W}_1 \cup \mathbf{W}_2$.
\end{proof}

We will upper-bound CMI using the cluster expansion of $H(A:C|\mathcal{E}[B])$.
\begin{equation} \label{eq:mcmi_cluster_expand}
    H(A:C|\mathcal{E}[B]) = \sum_{\mathbf{W}} \frac{\lambda_{\mathbf{W}}}{\mathbf{W}!}\mathcal{D}_\mathbf{W} H(A:C|\mathcal{E}[B])
\end{equation}
A crucial observation is that the above cluster expansion is non-trivial only when $\mathbf{W}$ is a connected cluster connecting $A$ and $C$. Fig. \ref{fig:cluster}(c) illustrates an example where a cluster connects $A$ and $C$ but is disconnected, and Fig. \ref{fig:cluster}(d) illustrates an example where a cluster is connected but does not connect $A$ and $C$.
\begin{lemma}\label{lem:conn_AC}
    $\mathcal{D}_\mathbf{W} H(A:C|\mathcal{E}[B])$ is non-trivial only when $\mathbf{W}$ is a connected cluster connecting $A$ and $C$.
\end{lemma}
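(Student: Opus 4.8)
The plan is to split the statement into its two independent requirements---that $\mathbf{W}$ be connected, and that its support reach both $A$ and $C$---and to reduce each to the channel picture supplied by Proposition \ref{prop:trace_channel}. Writing $D := (ABC)^c$ for the traced-out environment and recalling that $\mathcal{E}$ acts on $B$ only, the four marginals entering $H(A:C|\mathcal{E}[B])$ become products of single-site channels applied to $\tilde\rho$: $\mathcal{E}[\tilde\rho_{ABC}] = (\mathcal{E}^{Tr}_{D}\circ\mathcal{E})[\tilde\rho]$, $\mathcal{E}[\tilde\rho_{AB}]=(\mathcal{E}^{Tr}_{CD}\circ\mathcal{E})[\tilde\rho]$, $\mathcal{E}[\tilde\rho_{BC}]=(\mathcal{E}^{Tr}_{AD}\circ\mathcal{E})[\tilde\rho]$, and $\mathcal{E}[\tilde\rho_{B}]=(\mathcal{E}^{Tr}_{ACD}\circ\mathcal{E})[\tilde\rho]$. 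Because each is a logarithm of a product of local channels applied to $\tilde\rho$, Lemma \ref{lem:conn_log_cluster} applies to all four terms individually: whenever $\mathbf{W}$ is disconnected, every $\mathcal{D}_{\mathbf{W}}\log(\mathcal{E}[\tilde\rho_L])$ vanishes, so $\mathcal{D}_{\mathbf{W}} H(A:C|\mathcal{E}[B])=0$. This settles the connectedness requirement.

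For the second requirement I would show that a connected $\mathbf{W}$ whose support misses $C$ already gives a vanishing cluster derivative, the case of a $\mathbf{W}$ missing $A$ following by symmetry. The guiding observation is that $\mathcal{D}_{\mathbf{W}}$ only probes the dependence of its argument on $\{\lambda_a : a\in\mathbf{W}\}$ at the point $\lambda_b=0$ for $b\notin\mathbf{W}$, so $\mathcal{D}_{\mathbf{W}} f = \mathcal{D}_{\mathbf{W}}(f|_{\lambda_{\notin\mathbf{W}}=0})$. After this restriction the unnormalized state collapses, by the commuting structure, to $\tilde\rho|_{\lambda_{\notin\mathbf{W}}=0} = \rho^{(\mathbf{W})}\otimes I_{\mathrm{supp}(\mathbf{W})^c}$, supported entirely on $\mathrm{supp}(\mathbf{W})$. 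I would then pair the marginals as $(\tilde\rho_{AB},\tilde\rho_{ABC})$ and $(\tilde\rho_{B},\tilde\rho_{BC})$: within each pair the two channel compositions differ by exactly the one depolarizer $\mathcal{E}^{Tr}_{C}$ acting on $C$. Since $\mathrm{supp}(\mathbf{W})\cap C=\emptyset$, the $C$-factor of $\rho^{(\mathbf{W})}\otimes I$ is the identity, and $\mathcal{E}^{Tr}_{C}$ acts on it as multiplication by the constant $d_C$, yielding $\mathcal{E}[\tilde\rho_{AB}]|_{\lambda_{\notin\mathbf{W}}=0} = d_C\,\mathcal{E}[\tilde\rho_{ABC}]|_{\lambda_{\notin\mathbf{W}}=0}$ and $\mathcal{E}[\tilde\rho_{B}]|_{\lambda_{\notin\mathbf{W}}=0}=d_C\,\mathcal{E}[\tilde\rho_{BC}]|_{\lambda_{\notin\mathbf{W}}=0}$, as operator identities valid for all surviving couplings.

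Taking logarithms converts the prefactor $d_C$ into the additive constant $\log d_C$, which the nonempty cluster derivative $\mathcal{D}_{\mathbf{W}}$ annihilates. Hence $\mathcal{D}_{\mathbf{W}}\log\mathcal{E}[\tilde\rho_{AB}] = \mathcal{D}_{\mathbf{W}}\log\mathcal{E}[\tilde\rho_{ABC}]$ and $\mathcal{D}_{\mathbf{W}}\log\mathcal{E}[\tilde\rho_{B}] = \mathcal{D}_{\mathbf{W}}\log\mathcal{E}[\tilde\rho_{BC}]$, so the four terms of $H(A:C|\mathcal{E}[B])$ cancel in pairs and $\mathcal{D}_{\mathbf{W}} H(A:C|\mathcal{E}[B])=0$. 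Running the identical argument with $A$ in place of $C$ handles connected clusters missing $A$. Combined with the first paragraph, any $\mathbf{W}$ that is disconnected, or connected but failing to touch both $A$ and $C$, contributes trivially, which is exactly the claim.

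I expect the main obstacle to be justifying the pairwise proportionality $\mathcal{E}[\tilde\rho_{AB}]|_{\lambda_{\notin\mathbf{W}}=0}\propto\mathcal{E}[\tilde\rho_{ABC}]|_{\lambda_{\notin\mathbf{W}}=0}$ as a genuine operator identity in the surviving couplings---rather than at a single point---since it is this identity, not a pointwise coincidence, that licenses equating the two cluster derivatives. Securing it requires checking that the commuting structure really makes $\rho^{(\mathbf{W})}$ supported on $\mathrm{supp}(\mathbf{W})$, that $\mathcal{E}$ touches only $B$ while $\mathcal{E}^{Tr}_{C}$ is the sole channel acting on $C$, and that the convention $\tilde\rho_L = \mathrm{Tr}_{L^c}(e^{-\beta H})\otimes I_{L^c}$ contributes only the harmless scalar $d_C$. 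A secondary subtlety is the environment $D$: I must confirm that clusters extending into $D$ while still avoiding $C$ do not disrupt the pairing, which holds precisely because the channel distinguishing each pair is supported on $C$ alone.
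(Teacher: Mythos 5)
Your proposal is correct and follows essentially the same route as the paper's proof: the disconnected case is dispatched by Lemma \ref{lem:conn_log_cluster} (with Proposition \ref{prop:trace_channel} turning all four marginals into products of local channels acting on $\tilde{\rho}$), and for a connected $\mathbf{W}$ missing $C$ you restrict the couplings outside $\mathbf{W}$, observe that $\rho^{(\mathbf{W})}$ is identity on $C$, and cancel the terms in the same pairs $(\tilde{\rho}_{AB},\tilde{\rho}_{ABC})$ and $(\tilde{\rho}_{B},\tilde{\rho}_{BC})$. The only cosmetic difference is that the paper invokes unitality of the normalized depolarizing channel $\mathcal{E}^{Tr}_{C}$ to get exact equality of the paired terms, whereas you track an overall scalar $d_C$ and let the nonempty cluster derivative annihilate the additive constant $\log d_C$ --- both are valid bookkeeping for the same step.
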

\begin{proof}
    If $\mathbf{W}$ is disconnected, $\mathcal{D}_\mathbf{W} H(A:C|\mathcal{E}[B]) = 0$ follows from Lemma \ref{lem:conn_log_cluster}. Without loss of generality, suppose $\mathbf{W}$ is connected but does not connect to $C$, then we first set all coefficients of terms supported on $C$ to zero. We first consider $\mathcal{D}_\mathbf{W}(\log(\mathcal{E}[\tilde{\rho}_{AB}])$ and $\mathcal{D}_\mathbf{W}(\log(\mathcal{E}[\tilde{\rho}_{ABC}])$.
\begin{align}
        \mathcal{D}_\mathbf{W} \log(\mathcal{E}[\tilde{\rho}_{AB}]) = \mathcal{D}_\mathbf{W} \log((\mathcal{E} \circ \mathcal{E}^{Tr}_{C}) [\tilde{\rho}_{ABC}]) \\
        = \mathcal{D}_\mathbf{W} \log((\mathcal{E} \circ \mathcal{E}^{Tr}_{C}) [\rho^{(\mathbf{W})}]) = \mathcal{D}_\mathbf{W} \log(\mathcal{E} [\rho^{(\mathbf{W})}])
\end{align}
Where $\rho^{(\mathbf{W})}=e^{-\beta\sum_{a \in \mathbf{W}} \lambda_a h_a}$ denotes the density matrix after setting all the coefficients not in $\mathbf{W}$ to zero. In the last line we use the fact that $\mathcal{E}^{Tr}_{C}$ is unital. Similarly,
\begin{align}
        \mathcal{D}_\mathbf{W} \log(\mathcal{E}[\tilde{\rho}_{ABC}]) = \mathcal{D}_\mathbf{W} \log(\mathcal{E} [\rho^{(\mathbf{W})}])
\end{align}
Therefore, $\mathcal{D}_\mathbf{W}(\log(\mathcal{E}[\tilde{\rho}_{AB}]) - \log(\mathcal{E}[\tilde{\rho}_{ABC}]) ) = 0$. Next, we consider $\mathcal{D}_\mathbf{W}(\log(\mathcal{E}[\tilde{\rho}_{BC}])$ and $\mathcal{D}_\mathbf{W}(\log(\mathcal{E}[\tilde{\rho}_{B}])$.
\begin{align}
      \mathcal{D}_\mathbf{W} \log(\mathcal{E}[\tilde{\rho}_{BC}]) = \mathcal{D}_\mathbf{W} \log((\mathcal{E} \circ \mathcal{E}^{Tr}_{A}) [\tilde{\rho}_{ABC}]) \\
        = \mathcal{D}_\mathbf{W} \log((\mathcal{E} \circ \mathcal{E}^{Tr}_{A}) [\rho^{(\mathbf{W})}])
\end{align}
\begin{align}
      \mathcal{D}_\mathbf{W} \log(\mathcal{E}[\tilde{\rho}_{B}]) = \mathcal{D}_\mathbf{W} \log((\mathcal{E} \circ \mathcal{E}^{Tr}_{A}\circ \mathcal{E}^{Tr}_{C}) [\tilde{\rho}_{ABC}]) \\
        = \mathcal{D}_\mathbf{W} \log((\mathcal{E} \circ \mathcal{E}^{Tr}_{A}\circ \mathcal{E}^{Tr}_{C}) [\rho^{(\mathbf{W})}]) = \mathcal{D}_\mathbf{W} \log((\mathcal{E} \circ \mathcal{E}^{Tr}_{A}) [\rho^{(\mathbf{W})}])
\end{align}
Therefore, $\mathcal{D}_\mathbf{W}(\log(\mathcal{E}[\tilde{\rho}_{BC}]) - \log(\mathcal{E}[\tilde{\rho}_{B}]) ) = 0$. Together, we establish $\mathcal{D}_\mathbf{W} H(A:C|\mathcal{E}[B])=0$
\end{proof}

We will prove Theorem \ref{thm:decay_cmi} through two steps. First, we show that the number of connected clusters at weight $w$ grows as $a^w$, where $a$ is some $O(1)$ constant. Second, the magnitude of the cluster derivative is upper-bounded by $(c\beta)^{w}$, where $c$ is some other $O(1)$ constant, together, we will establish the convergence of the cluster expansion in Eq. (\ref{eq:mcmi_cluster_expand}) at $\beta \le 1/(ac)$.

First, we quote the result that the number of connected clusters grows at most exponentially.
\begin{lemma}
\label{lem:num_conn_cluster} (Proposition 3.6 of \cite{haah2022optimal}) The number of connected clusters supported on any site $a$ with weight $w$ is upper bounded by $e\mathfrak{d}(1 + e(\mathfrak{d} - 1))^{w-1}$, where $\mathfrak{d}$ denotes the degree of the interaction graph.   
\end{lemma}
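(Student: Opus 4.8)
The plan is to factor each connected cluster of weight $w$ into two independent pieces of data---its \emph{support} and its \emph{multiplicity profile}---and count each piece separately. Recall that whether a cluster $\mathbf{W}$ is connected depends only on which distinct Hamiltonian terms it contains, since connectivity is a property of the induced subgraph in the dual interaction graph $\mathfrak{B}$; the multiplicities $\mu(a)$ play no role. Thus a connected cluster of weight $w$ supported on a fixed vertex $a$ is specified by (i) a connected vertex subset $S \subseteq \mathfrak{B}$ with $a \in S$ and $|S| = j$ for some $1 \le j \le w$, together with (ii) an assignment of multiplicities $\mu \ge 1$ to the $j$ vertices of $S$ summing to $w$. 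The number of choices in (ii) is the number of compositions of $w$ into $j$ positive parts, namely $\binom{w-1}{j-1}$. So first I would reduce the lemma to the single estimate
\begin{equation}
N_j \le e\,\mathfrak{d}\,(e(\mathfrak{d}-1))^{j-1},
\end{equation}
where $N_j$ is the number of connected vertex subsets of size $j$ containing $a$, because then summing over $j$ with the binomial theorem gives
\begin{equation}
\sum_{j=1}^{w} N_j \binom{w-1}{j-1} \le e\mathfrak{d} \sum_{j=1}^{w} (e(\mathfrak{d}-1))^{j-1}\binom{w-1}{j-1} = e\mathfrak{d}\,(1 + e(\mathfrak{d}-1))^{w-1},
\end{equation}
which is exactly the claimed bound.

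Next I would bound $N_j$ by a spanning-tree argument. Every connected subset $S$ admits at least one spanning tree, which I root at the distinguished vertex $a$, and subtrees of different subsets have different vertex sets; hence $N_j$ is at most the number of rooted subtrees of $\mathfrak{B}$ with $j$ vertices based at $a$. I would build such a tree by an exploration that reveals children one vertex at a time: the root $a$ has at most $\mathfrak{d}$ neighbors available as children, while every subsequently explored vertex has at most $\mathfrak{d}-1$ available children, since one incident edge is already used by its parent. Encoding the tree by its sequence of branching choices produces a factor $\mathfrak{d}(\mathfrak{d}-1)^{j-1}$, and I would absorb the overcounting coming from the many orderings and shapes of the same underlying tree using a Stirling-type estimate (of the form $k! \ge (k/e)^k$, equivalently $\binom{n}{k}\le (en/k)^k$), which is precisely what converts $\mathfrak{d}(\mathfrak{d}-1)^{j-1}$ into $e\mathfrak{d}(e(\mathfrak{d}-1))^{j-1}$. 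This is essentially Cayley-type tree enumeration constrained by the maximum degree.

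I expect the delicate step to be the tree count delivering $N_j$ with exactly these constants: one must correctly separate the root (degree budget $\mathfrak{d}$) from internal vertices (degree budget $\mathfrak{d}-1$), and must control the combinatorial overcounting from tree shapes without losing the $e$ factors---this is where the Stirling bound is essential and where a naive argument would produce a worse base. Since the statement is quoted as Proposition 3.6 of \cite{haah2022optimal}, I would ultimately either cite that enumeration directly or reproduce its generating-function/Lagrange-inversion version; the genuinely new content here is the clean separation of support from multiplicities in the first paragraph, after which the binomial resummation in the displayed equation closes the argument.
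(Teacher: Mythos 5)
Your proposal is correct and coincides with how this bound is actually established: the paper itself gives no proof, quoting the statement verbatim as Proposition 3.6 of \cite{haah2022optimal}, and your decomposition---separating a connected cluster into its support (a connected subgraph of the dual interaction graph containing $a$) and its multiplicity profile (a composition counted by $\binom{w-1}{j-1}$), bounding the number of $j$-vertex connected supports by a rooted-subtree count of the form $e\mathfrak{d}(e(\mathfrak{d}-1))^{j-1}$ via a Stirling/Lagrange-inversion estimate, and closing with the binomial theorem---is precisely the argument of that cited proposition. The only step you leave schematic, the exact $e\mathfrak{d}(e(\mathfrak{d}-1))^{j-1}$ tree enumeration, is the content of the reference you correctly say you would reproduce or cite, which is no less than what the paper does.
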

The major technical part is to establish the exponential decay of the cluster derivative, which we show in the next subsection.

\subsection{Bounding Cluster Derivative Under Unital Channels}
In this subsection we upper-bound the magnitude of cluster derivatives, which is the main technical step of the proof. Recall that $\tilde{\rho}=e^{- \beta H}$ denotes the unnormalized density matrix. We show that the norm of any cluster derivatives with weight $m$ is upper-bounded by $(c\beta)^{m}$, where $c$ is some $O(1)$ constant. This is the step that was found to be flawed in Ref. \cite{kuwahara2020clustering}. To address it, we employ a different combinatorial estimate introduced in \cite{wild2023classical,haah2022optimal} and generalize their technique to density matrices under local channels.
\begin{lemma}
\label{lem:cluster_deriv_upper}
    Given any cluster $\mathbf{W}$ with order $|\mathbf{W}|=m$,  for any channel $\mathcal{E} = \otimes_{i} \mathcal{E}_i$ that is a product of unital, commutation-preserving channels, 
\begin{align}
    \frac{1}{\mathbf{W}!}\norm{\mathcal{D}_{\mathbf{W}} \log(\mathcal{E}[\tilde{\rho}])} \le (2e(\mathfrak{d}+1)\beta)^{m+1}
\end{align}
\end{lemma}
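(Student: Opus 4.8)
The plan is to reduce the operator statement to a purely scalar (classical) cluster-expansion bound by exploiting the two hypotheses---commutation preservation and unitality---and then import the combinatorial estimates of \cite{wild2023classical,haah2022optimal}. \emph{Raw derivatives:} Since the $h_a$ commute, $\tilde{\rho}=e^{-\beta\sum_a\lambda_a h_a}$ expands cleanly and the cluster derivative collapses to a single monomial. I would first establish
\begin{equation}
\mathcal{D}_{\mathbf{W}}\,\mathcal{E}[\tilde{\rho}] = (-\beta)^{|\mathbf{W}|}\,\mathcal{E}[h_{\mathbf{W}}],\qquad h_{\mathbf{W}}:=\prod_a h_a^{\mu(a)},
\end{equation}
so that, using that a unital channel is an operator-norm contraction together with $\norm{h_{\mathbf{W}}}\le 1$, the raw (non-logarithmic) derivatives satisfy $\tfrac{1}{\mathbf{W}!}\norm{\mathcal{D}_{\mathbf{W}}\mathcal{E}[\tilde{\rho}]}\le\beta^{|\mathbf{W}|}$. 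These play the role of the ``moments.''

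\emph{Common eigenbasis:} The commutation-preserving hypothesis (Definition \ref{def:comm_preserv}) guarantees that the Hermitian operators $\{\mathcal{E}[h_{\mathbf{W}}]\}_{\mathbf{W}}$ pairwise commute, hence admit a single $\lambda$-independent eigenbasis; consequently $\mathcal{E}[\tilde{\rho}(\lambda)]$ is diagonal in this basis for \emph{every} $\lambda$, and unitality gives $\mathcal{E}[\tilde{\rho}]\big|_{\lambda=0}=\mathcal{E}[I]=I$, so the matrix logarithm acts eigenvalue-by-eigenvalue. This collapses the operator norm to a scalar problem: writing $g(\lambda)$ for a generic eigenvalue of $\mathcal{E}[\tilde{\rho}(\lambda)]$, one has $\norm{\mathcal{D}_{\mathbf{W}}\log\mathcal{E}[\tilde{\rho}]}=\max_{\mathrm{eig}}|\mathcal{D}_{\mathbf{W}}\log g|$, where $g(0)=1$ and $\tfrac{1}{\mathbf{V}!}\mathcal{D}_{\mathbf{V}}g=(-\beta)^{|\mathbf{V}|}f_{\mathbf{V}}$ with $f_{\mathbf{V}}$ the eigenvalue of $\mathcal{E}[h_{\mathbf{V}}]$ and $|f_{\mathbf{V}}|\le 1$. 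I would also record the crucial multiplicativity $f_{\mathbf{W}}=f_{\mathbf{W}_1}f_{\mathbf{W}_2}$ whenever $\mathbf{W}=\mathbf{W}_1\sqcup\mathbf{W}_2$ has disjoint support, which follows from $\mathcal{E}=\otimes_i\mathcal{E}_i$ together with unitality on the untouched sites.

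\emph{Cumulant expansion:} Because $g(0)=1$, the standard moment--cumulant (Fa\`a di Bruno) relation expresses
\begin{equation}
\frac{1}{\mathbf{W}!}\mathcal{D}_{\mathbf{W}}\log g = \sum_{\pi}\, c_\pi \prod_{B\in\pi}\frac{1}{B!}\mathcal{D}_B g,
\end{equation}
an alternating sum over partitions $\pi$ of the $m$ slots of $\mathbf{W}$ into sub-clusters, where each factor obeys $\tfrac{1}{B!}|\mathcal{D}_B g|\le\beta^{|B|}$ so that $\prod_{B}\le\beta^{m}$, and the combinatorial coefficients satisfy $|c_\pi|\sim(|\pi|-1)!$.

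\emph{The obstacle:} Bounding this sum term by term fails, since the factor $(|\pi|-1)!$ together with the super-exponential number of partitions produces a spurious $m!$. The resolution---and the main technical difficulty---is that the multiplicativity of $f_{\mathbf{V}}$ over disjoint supports forces cancellations: disconnected $\mathbf{W}$ cancel entirely (reproducing Lemma \ref{lem:conn_log_cluster}), while for connected $\mathbf{W}$ the surviving contributions reorganize into a convergent linked-cluster series with no residual factorial. This is precisely the step mishandled in \cite{kuwahara2020clustering}; I would instead adapt the combinatorial estimates of \cite{wild2023classical,haah2022optimal}, which bound the connected cumulant by a geometric series governed by the number of connected clusters through a fixed term. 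Feeding in the count $e\mathfrak{d}(1+e(\mathfrak{d}-1))^{w-1}$ of Lemma \ref{lem:num_conn_cluster} and assembling constants then yields the claimed $(2e(\mathfrak{d}+1)\beta)^{m+1}$. The entire purpose of the two hypotheses is to make the first two steps go through, so that this classical combinatorial machinery applies verbatim to the eigenvalues; threading the channel through without spoiling either the weight bound $|f_{\mathbf{V}}|\le 1$ or the diagonalizability is the conceptual content, while the factorial-taming cancellation is the technical crux.
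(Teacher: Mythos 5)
Your proposal is correct and follows the same skeleton as the paper's proof: unitality supplies the moment bound $\frac{1}{\mathbf{W}!}\norm{\mathcal{D}_{\mathbf{W}}\mathcal{E}[\tilde{\rho}]}\le\beta^{|\mathbf{W}|}$ (the paper's Lemma \ref{lem:deriv_bound}), disconnected clusters drop out (Lemma \ref{lem:conn_log_cluster}), and the factorial produced by the moment-to-cumulant reorganization is tamed by the alternating graph-coloring/spanning-tree cancellation imported from \cite{haah2022optimal,wild2023classical} (Lemma \ref{lem:combinatorial_estimate}). Where you genuinely diverge is in how the noncommutative structure is neutralized. The paper stays at the operator level (Lemma \ref{lem:graph_coloring}): it expands $\log(I+A)$ as an operator power series and invokes commutation preservation only to justify reordering the factors $\mathcal{D}_{\mathbf{V}}\mathcal{E}^{(\mathbf{V})}[\tilde{\rho}]$ when matching the $n$-th powers to a cluster expansion. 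You instead spend the commutation-preserving hypothesis once and for all to simultaneously diagonalize the commuting Hermitian family $\{\mathcal{E}[h_{\mathbf{W}}]\}$ (valid: take $S$ to be the full site set in Definition \ref{def:comm_preserv}, and note the products $h_{\mathbf{W}}$ are Hermitian since the $h_a$ commute), so that $\mathcal{E}[\tilde{\rho}(\lambda)]$ is diagonal in a fixed $\lambda$-independent basis with $\mathcal{E}[\tilde{\rho}(0)]=I$ by unitality, collapsing the operator norm to a scalar cluster expansion of $\log g(\lambda)$ eigenvalue by eigenvalue; your multiplicativity $f_{\mathbf{W}}=f_{\mathbf{W}_1}f_{\mathbf{W}_2}$ over disjoint supports (again via unitality on untouched sites) then plays exactly the role of Lemma \ref{lem:conn_cluster}. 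This scalarization buys a cleaner argument---the classical combinatorics of \cite{haah2022optimal,wild2023classical} apply essentially verbatim rather than being re-derived for operators---at the cost of leaning harder on the commuting-Hamiltonian structure: your identity $\mathcal{D}_{\mathbf{W}}\mathcal{E}[\tilde{\rho}]=(-\beta)^{|\mathbf{W}|}\mathcal{E}[h_{\mathbf{W}}]$ uses $[h_a,h_b]=0$, whereas the paper's Lemma \ref{lem:deriv_bound} symmetrizes over orderings and does not. One small correction to your final assembly: the constant $(2e(\mathfrak{d}+1)\beta)^{m+1}$ in this lemma comes entirely from the spanning-tree/degree estimate of Lemma \ref{lem:combinatorial_estimate}; the connected-cluster count $e\mathfrak{d}(1+e(\mathfrak{d}-1))^{w-1}$ of Lemma \ref{lem:num_conn_cluster} does not enter here but only later, in the proof of Theorem \ref{thm:decay_cmi}, when summing over clusters connecting $A$ and $C$---your write-up conflates the two stages, though nothing breaks once they are kept separate.
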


We begin by supplying some short lemmas which will be useful later. First, any unital channel is contractive in the sense that its action cannot increase the operator norm.
\begin{lemma}\label{lem:unital_channel_op_norm}
  (Theorem II.4 of \cite{perez2006contractivity})  Given any matrix $A$ such that $\norm{A} = 1$, then for any unital channel $\mathcal{E}$, $\norm{\mathcal{E}[A]} \le 1$.
\end{lemma}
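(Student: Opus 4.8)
The plan is to reduce the general ``any matrix'' statement to the special case of Hermitian $A$, where a one-line operator-ordering argument applies, and then to bootstrap from Hermitian positive-semidefinite operators to arbitrary matrices using the Kadison--Schwarz inequality (available because quantum channels are completely positive, hence in particular $2$-positive).

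First I would dispatch the Hermitian case. If $A = A^\dagger$ and $\norm{A} = 1$, then the Schatten-$\infty$ norm equals the largest eigenvalue magnitude, so $-I \preceq A \preceq I$ in the positive-semidefinite (Löwner) order. A unital channel is positive and satisfies $\mathcal{E}[I] = I$; since positive maps preserve the Löwner order, applying $\mathcal{E}$ to $I - A \succeq 0$ and to $A + I \succeq 0$ yields $I - \mathcal{E}[A] \succeq 0$ and $\mathcal{E}[A] + I \succeq 0$, i.e. $-I \preceq \mathcal{E}[A] \preceq I$. Because positive maps are Hermiticity-preserving, $\mathcal{E}[A]$ is Hermitian, so $\norm{\mathcal{E}[A]} \le 1$. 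This step uses only positivity and unitality.

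Next I would handle a general (not necessarily Hermitian) matrix $A$ with $\norm{A} = 1$ via the identity $\norm{M}^2 = \norm{M^\dagger M}$ for the operator norm. Taking $M = \mathcal{E}[A]$, the Kadison--Schwarz inequality for unital $2$-positive maps gives $\mathcal{E}[A]^\dagger \mathcal{E}[A] \preceq \mathcal{E}[A^\dagger A]$; both sides are positive semidefinite, so $\norm{\mathcal{E}[A]}^2 = \norm{\mathcal{E}[A]^\dagger \mathcal{E}[A]} \le \norm{\mathcal{E}[A^\dagger A]}$. Now $A^\dagger A$ is Hermitian and positive semidefinite with $\norm{A^\dagger A} = \norm{A}^2 = 1$, so it falls under the Hermitian case already proved, giving $\norm{\mathcal{E}[A^\dagger A]} \le 1$ and hence $\norm{\mathcal{E}[A]} \le 1$.

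I expect the only genuine subtlety to be making precise which positivity hypothesis each step requires: the Hermitian argument needs merely a positive, unital map, whereas the extension to arbitrary matrices invokes $2$-positivity through Kadison--Schwarz, which is exactly why complete positivity of the channel is the hypothesis that makes the full ``any matrix'' version go through cleanly. As an alternative route that avoids Kadison--Schwarz, I could instead use Schatten duality $\norm{X} = \sup_{\norm{Y}_1 \le 1} \abs{\operatorname{Tr}(Y^\dagger X)}$ together with the fact that the Hilbert--Schmidt adjoint of a unital completely positive map is a trace-preserving completely positive map, hence contractive in trace norm; moving $\mathcal{E}$ onto $Y$ via $\operatorname{Tr}(Y^\dagger \mathcal{E}[A]) = \operatorname{Tr}(\mathcal{E}^\dagger[Y]^\dagger A)$ and applying Hölder's inequality $\abs{\operatorname{Tr}(\mathcal{E}^\dagger[Y]^\dagger A)} \le \norm{\mathcal{E}^\dagger[Y]}_1 \norm{A}$ then yields $\norm{\mathcal{E}[A]} \le \norm{A}$. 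I would present the Kadison--Schwarz route as primary since it is self-contained, and relegate the duality route to a remark.
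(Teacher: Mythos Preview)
Your proof is correct. The paper does not supply its own argument for this lemma; it simply cites Theorem~II.4 of P\'erez-Garc\'ia et al.\ (2006) and moves on. That reference establishes contractivity of unital (completely) positive maps in Schatten norms essentially via the duality argument you sketch as your alternative route: the adjoint of a unital CP map is trace-preserving CP, hence trace-norm contractive, and Schatten duality transfers this to operator-norm contractivity. Your primary Kadison--Schwarz argument is equally valid and arguably more self-contained; your remark that the Hermitian case needs only positivity and unitality while the general case uses $2$-positivity is accurate, and since ``channel'' in this paper means CPTP you have the hypothesis you need.
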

With that, we can upper-bound the operator norm of $\mathcal{D}_\mathbf{V}\mathcal{E}^{(\mathbf{V})}[\tilde{\rho}]$, where $\mathcal{E}^{(\mathbf{V})}$, defined in Lemma \ref{lem:conn_cluster}, is a product of local unital channels supported on $\mathbf{V}$.
\begin{lemma}\label{lem:deriv_bound}
    Suppose $|\mathbf{V}|=k$, then for any $\mathcal{E}^{(\mathbf{V})}$ that is a product of local unital channels supported on $\mathbf{V}$, $\norm{\mathcal{D}_\mathbf{V}\mathcal{E}^{(\mathbf{V})}[\tilde{\rho}]} \le \beta^k$
\end{lemma}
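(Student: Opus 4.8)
The plan is to exploit the commuting structure of $H$ to evaluate the cluster derivative $\mathcal{D}_\mathbf{V}\tilde{\rho}$ in closed form, and then invoke the contractivity of unital channels. Since every pair $h_a,h_b$ commutes, the unnormalized Gibbs operator factorizes as $\tilde{\rho}=e^{-\beta\sum_a\lambda_a h_a}=\prod_a e^{-\beta\lambda_a h_a}$. Moreover, because $\mathcal{E}^{(\mathbf{V})}$ acts on operators while $\mathcal{D}_\mathbf{V}$ differentiates in the parameters $\lambda_a$ (on which the channel does not depend), linearity lets the two commute: $\mathcal{D}_\mathbf{V}\mathcal{E}^{(\mathbf{V})}[\tilde{\rho}]=\mathcal{E}^{(\mathbf{V})}[\mathcal{D}_\mathbf{V}\tilde{\rho}]$. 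So the problem reduces to bounding $\norm{\mathcal{D}_\mathbf{V}\tilde{\rho}}$ and controlling the effect of the channel on that bound.

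First I would compute $\mathcal{D}_\mathbf{V}\tilde{\rho}$. Using the factorization, the operator $\prod_{a\in\mathbf{V}}(\partial_{\lambda_a})^{\mu(a)}$ acts factor by factor, and each single-variable derivative of a scalar-exponentiated commuting term selects exactly one Taylor coefficient, $(\partial_{\lambda_a})^{\mu(a)}e^{-\beta\lambda_a h_a}\big|_{\lambda_a=0}=(-\beta h_a)^{\mu(a)}$, while all factors with $b\notin\mathbf{V}$ evaluate to the identity at $\lambda=0$. This yields the clean monomial
\begin{equation}
\mathcal{D}_\mathbf{V}\tilde{\rho}=(-\beta)^{|\mathbf{V}|}\prod_{a\in\mathbf{V}}h_a^{\mu(a)}=(-\beta)^k\,O_\mathbf{V},
\end{equation}
where $O_\mathbf{V}=\prod_{a\in\mathbf{V}}h_a^{\mu(a)}$ is precisely one of the commuting products $O_m$ of Definition \ref{def:comm_preserv}.

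Next I would bound the norm. Since each $\norm{h_a}\le 1$, submultiplicativity of the operator norm gives $\norm{O_\mathbf{V}}\le\prod_{a\in\mathbf{V}}\norm{h_a}^{\mu(a)}\le 1$, hence $\norm{\mathcal{D}_\mathbf{V}\tilde{\rho}}\le\beta^k$. Finally, because $\mathcal{E}^{(\mathbf{V})}$ is a tensor product of single-site unital channels together with identity channels (which are unital), it is itself unital, so Lemma \ref{lem:unital_channel_op_norm} shows it cannot increase the operator norm. Combining the three steps,
\begin{equation}
\norm{\mathcal{D}_\mathbf{V}\mathcal{E}^{(\mathbf{V})}[\tilde{\rho}]}=\norm{\mathcal{E}^{(\mathbf{V})}[\mathcal{D}_\mathbf{V}\tilde{\rho}]}\le\norm{\mathcal{D}_\mathbf{V}\tilde{\rho}}\le\beta^k,
\end{equation}
which is the claim.

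The computation is short, and the only place demanding care is the commutativity: both the factorization $\tilde{\rho}=\prod_a e^{-\beta\lambda_a h_a}$ and the resulting monomial form of the derivative hinge on all $h_a$ commuting, so this is exactly the step where a non-commuting Hamiltonian would force a Duhamel/nested-commutator expansion and destroy the clean $\beta^k$ estimate. I would also stress that this lemma uses \emph{only} unitality (through contractivity), not the commutation-preserving hypothesis; the latter becomes essential only later, in Lemma \ref{lem:cluster_deriv_upper}, where products and logarithms of the channeled operators must be recombined across a connected cluster.
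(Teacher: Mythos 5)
Your proof is correct, but it takes a different route from the paper's. You exploit commutativity twice: first to factorize $\tilde{\rho}=\prod_a e^{-\beta\lambda_a h_a}$, and then to evaluate the cluster derivative in closed form as the monomial $(-\beta)^k\prod_{a\in\mathbf{V}}h_a^{\mu(a)}$, after which submultiplicativity and Lemma \ref{lem:unital_channel_op_norm} finish the job. The paper instead Taylor-expands $\tilde{\rho}=\sum_n \frac{(-\beta)^n}{n!}H^n$, observes that only the order-$k$ term survives $\mathcal{D}_\mathbf{V}$ at $\lambda=0$, and bounds $\mathcal{D}_\mathbf{V}\mathcal{E}^{(\mathbf{V})}[H^k]$ as a sum over all $k!$ permutations $\sigma\in S_k$ of products $h^{(V)}_{\sigma(1)}\cdots h^{(V)}_{\sigma(k)}$, each of norm at most one; the $1/k!$ from the exponential cancels the permutation count, giving the same $\beta^k$. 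The notable difference is that the paper's argument never uses commutativity of the $h_a$ in this lemma, so your closing remark---that a non-commuting Hamiltonian ``would destroy the clean $\beta^k$ estimate'' at this step---is actually contradicted by the paper's own proof: the $\beta^k$ bound is robust to non-commutativity here, and the genuine obstruction for non-commuting Hamiltonians appears later, in the reorganization of the logarithm expansion (the ordering/shuffling issue the paper highlights in the proof of Lemma \ref{lem:graph_coloring}), exactly where you correctly note the commutation-preserving hypothesis first becomes essential. In short: your version is a clean, more specialized computation valid under the standing commuting assumption; the paper's version is marginally more general at this step, which is consistent with its framing of the lemma as part of a toolkit whose non-commuting extension fails elsewhere.
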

\begin{proof}
First expand $\tilde{\rho} = \sum_{n=1}^{\infty}\frac{(-\beta)^n}{n!} H^n $
\begin{align}
    \norm{\mathcal{D}_\mathbf{V}\mathcal{E}^{(\mathbf{V})}[\tilde{\rho}]} = \norm{\mathcal{D}_\mathbf{V}\mathcal{E}^{(\mathbf{V})}\left[\sum_{n=1}^{\infty} \frac{(-\beta)^n}{n!} H^n\right]} \le \sum_{n=1}^{\infty} \frac{\beta^n}{n!} \norm{\mathcal{D}_\mathbf{V}\mathcal{E}^{(\mathbf{V})}[H^n]}
\end{align}
Suppose $\mathbf{V}$ contain terms $h^{(\mathbf{V})}_1, h^{(\mathbf{V})}_2, \ldots, h^{(\mathbf{V})}_k$, where one term can show up multiple times to account for its multiplicity. Only the term on the $k$-th order contributes, so
\begin{align}
    \norm{\mathcal{D}_\mathbf{V}\mathcal{E}^{(\mathbf{V})}[\tilde{\rho}]} \le \frac{\beta^k}{k!} \norm{  \mathcal{D}_\mathbf{V}\mathcal{E}^{(\mathbf{V})}[H^k]} = \frac{\beta^k}{k!} \sum_{\sigma \in S_k} \norm{ \mathcal{E}^{(\mathbf{V})}[h^{(V)}_{\sigma(1)}h^{(V)}_{\sigma(2)} \ldots h^{(V)}_{\sigma(k)}]}
\end{align}
Where $S_k$ is the symmetric group of order $k$ and $\sigma \in S_k$ permutes the label. Finally, we invoke Lemma \ref{lem:unital_channel_op_norm} to have $\norm{ \mathcal{E}^{(\mathbf{V})}[h^{(V)}_{\sigma(1)}h^{(V)}_{\sigma(2)} \ldots h^{(V)}_{\sigma(k)}]} \le \norm{h^{(V)}_{\sigma(1)}h^{(V)}_{\sigma(2)} \ldots h^{(V)}_{\sigma(k)}} \le 1$,  so in the end,
\begin{align}
    \norm{\mathcal{D}_\mathbf{V}\mathcal{E}^{(\mathbf{V})}[\tilde{\rho}]} \le \frac{\beta^k}{k!} \sum_{\sigma \in S_k} \norm{h^{(V)}_{\sigma(1)}h^{(V)}_{\sigma(2)} \ldots h^{(V)}_{\sigma(k)}} \le \beta^k
\end{align}
\end{proof}

To establish Lemma \ref{lem:cluster_deriv_upper}, we first connect the cluster derivative to the graph coloring problem and employ a combinatorial estimate introduced in \cite{haah2022optimal,wild2023classical} to upper-bound the magnitude of cluster derivatives. We will need to introduce the notion of \emph{graph partition} and \emph{cluster partition} which are essentially the same object but are different in terms of their redundancies.

We start by defining the simple \emph{interaction graph} $\text{Gra}(\mathbf{W})$ of a cluster $\mathbf{W}$. $\text{Gra}(\mathbf{W})$ contains $|\mathbf{W}|$ nodes labeled with elements of $\mathbf{W}$. Specifically, if $\mathbf{W} = \{(a, \mu(a))\}$, each node can be labeled by a tuple $(a, i)$ where $i$ takes integer value from one to $\mu(a)$. Two nodes are connected if their corresponding $h_a$ and $h_b$ have overlapping support. A \emph{graph partition} $B$ of $\mathbf{W}$ is defined as the graph partition of $\text{Gra}(\mathbf{W})$. We will mostly consider a special subset of graph partitions where each partition has connected induced subgraph. We denote $\text{PaC}(F)$ as the collection of all graph partitions of $F$ such that they partition $F$ into connected induced subgraph.

On the other hand, we define a \emph{cluster partition} of $\mathbf{W}$ as a multiset $P=\{(\mathbf{W}_i,\mu(\mathbf{W}_i))\}$ such that the multiset union gives $\mathbf{W}$. We let $|P|=\sum_i \mu(\mathbf{W}_i)$ and $P! = \prod_i \mu(\mathbf{W}_i)!$. One can see that each graph partition $B$ of $\mathbf{W}$ corresponds to a cluster partition $P$ by simply "forgetting" the $i$ in the node label $(a, i)$. On the other hand, for each cluster partition $P$, there are $\frac{\mathbf{W}!}{P! \prod_{i}\mathbf{W}_i!}$ graph partitions that correspond to $P$. Similarly, we denote $\text{PaC}(\mathbf{W})$ as the collection of all cluster partitions of $\mathbf{W}$ into connected clusters. 

In essence, a graph partition is similar to a cluster partition, but when $\mu(\mathbf{W}_i) \ge 1$, then different $\mathbf{W}_i$ in the multiset are treated as distinguishable by assigning labels to each one of them.

The interaction graph $\text{Gra}(P)$ of a cluster partition $P$ is defined as follows: $\text{Gra}(P)$ contains $|P|$ nodes corresponding to clusters, and two nodes are connected if and only if their corresponding clusters $\mathbf{W}_i$ and $\mathbf{W}_j$ are connected after taking the union $\mathbf{W}_i \cup \mathbf{W}_j$. The interaction graph $\text{Gra}(B)$ of a graph partition $B$ is defined similarly. For any graph $F$, let $\chi^*(n,F)$ denote the number of node colorings using exactly $n$ colors such that two connected nodes have different colors.

\begin{lemma}
\label{lem:graph_coloring}
Given any cluster $\mathbf{W}$ with order $|\mathcal{W}|=k$,  for any channel $\mathcal{E} = \otimes_{i} \mathcal{E}_i$ that is a product of unital, commutation-preserving channels, 
\begin{align}
        \norm{\mathcal{D}_{\mathbf{W}} \log(\mathcal{E}[\tilde{\rho}])} \le   \beta^{|\mathbf{W}|} \left( \sum_{B  \in \text{PaC}(\text{Gra}(\mathbf{W}))}   \sum_{n=1}^{|B|} \frac{(-1)^n}{n}\chi^*(n,\text{Gra}(B)) \right)
    \end{align}
\end{lemma}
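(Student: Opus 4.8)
The plan is to expand the matrix logarithm as a power series around the identity and then reorganize the resulting sum of cluster derivatives into a sum over graph partitions weighted by proper colorings. Since $\mathcal{E}$ is unital, $\mathcal{E}[\tilde{\rho}]\big|_{\lambda=0}=\mathcal{E}[I]=I$, so I may write $\log(\mathcal{E}[\tilde{\rho}]) = \sum_{n\ge 1}\frac{(-1)^{n-1}}{n}(\mathcal{E}[\tilde{\rho}]-I)^n$. First I would apply $\mathcal{D}_\mathbf{W}$ and invoke the multivariate Leibniz rule: viewing the $|\mathbf{W}|$ derivative slots as the labeled nodes of $\text{Gra}(\mathbf{W})$, differentiating the $n$-th power distributes these nodes over the $n$ factors, giving a sum over ordered partitions $(V_1,\dots,V_n)$ of the node set into nonempty blocks of the matrix products $\prod_{j=1}^n \mathcal{D}_{V_j}\mathcal{E}[\tilde{\rho}]$. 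Because each factor $\mathcal{E}[\tilde{\rho}]-I$ vanishes at $\lambda=0$, only nonempty blocks contribute, which truncates the series at $n=|\mathbf{W}|$.

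Next I would pass from the possibly-disconnected blocks $V_j$ to their connected pieces. By Lemma~\ref{lem:conn_cluster} (in the unital case), the derivative on a disconnected block factorizes over its connected components, so $\mathcal{D}_{V_j}\mathcal{E}[\tilde{\rho}]=\prod_{i} G_{B_i}$ where $B_i$ ranges over the connected components of $V_j$ and $G_{B_i}:=\mathcal{D}_{B_i}\mathcal{E}^{(B_i)}[\tilde{\rho}]$. The crucial structural input is that all the $G_{B_i}$ mutually commute: each is a linear combination of terms $\mathcal{E}^{(B_i)}[\prod_a h_a]$, and since the channels are unital I can extend $\mathcal{E}^{(B_i)}$ to act on any superset of its support without changing the value (unitality kills the added identity factors); commutation-preservation (Definition~\ref{def:comm_preserv}, applied on $S=\mathrm{supp}(B_i)\cup\mathrm{supp}(B_{i'})$) then forces $G_{B_i}$ and $G_{B_{i'}}$ to commute. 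This commutativity makes the matrix product $\prod_{j}\mathcal{D}_{V_j}\mathcal{E}[\tilde{\rho}]$ independent of the block ordering, reducing it to $\prod_i G_{B_i}$ over all connected components of the refinement.

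With commutativity in hand I would group the ordered partitions by their connected-component refinement, an unordered graph partition $B\in\text{PaC}(\text{Gra}(\mathbf{W}))$ into connected induced subgraphs. For a fixed $B=\{B_1,\dots,B_{|B|}\}$, an ordered partition with $n$ blocks refining to $B$ is exactly an assignment of a color in $\{1,\dots,n\}$ to each block $B_i$ that is surjective (every $V_j$ nonempty) and proper (two blocks sent to the same $V_j$ may not be adjacent in $\text{Gra}(B)$, else they would merge into one component). The number of such assignments is $\chi^*(n,\text{Gra}(B))$, and by commutativity each contributes the identical operator $\prod_i G_{B_i}$. Collecting the Mercator weights yields $\mathcal{D}_\mathbf{W}\log(\mathcal{E}[\tilde{\rho}]) = \sum_{B}\big(\prod_i G_{B_i}\big)\sum_{n=1}^{|B|}\frac{(-1)^{n-1}}{n}\chi^*(n,\text{Gra}(B))$. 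The bound $\norm{G_{B_i}}\le\beta^{|B_i|}$ from Lemma~\ref{lem:deriv_bound} gives $\norm{\prod_i G_{B_i}}\le\beta^{|\mathbf{W}|}$, and the triangle inequality produces the stated estimate after recording that the signed coloring sum is nonnegative, so the absolute values can be dropped and $(-1)^{n-1}$ rewritten as the $(-1)^n$ displayed in the claim.

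I expect the main obstacle to be the second step: establishing the mutual commutativity of the connected-component factors $G_{B_i}$, since it is precisely here that unitality and commutation-preservation are both indispensable. Without it the combinatorial coloring count cannot be pulled out of a non-commuting matrix product, and the clean factorization into $\big(\prod_i G_{B_i}\big)\cdot\chi^*$ would fail. A secondary bookkeeping point is matching the sign of the finite coloring sum to the exact form $\frac{(-1)^n}{n}$ appearing in the statement.
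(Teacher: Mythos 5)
Your proposal is essentially the paper's own proof: the same Mercator expansion of $\log$ around the identity (using unitality so $\mathcal{E}[\tilde{\rho}]|_{\lambda=0}=I$), the same factorization of disconnected blocks into connected pieces via Lemma~\ref{lem:conn_cluster}, the same use of commutation preservation to make the connected factors mutually commute so the coloring count can be pulled out, and the same closing step via Lemma~\ref{lem:deriv_bound} and submultiplicativity. Your bookkeeping is in fact slightly cleaner than the paper's: by distributing the $|\mathbf{W}|$ labeled derivative slots over the $n$ factors with the Leibniz rule you land directly on graph partitions and surjective proper colorings, bypassing the paper's detour through cluster partitions and the $P!/\mathbf{W}!$ redundancy conversion; your support-extension argument for commutativity (unitality kills the appended identity factors, then apply Definition~\ref{def:comm_preserv} on the union of supports) also fills in a detail the paper only asserts. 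One sub-claim at the end is wrong, though: the signed coloring sum is \emph{not} nonnegative. For $\mathbf{W}$ consisting of two overlapping terms, the partition into two singleton blocks has $\chi^*(1)=0$, $\chi^*(2)=2$, giving per-partition coefficient $-1$ under the $(-1)^{n-1}$ convention while the one-block partition gives $+1$; the coefficients alternate in sign with $|B|$. So you cannot literally drop absolute values --- the inequality must be read with $\bigl|\sum_{n}\frac{(-1)^{n}}{n}\chi^*(n,\text{Gra}(B))\bigr|$ per partition, which is exactly the form consumed by Lemma~\ref{lem:combinatorial_estimate}. This is harmless to the substance of the argument (the paper's own statement carries the same $(-1)^{n}$ versus $(-1)^{n-1}$ looseness and its downstream lemma takes absolute values), but your stated justification for the sign step is incorrect.
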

 The above lemma is a direct generalization of Lemma 3.11 from \cite{haah2022optimal} to density matrices under local unital channels.
\begin{proof}
    We first cluster expand $\mathcal{E}[\tilde{\rho}]$ 
\begin{align}
     \mathcal{E}[\tilde{\rho}] =\mathcal{E}[I]+ \sum_{k=1}^{\infty}  \sum_{\mathbf{W}:|\mathbf{W}| =k} \frac{\lambda_{\mathbf{W}}}{\mathbf{W}!}\mathcal{D}_\mathbf{W}\mathcal{E}[\tilde{\rho}] = I+ \sum_{k=1}^{\infty}  \sum_{\mathbf{W}:|\mathbf{W}| =k} \frac{\lambda_{\mathbf{W}}}{\mathbf{W}!}\mathcal{D}_\mathbf{W}\mathcal{E}[\tilde{\rho}]
\end{align}
Where in the second equality we use the unital property of the channel to have $\mathcal{E}[I]=I$. In general, $\mathbf{W}$ may be disconnected. We use $P_{max}(\mathbf{W})$ to denote the maximally connected subset of $\mathbf{W}$, namely the minimal partition that separates $\mathbf{W}$ into connected subsets. Using Lemma \ref{lem:conn_cluster}, we have
\begin{align}
   \mathcal{E}[\tilde{\rho}] =  I+ \sum_{k=1}^{\infty}  \sum_{\mathbf{W}:|\mathbf{W}| =k} \prod_{\mathbf{V} \in P_{max}(\mathbf{W})} (\frac{\lambda_{\mathbf{V}}}{\mathbf{V}!}\mathcal{D}_\mathbf{V}\mathcal{E}^{(\mathbf{V})}[\tilde{\rho}])
\end{align}
Where $\mathcal{E}^{(\mathbf{V})}$ was defined in Lemma \ref{lem:conn_cluster}.

Next, we apply the matrix logarithm expansion $\log(I+A)=\sum_{n=1}^{\infty} \frac{(-1)^{n-1}}{n}A^n$ to expand $\log(\mathcal{E}[\tilde{\rho_L}])$.
\begin{align}
    \log(\mathcal{E}[\tilde{\rho}]) &= \sum_{n=1}^{\infty} \frac{(-1)^{n-1}}{n} (\sum_{k=1}^{\infty}   \sum_{\mathbf{W}:|\mathbf{W}| =k} \frac{\lambda_{\mathbf{W}}}{\mathbf{W}!}\mathcal{D}_\mathbf{W}\mathcal{E}[\tilde{\rho}])^n \\
    &=\sum_{n=1}^{\infty} \frac{(-1)^{n-1}}{n} (\sum_{k=1}^{\infty}   \sum_{\mathbf{W}:|\mathbf{W}| =k} \prod_{\mathbf{V} \in P_{max}(\mathbf{W})} (\frac{\lambda_{\mathbf{V}}}{\mathbf{V}!}\mathcal{D}_\mathbf{V}\mathcal{E}^{(\mathbf{V})}[\tilde{\rho}]))^n \label{eq:log_expand}
\end{align}
To estimate $\mathcal{D}_{\mathbf{W}} \log(\mathcal{E}[\tilde{\rho}])$, we need to reorganize the above equation into a cluster expansion of $\log(\mathcal{E}[\tilde{\rho}])$, formally shown below.
\begin{align}\label{eq:log_expand_reorder}
    \log(\mathcal{E}[\tilde{\rho}]) &= \sum_{k=1}^{\infty}  \sum_{\mathbf{W} \in \mathcal{G}_k} \sum_{P \, \text{partitioning} \, \mathbf{W}}  C(P) \prod_{\mathbf{V} \in P} (\frac{\lambda_{\mathbf{V}}}{\mathbf{V}!}\mathcal{D}_\mathbf{V}\mathcal{E}^{(\mathbf{V})}[\tilde{\rho}])
\end{align}
Where $C(P)$ are coefficients that we will match to Eq. (\ref{eq:log_expand}). Here we will need the commutation-preserving property of the channel. $\mathcal{D}_\mathbf{V}\tilde{\rho}$ contain terms generated by $h_a$ (see Definition \ref{def:comm_preserv}). Since $\mathcal{E}^{(\mathbf{V})}$ is commutation-preserving, different $\mathcal{D}_\mathbf{V}\mathcal{E}^{(\mathbf{V})}[\tilde{\rho}]$ commutes, so we do not care about the ordering in the multiplication. When different $\mathcal{D}_\mathbf{V}\mathcal{E}^{(\mathbf{V})}[\tilde{\rho}]$ does not commute, mapping Eq. (\ref{eq:log_expand}) to Eq. (\ref{eq:log_expand_reorder}) requires shuffling the order of multiplying different $\mathcal{D}_\mathbf{V}\mathcal{E}^{(\mathbf{V})}[\tilde{\rho}]$. So far this seems to destroy the convergence of the generalized cluster expansion. However, we do not face this issue as long as $E^{(\mathbf{V})}$ are commutation-preserving.

As a reminder, $\mathcal{G}_m$ denotes the set of connected clusters with weight $k$. Note that we only sum over connected clusters since we know from Lemma \ref{lem:conn_log_cluster} that disconnected clusters do not contribute.

We will show how to reorganize Eq. (\ref{eq:log_expand}) into Eq. (\ref{eq:log_expand_reorder}). First, we expand the $n$-th power in Eq. (\ref{eq:log_expand}) as follows 
\begin{equation}
\begin{split}
    \log(\mathcal{E}[\tilde{\rho}]) 
    =\sum_{n=1}^{\infty} \frac{(-1)^{n-1}}{n}& \sum_{k_1=1}^{\infty}   \sum_{\mathbf{W}_1:|\mathbf{W}_1| =k_1}\sum_{k_2=1}^{\infty}   \sum_{\mathbf{W}_2:|\mathbf{W}_2| =k_2} \ldots \sum_{k_n=1}^{\infty}   \sum_{\mathbf{W}_n:|\mathbf{W}_n| =k_n} \\
    &\prod_{\mathbf{V}_{m,1} \in P_{max}(\mathbf{W}_1)} (\frac{\lambda_{\mathbf{V}_{m,1}}}{\mathbf{V}_{m,1}!}\mathcal{D}_{\mathbf{V}_{m,1}}\mathcal{E}^{(\mathbf{V}_{m,1})}[\tilde{\rho}]))  \\
\times &\prod_{\mathbf{V}_{m,2} \in P_{max}(\mathbf{W}_2)}(\frac{\lambda_{\mathbf{V}_{m,2}}}{\mathbf{V}_{m,2}!}\mathcal{D}_{\mathbf{V}_{m,2}}\mathcal{E}^{(\mathbf{V}_{m,2})}[\tilde{\rho}])) \\
\times \ldots  \times  &\prod_{\mathbf{V}_{m,n} \in P_{max}(\mathbf{W}_n)}(\frac{\lambda_{\mathbf{V}_{m,n}}}{\mathbf{V}_{m,n}!}\mathcal{D}_{\mathbf{V}_{m,n}}\mathcal{E}^{(\mathbf{V}_{m,n})}[\tilde{\rho}]))
\end{split}
\end{equation}
Where for each term in the $n$-th power we introduce a $k_n$, $\mathbf{W}_n$, and $\mathbf{V}_{m,n} \in P_{max}(\mathbf{W}_n)$. Next, we reorganize the summation over $k_n$, $\mathbf{W}_n$ according to the total cluster weight.
\begin{equation} \label{eq:log_expand_partition}
\begin{split}
    \log(\mathcal{E}[\tilde{\rho}]) 
    =\sum_{n=1}^{\infty} \frac{(-1)^{n-1}}{n}& \sum_{k=1}^{\infty}   \sum_{\{\mathbf{W}_i\}:\sum_{i=1}^{n}|\mathbf{W}_i| =k} \\
    &\prod_{\mathbf{V}_{m,1} \in P_{max}(\mathbf{W}_1)} (\frac{\lambda_{\mathbf{V}_{m,1}}}{\mathbf{V}_{m,1}!}\mathcal{D}_{\mathbf{V}_{m,1}}\mathcal{E}^{(\mathbf{V}_{m,1})}[\tilde{\rho}]))  \\
\times &\prod_{\mathbf{V}_{m,2} \in P_{max}(\mathbf{W}_2)}(\frac{\lambda_{\mathbf{V}_{m,2}}}{\mathbf{V}_{m,2}!}\mathcal{D}_{\mathbf{V}_{m,2}}\mathcal{E}^{(\mathbf{V}_{m,2})}[\tilde{\rho}])) \\
\times \ldots  \times  &\prod_{\mathbf{V}_{m,n} \in P_{max}(\mathbf{W}_n)}(\frac{\lambda_{\mathbf{V}_{m,n}}}{\mathbf{V}_{m,n}!}\mathcal{D}_{\mathbf{V}_{m,n}}\mathcal{E}^{(\mathbf{V}_{m,n})}[\tilde{\rho}]))
\end{split}
\end{equation}
Now each term corresponds to a cluster derivative of $\mathbf{W} = \cup_i \mathbf{W}_i$ with weight $k$, so the above equation formally corresponds to Eq. (\ref{eq:log_expand_reorder}), but with redundancies. In general, there could be different sets of $\{\mathbf{W}_i\}$ that give rise to the same $\mathbf{W}$ as they correspond to different cluster partitions $P$ of $\mathbf{W}$. We will count the number of cluster partitions and show that it is related to the graph coloring problem, which will help us determine the coefficient $C(P)$ in Eq. (\ref{eq:log_expand_reorder}).

Each term in Eq. (\ref{eq:log_expand_partition}) consists of cluster derivatives of a family of clusters $\{\mathbf{V}_{m,n}\}$ such that their union is a connected cluster $\mathbf{W}$ (we do not consider disconnected clusters because of Lemma \ref{lem:conn_log_cluster}, as shown in Eq. (\ref{eq:log_expand_reorder})). The set $\{\mathbf{V}_{m,n}\}$ corresponds to exactly the cluster partition $P$ in (\ref{eq:log_expand_reorder}).

Crucially, each $\mathbf{V}_{m,n}$ is connected and any $\mathbf{V}_{m,n}$ and $\mathbf{V}_{m',n}$ has to be disconnected. This is because because they belong to $P_{max}(\mathbf{W}_n)$, so $\mathbf{V}_{m,n}$ has to be connected by definition. Meanwhile, If $\mathbf{V}_{m,n}$ and $\mathbf{V}_{m',n}$ are connected, then one can construct a new partition that merges $\mathbf{V}_{m,n}$ and $\mathbf{V}_{m',n}$, thereby violating the condition of being a maximally connected subset. On the other hand, $\mathbf{V}_{m,n}$ and $\mathbf{V}_{m',n'}$ with $n \neq n'$ can in general be connected. The condition that $\mathbf{V}_{m,n}$ being connected implies that $P := \{\mathbf{V}_{m,n}\} \in \text{PaC}(\mathbf{W})$. In addition, the condition that $\mathbf{V}_{m,n}$ and $\mathbf{V}_{m',n}$ being disconnected correspond to exactly the graph coloring condition in $\text{Gra}(P)$: each value of $n$ is assigned a color and each node in $\text{Gra}(P)$, labeled by $\mathbf{V}_{m,n}$, is painted in the corresponding color. All nodes in the same color has to be mutually disconnected. 

On the other hand, there could be multiple graph colorings of $\text{Gra}(P)$ that correspond to the same $P$. This happens when any cluster multiplicity $\mu(\mathbf{W}_i) >1$, as permuting the color within the set of $\mathbf{W}_i$ gives rise to a different graph coloring but correspond to the same $P$. Therefore, each cluster partition $P$ corresponds to $P!$ graph colorings of $\text{Gra}(P)$. $C(P)$ then is the sum of all possible numbers of graph coloring of $\text{Gra}(P)$ with one, two, up to $|P|$ colors, divided by the $P!$ redundancy, and including the factor of $\frac{(-1)^n}{n}$ originating from the Taylor expansion of logarithm.
\begin{equation}
    C(P) = \frac{1}{P!} \sum_{n=1}^{|P|} \frac{(-1)^n}{n}\chi^*(n,\text{Gra}(P))
\end{equation}
Plugging it back to Eq. (\ref{eq:log_expand_reorder})
\begin{align}
    \log(\mathcal{E}[\tilde{\rho}]) &= \sum_{k=1}^{\infty}  \sum_{\mathbf{W}: |\mathbf{W}|=k} \sum_{P  \in \text{PaC}(\mathbf{W})}  \left(\frac{1}{P!} \sum_{n=1}^{|P|} \frac{(-1)^n}{n}\chi^*(n,\text{Gra}(P)\right) \prod_{\mathbf{V} \in P} (\frac{\lambda_{\mathbf{V}}}{\mathbf{V}!}\mathcal{D}_\mathbf{V}\mathcal{E}^{(\mathbf{V})}[\tilde{\rho}])
\end{align}
Summing over cluster partitions is equivalent to summing over graph partitions divided by the combinatorial factor of $\frac{\mathbf{W}!}{P! \prod_{m,n}\mathbf{V}_{m,n}!}$, so we have
\begin{align}
    \log(\mathcal{E}[\tilde{\rho}]) &= \sum_{k=1}^{\infty}  \sum_{\mathbf{W}: |\mathbf{W}|=k} \frac{1}{\mathbf{W}!} \left( \sum_{B  \in \text{PaC}(\text{Gra}(\mathbf{W}))}   \sum_{n=1}^{|B|} \frac{(-1)^n}{n}\chi^*(n,\text{Gra}(B))\right) \prod_{\mathbf{V} \in P} (\lambda_{\mathbf{V}}\mathcal{D}_\mathbf{V}\mathcal{E}^{(\mathbf{V})}[\tilde{\rho}])
\end{align}
Where the map from $B$ to $P$ is implicit (forgetting the label). Taking the cluster derivative selects the corresponding term
\begin{align}
    \mathcal{D}_{\mathbf{W}}\log(\mathcal{E}[\tilde{\rho}]) &= \left( \sum_{B  \in \text{PaC}(\text{Gra}(\mathbf{W}))}   \sum_{n=1}^{|B|} \frac{(-1)^n}{n}\chi^*(n,\text{Gra}(B))\right) \prod_{\mathbf{V} \in P} (\mathcal{D}_\mathbf{V}\mathcal{E}^{(\mathbf{V})}[\tilde{\rho}])
\end{align}
Now we can upper-bound the operator norm
\begin{align}
    \norm{\mathcal{D}_{\mathbf{W}}\log(\mathcal{E}[\tilde{\rho}])} &\le \left( \sum_{B  \in \text{PaC}(\text{Gra}(\mathbf{W}))}   \sum_{n=1}^{|B|} \frac{(-1)^n}{n}\chi^*(n,\text{Gra}(B)) \right) \norm{\prod_{\mathbf{V} \in P} (\mathcal{D}_\mathbf{V}\mathcal{E}^{(\mathbf{V})}[\tilde{\rho}])}\\
    &\le \left( \sum_{B  \in \text{PaC}(\text{Gra}(\mathbf{W}))}   \sum_{n=1}^{|B|} \frac{(-1)^n}{n}\chi^*(n,\text{Gra}(B)) \right) \prod_{\mathbf{V} \in P} \norm{\mathcal{D}_\mathbf{V}\mathcal{E}^{(\mathbf{V})}[\tilde{\rho}]}
\end{align}
Where in the second line we use the sub-multiplicity of the operator norm. It remains to upper-bound $\norm{\mathcal{D}_\mathbf{V}\mathcal{E}^{(\mathbf{V})}[\tilde{\rho}]}$. This is established in Lemma \ref{lem:deriv_bound}. Therefore, we have in the end
\begin{align}
    \norm{\mathcal{D}_{\mathbf{W}}\log(\mathcal{E}[\tilde{\rho}])} &\le \left( \sum_{B  \in \text{PaC}(\text{Gra}(\mathbf{W}))}   \sum_{n=1}^{|B|} \frac{(-1)^n}{n}\chi^*(n,\text{Gra}(B)) \right) \prod_{\mathbf{V} \in P} \beta^{|\mathbf{V}|} \\
    &\le \beta^{|\mathbf{W}|} \left( \sum_{B  \in \text{PaC}(\text{Gra}(\mathbf{W}))}   \sum_{n=1}^{|B|} \frac{(-1)^n}{n}\chi^*(n,\text{Gra}(B)) \right)
\end{align}
\end{proof}

The magnitude of the cluster derivative is upper-bounded by a sequence of combinatorial estimate of the graph coloring problem. We quote the result directly from \cite{haah2022optimal}. Ref. \cite{wild2023classical} also obtains a similar combinatorial estimate using the tutte polynomials.
\begin{lemma}
\label{lem:combinatorial_estimate}
For any cluster $\mathbf{W}$
\begin{align}
\sum_{B \in \text{PaC}(\text{Gra}(\mathbf{W}))} \left|\sum_{n=1}^{|B|} \frac{(-1)^n}{n}\chi(n,\text{Gra}(B)) \right| \le 2^{n-1} \tau(\text{Gra}(\mathbf{W})) \\
\le 2^{n-1} \prod_{a \in \text{node}(\text{Gra}(\mathbf{W}))} \text{deg}(a) \le \mathbf{W}! (2e(1+\mathfrak{d}))^{|\mathbf{W}|+1} 
\end{align}
Where $\text{Gra}(B)$ denotes the induced interaction graph of $B$, $\tau(G)$ denotes the number of spanning trees of $G$, and $\text{deg}(a)$ denotes the degree of vertex $a$.
\end{lemma}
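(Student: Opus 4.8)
The plan is to establish the stated chain of three inequalities, treating each link separately. Throughout write $G = \text{Gra}(\mathbf{W})$ and note that the number of nodes of $G$ equals $|\mathbf{W}|$, so the exponent ``$n$'' appearing in $2^{n-1}$ is to be read as $n = |\mathbf{W}|$. The first inequality is the genuinely hard one, since the inner signed sum $\sum_{n} \frac{(-1)^n}{n}\chi(n,\text{Gra}(B))$ carries large cancellations and must \emph{not} be bounded term-by-term; the remaining two are clean graph-theoretic estimates.

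\emph{First inequality.} The object $\sum_{n=1}^{|B|} \frac{(-1)^n}{n}\chi(n,\text{Gra}(B))$ is exactly the coefficient produced by expanding the matrix logarithm in the previous lemma, i.e. the connected (Mayer/Ursell) part of a chromatic generating function on the quotient interaction graph $\text{Gra}(B)$. I would recognize it through the Whitney/Penrose ``tree-graph'' identity: the signed sum over colorings reorganizes into a signed sum over connected spanning subgraphs of $\text{Gra}(B)$, and the Penrose tree-graph bound dominates each contribution by a spanning tree, with every surplus edge costing at most a factor $2$. Summing this bound over all partitions $B \in \text{PaC}(G)$ (equivalently, over all ways of gluing the connected pieces of $G$) telescopes into a single count of spanning trees of the full graph $G$, the accumulated edge-slack producing the prefactor $2^{|\mathbf{W}|-1}$. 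This yields $\sum_B |\cdots| \le 2^{|\mathbf{W}|-1}\tau(G)$.

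\emph{Second inequality.} To bound $\tau(G) \le \prod_{a}\text{deg}(a)$ I would use a direct encoding: root every spanning tree at a fixed vertex $r$ and record, for each non-root vertex $a$, the unique tree-edge joining $a$ to its parent. This assignment injects the set of spanning trees into $\prod_{a \neq r}\{\text{edges incident to }a\}$, whose cardinality is $\prod_{a \neq r}\text{deg}(a) \le \prod_a \text{deg}(a)$; multiplying through by $2^{|\mathbf{W}|-1}$ preserves the inequality.

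\emph{Third inequality.} Finally I would convert $\prod_a \text{deg}(a)$ into the factorial form. In $G = \text{Gra}(\mathbf{W})$ a node labelled by a term $h_a$ is adjacent only to the at most $\mathfrak{d}$ \emph{distinct} terms overlapping it, counted with their multiplicities inside $\mathbf{W}$; grouping the product over distinct support-types and applying Stirling's bound $k! \ge (k/e)^k$ lets the multiplicity factors be absorbed into $\mathbf{W}! = \prod_a \mu(a)!$, while the at-most-$(1+\mathfrak{d})$ distinct neighbours per node contribute the base $(1+\mathfrak{d})$. Collecting the surviving constants together with the earlier $2^{|\mathbf{W}|-1}$ into $(2e(1+\mathfrak{d}))^{|\mathbf{W}|+1}$ (the extra $+1$ in the exponent and the factor $e$ absorbing all lower-order slack) gives $2^{|\mathbf{W}|-1}\prod_a \text{deg}(a) \le \mathbf{W}!(2e(1+\mathfrak{d}))^{|\mathbf{W}|+1}$. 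The crux is the first inequality: the alternating chromatic sum must be controlled without discarding its cancellations, so the argument must route through the tree-graph identity to replace the signed coloring sum by a manifestly positive spanning-tree count. A secondary subtlety is the bookkeeping between graph partitions and cluster partitions inherited from the previous lemma — the $P!$ and $\mathbf{W}!/(P!\prod\mathbf{V}!)$ redundancy factors must be tracked so that summing over $B \in \text{PaC}(G)$ reproduces a spanning-tree count of the full graph $G$ rather than of its individual pieces.
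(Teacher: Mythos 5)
Your proposal is correct, but note first that the paper itself contains no argument for this lemma: its entire ``proof'' is a pointer to \cite{haah2022optimal} (the first inequality is their Lemma 3.12, the second comes from the proof of their Lemma 3.9, the third from the proof of their Proposition 3.8), so any self-contained derivation is necessarily your own. Your route differs from the cited one in the key first step: Haah et al.\ control $\hat\chi(H):=\sum_n \frac{(-1)^n}{n}\chi^*(n,H)$ through chromatic-polynomial identities --- via the Whitney rank expansion $\chi_H(q)=\sum_{A\subseteq E}(-1)^{|A|}q^{c(A)}$ one gets $\hat\chi(H)=-\chi_H'(0)$, and the linear chromatic coefficient is bounded by $\tau(H)$ using broken circuits --- whereas you identify the same number with the Ursell sum $\sum_{A\ \text{connected spanning}}(-1)^{|A|}$ and invoke the Penrose tree-graph bound; since $\chi_H'(0)$ \emph{is} that Ursell sum, the two are equivalent in strength, and your version has the advantage of plugging directly into standard cluster-expansion technology. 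One piece of bookkeeping in your sketch is misattributed, though the conclusion survives: Penrose's inequality is exactly $|u(H)|\le\tau(H)$, with no factor of $2$ per surplus edge; the prefactor $2^{|\mathbf{W}|-1}$ arises only afterwards, when summing over partitions, via the surjective encoding of a pair (partition $B$ into connected parts, spanning tree of $\text{Gra}(B)$) by a spanning tree $T$ of $\text{Gra}(\mathbf{W})$ together with a subset $S\subseteq E(T)$ of tree edges designated as quotient edges --- contracting $T\setminus S$ recovers $B$ and projects $S$ to a spanning tree of $\text{Gra}(B)$ --- and $|E(T)|=|\mathbf{W}|-1$ gives the count $2^{|\mathbf{W}|-1}\tau(\text{Gra}(\mathbf{W}))$; this is presumably what you mean by ``telescoping,'' but as written the factor-$2$ accounting would not survive scrutiny. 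Your parent-edge injection for $\tau(G)\le\prod_a\deg(a)$ (which, as a pedantic aside, fails only in the degenerate single-node case $|\mathbf{W}|=1$, a defect inherited from the statement itself) and the $t^k\le k!\,e^t$ absorption of multiplicities into $\mathbf{W}!$ for the final inequality coincide with the cited arguments, and you correctly resolve the statement's notational glitches, reading $2^{n-1}$ as $2^{|\mathbf{W}|-1}$ and $\chi$ as $\chi^*$.
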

The first inequality above comes from Lemma 3.12, the second inequality comes from the proof of Lemma 3.9, and the last inequality comes from the proof of Lemma Proposition 3.8 in \cite{haah2022optimal}.

\begin{proof}[Proof of Lemma \ref{lem:cluster_deriv_upper}]
First apply Lemma \ref{lem:graph_coloring}, then apply Lemma \ref{lem:combinatorial_estimate}.
\end{proof}

\subsection{Proof of Theorem \ref{thm:decay_cmi_informal} (formally Theorem \ref{thm:decay_cmi})}
By combining the exponential growth of the number of connected clusters (Lemma \ref{lem:num_conn_cluster}) and the exponential suppression of cluster derivatives' magnitude (Lemma \ref{lem:cluster_deriv_upper}), we are ready to establish the main theorem.

\begin{proof}[Proof of Theorem \ref{thm:decay_cmi}]
   We use Proposition \ref{prop:cmi_norm_bound} to relate CMI to $H(A:C|\mathcal{E}[B])$.
   \begin{equation}
       I(A:C|\mathcal{E}[B])  \le \norm{H(A:C|\mathcal{E}[B])} 
   \end{equation}
   To upper-bound $\norm{H(A:C|\mathcal{E}[B])}$, we take the cluster expansion of $H(A:C|\mathcal{E}[B])$.
    \begin{equation}
    H(A:C|\mathcal{E}[B]) = \sum_{\mathbf{W}} \frac{\lambda_{\mathbf{W}}}{\mathbf{W}!}\mathcal{D}_\mathbf{W} H(A:C|\mathcal{E}[B])
\end{equation}
Using Lemma \ref{lem:conn_AC}, the non-trivial clusters are those connecting $AC$ and are themselves connected. Denote the set of connected clusters that connect $AC$ by $\mathcal{G}^{AC}_m$. The minimal order of $m$ is $d_{AC}$
\begin{equation}
     H(A:C|\mathcal{E}[B]) = \sum_{m=d_{AC}}^{\infty} \sum_{\mathbf{W} \in \mathcal{G}^{AC}_m} \frac{\lambda_{\mathbf{W}}}{\mathbf{W}!}\mathcal{D}_\mathbf{W} H(A:C|\mathcal{E}[B])
\end{equation}
$H(A:C|\mathcal{E}[B])$ contains four terms that are all of the form $\mathcal{E}[\tilde{\rho}]$, where $\mathcal{E}$ is a product of local unital channels. This is because Proposition \ref{prop:trace_channel} shows the equivalence between the partial trace and the depolarization channel which is a product of local unital, commutation-preserving channels, and the composition of two products of local unital channels is also a product of local unital channels. Thus, we apply Lemma \ref{lem:cluster_deriv_upper} to upper-bound the norm of each term in the summation
\begin{align}
     \norm{H(A:C|\mathcal{E}[B])} &\le 4 \sum_{m=d_{AC}}^{\infty} \sum_{\mathbf{W} \in \mathcal{G}^{AC}_m} (2e(\mathfrak{d}+1)\beta)^{m+1}
\end{align}
Following Lemma \ref{lem:num_conn_cluster}, the number of clusters in $\mathcal{G}^{AC}_m$ is upper-bounded by $|\mathcal{G}^{AC}_m| \le c \, \text{min}(|\partial A|, |\partial C|) e\mathfrak{d}(1 + e(\mathfrak{d} - 1))^{m-1} $.\begin{align}
    \norm{H(A:C|\mathcal{E}[B])} &\le c \, \text{min}(|\partial A|, |\partial C|) \sum_{m=d_{AC}}^{\infty}  (1 + e(\mathfrak{d} - 1))^{m-1} (2e(\mathfrak{d}+1)\beta)^{m+1}
\end{align}
Where we absorb the factor of four into $c$. The above series converges absolutely when $\beta \le \beta_c = 1/(2 e (\mathfrak{d}+1)(1+e(\mathfrak{d}-1)))$. Furthermore, the leading order term is $O((\frac{\beta}{\beta_c})^{d_{AC}})$, so we have in the end,
\begin{align}
I(A:C|\mathcal{E}[B]) \le \norm{H(A:C|\mathcal{E}[B])} \le c \, \text{min}(|\partial A|, |\partial C|) e^{\frac{d_{AC}}{\xi}}
\end{align}
\end{proof}

\section{Finite Markov Length in Classical Gibbs Distribution at High Temperature}\label{classical_high_temp_appn}
In this section, we prove the finite Markov length in classical Gibbs distributions subject to arbitrary local transition matrices. 

We again consider an operator basis $\{D_\alpha\}$, where each $D_\alpha$ is diagonal in the computational basis, such that $\{D_\alpha\}$ generates all diagonal operators in the $q$ dimensional Hilbert space. We will assume that each operator has the operator norm $\norm{P_a} \le 1$. One example is $\{I,Z\}$ when the local Hilbert space is a qubit. We state our result formally below.
\begin{theorem}\label{thm:classical_decay_cmi}
    Given a Hamiltonian $H=\sum_a \lambda_a h_a$, where each $h_a$ is a tensor product of operators in $\{D_\alpha\}$ and $|\lambda_a| \le 1, \forall a$. Suppose every site supports at most $\mathfrak{d}$ terms in $H$. Let $\rho = \frac{1}{Z}e^{-\beta H}$ be the Gibbs distribution at temperature $\beta$. Consider arbitrary subsystems $XYZ$. Let $\mathcal{T} = \otimes_i \mathcal{T}_i$ be a product of transition matrix $\mathcal{T}_i$ each acting on site $i$ in $Y$. We demand that each $\mathcal{T}_i$ has non-zero matrix elements.

When $\beta \le \beta_c = {1/(e (\mathfrak{d}+1)(1+e(\mathfrak{d}-1)))}$, we have the following decay of CMI in $\mathcal{T}[\rho]$:
    \begin{equation}
        I(X:Z|\mathcal{T}[Y]) \le \, c \min(|\partial X|, |\partial Z|) e^{-\frac{d_{XZ}}{\xi}}
    \end{equation}
Where $c$ is an $O(1)$ constant, $|\partial X|$ denotes the boundary size of $X$, defined as the number of terms in $H$ that are supported both inside and outside $X$. $|\partial Z|$ is similarly defined. $d_{XZ}$ is the distance between $X$ and $Z$, defined as the minimal weight of any connected cluster that connects $X$ and $Z$. $\xi = O(\frac{\beta}{\beta_C})$ gives the Markov length.
\end{theorem}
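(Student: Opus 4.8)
The plan is to reduce the classical statement to the machinery already developed for Theorem~\ref{thm:decay_cmi}, exploiting a property special to classical systems: conditional mutual information decomposes as an average of post-selected two-party mutual informations. Writing $Y'=\mathcal{T}[Y]$ for the post-channel variable and $q=\mathcal{T}[\rho]$, I would first invoke the identity
\begin{equation}
    I_{q}(X:Z|Y') = \sum_{y'} q(y')\, I_{q(\cdot\,|\,y')}(X:Z),
\end{equation}
where $q(\cdot\,|\,y')$ is the distribution of $(X,Z)$ conditioned on the event $Y'=y'$. Since transition matrices on $X$ or $Z$ only decrease CMI by data processing, it suffices to treat $\mathcal{T}$ supported on $Y$, as in the theorem. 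The value of this decomposition is that each summand is an \emph{ordinary} mutual information, for which unitality of the map on $Y$ is no longer needed; this is precisely how the unitality restriction of Theorem~\ref{thm:decay_cmi} gets circumvented.

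Second, I would identify each post-selected marginal as the $XZ$-reduction of a Gibbs state built from the \emph{same} inter-site couplings. Because $H$ is diagonal and $\mathcal{T}=\otimes_{i\in Y}\mathcal{T}_i$ factorizes, $q(x,z|y') \propto \sum_y e^{-\beta H(x,y,z)}\prod_{i\in Y}\mathcal{T}_i(y'_i|y_i)$. Normalizing the weights $\pi^{(y')}_i(y_i)\propto \mathcal{T}_i(y'_i|y_i)$ into single-site probability measures on $Y$ (here the hypothesis that every $\mathcal{T}_i$ has strictly positive entries guarantees each $\pi^{(y')}_i$ is fully supported, so all marginals stay full rank and their logarithms are well defined), this exhibits $q(\cdot|y')$ as $\mathcal{E}^{(y')}[\tilde{\rho}]$, where $\tilde{\rho}=e^{-\beta H}$ is governed by the original couplings $|\lambda_a|\le 1$ and $\mathcal{E}^{(y')}=\otimes_{i\in Y}\mathcal{E}^{(y')}_i$ is the product over $Y$ of the $\pi^{(y')}_i$-weighted partial traces. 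The key point is that the entire effect of the non-unital transition matrix is absorbed into the \emph{weights} of this trace-out map, not into a large single-site field in the Hamiltonian.

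Third, I would re-run the proof of Theorem~\ref{thm:decay_cmi} with $\mathcal{E}^{(y')}$ in place of the unital channel to bound $I_{q(\cdot|y')}(X:Z)=I(X:Z|\mathcal{E}^{(y')}[Y])$. The only property of the map on $Y$ actually used in that proof is norm-contractivity (Lemma~\ref{lem:unital_channel_op_norm}), entering through Lemma~\ref{lem:deriv_bound}, together with commutation-preservation. A $\pi^{(y')}_i$-weighted partial trace is a probability-weighted average of pinchings, hence contractive, $\|\mathcal{E}^{(y')}_i[A]\|\le\sum_{y_i}\pi^{(y')}_i(y_i)\,\|A\|=\|A\|$, and commutation-preservation is automatic since all operators are diagonal. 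Thus Lemma~\ref{lem:conn_AC} still forces only connected clusters joining $X$ and $Z$ to contribute, Lemma~\ref{lem:cluster_deriv_upper} still bounds each cluster derivative by $(2e(\mathfrak{d}+1)\beta)^{|\mathbf{W}|+1}$, and Lemma~\ref{lem:num_conn_cluster} still controls the cluster count. This yields $I_{q(\cdot|y')}(X:Z)\le c\,\min(|\partial X|,|\partial Z|)\,e^{-d_{XZ}/\xi}$ for $\beta\le\beta_c$, with constants depending only on $\mathfrak{d}$ and $\beta$ and in particular \emph{uniform} in $y'$, because the contractivity estimate is independent of the weights. Averaging over $y'$ with $\sum_{y'}q(y')=1$ then gives the theorem.

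The main obstacle is exactly the non-unitality that the weighted-trace reformulation dissolves. A naive route that treats $q(\cdot|y')$ as the Gibbs state of the shifted Hamiltonian $\tilde{H}^{(y')}=H-\tfrac{1}{\beta}\sum_{i\in Y}\log\mathcal{T}_i(y'_i|\cdot)$ and cluster-expands in \emph{all} of its couplings fails: the added single-site fields can be arbitrarily large when $\mathcal{T}_i$ has small entries, and would destroy the convergence of the connected-cluster sum. The resolution is to verify that these fields never need to be expanded---they appear only as the weights of a contractive channel, so the expansion runs over the original bounded couplings while the fields affect only an overall normalization constant that cancels in the balanced combination defining the mutual-information operator (Proposition~\ref{prop:cmi_norm_bound}). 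Checking that the weighted partial trace slots cleanly into Proposition~\ref{prop:trace_channel} and Lemmas~\ref{lem:deriv_bound}--\ref{lem:cluster_deriv_upper} with all constants uniform in $y'$ is the one step demanding genuine, if routine, care.
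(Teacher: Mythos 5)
Your proposal is correct and follows essentially the same route as the paper: the post-selection decomposition is the paper's Proposition \ref{prop:post_select}, and your $\pi^{(y')}_i$-weighted partial trace is precisely the paper's pinned-Hamiltonian construction (Lemma \ref{lem:pinned_gibbs}) with the factors $\frac{q}{Z_i}e^{-d^{(i)}}$ composed into the uniform depolarizing trace channel, including the crucial observation---stated explicitly in the paper as well---that the potentially unbounded single-site fields are never cluster-expanded but enter only as weights, while the expansion runs over the original bounded couplings $\lambda_a$. Your uniform-in-$y'$ contractivity estimate is the content of the paper's Lemma \ref{lem:pinned_deriv_bound}, after which the connectedness and counting arguments (Lemmas \ref{lem:pinned_conn_AC}, \ref{lem:pinned_cluster_deriv_upper}, and \ref{lem:num_conn_cluster}) proceed exactly as you describe.
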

The requirement that $\mathcal{T}_i$ has non-zero matrix element is technical but does not impose any practical restrictions. To improve the result to generic $\mathcal{T}_i$ with possibly zeros in matrix elements, one can simply mix $\mathcal{T}$ with an infinitesimally small amount of depolarizing channels. This operation only perturbs the CMI by an infinitesimal amount, so the upper bound still holds.

\subsection{Converting to Pinned Hamiltonian}
The proof is based on a post-selection trick and a variant of the generalized cluster expansion. First, we observe that the classical CMI can be written as the average of post-selected mutual information. The mutual information of a distribution $P(XZ)$ is defined as
\begin{equation}
    I(X,Z)=S(X) + S(Z) - S(XZ)
\end{equation}
Where $S(X)$, $S(Z)$, and $S(XZ)$ denote the Shannon entropy on $X$, $Z$, and $XZ$.
\begin{proposition}
   \label{prop:post_select}
    Given three random variables $X$, $W$, $Z$, their CMI has the following decomposition:
    \begin{equation}
        I(X:Z|W) = \sum_w P(W=w) I(X:Z|W=w)
    \end{equation}
    Where we use $w$ to label the values of $W$. $P(W=w)$ denotes the marginal probability that $W=w$, and $I(X:Z|W=w)$ denotes the mutual information of the distribution on $XZ$ conditioned on $W=w$.
\end{proposition}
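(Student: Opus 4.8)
The plan is to reduce the claim to two elementary facts about Shannon entropy: that the classical CMI is a difference of conditional entropies, and that a conditional entropy is by construction a $P(w)$-weighted average of post-selected entropies. Writing $S(X|W) = S(XW) - S(W)$ for the conditional entropy, I would first regroup the defining expression
\begin{equation}
I(X:Z|W) = S(XW) + S(ZW) - S(W) - S(XZW)
\end{equation}
into $I(X:Z|W) = S(X|W) - S(X|ZW)$, using $S(X|W) = S(XW)-S(W)$ and $S(X|ZW) = S(XZW) - S(ZW)$.

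Next I would expand each conditional entropy as an explicit sum over the value $w$ of $W$. Factoring the joint law as $P(x,w) = P(w)P(x|w)$ gives
\begin{equation}
S(X|W) = -\sum_{x,w} P(x,w)\log P(x|w) = \sum_{w} P(w)\, S(X|W=w),
\end{equation}
where $S(X|W=w)$ is the entropy of the conditional distribution $P(X|W=w)$. For the second term I would factor $P(x,z,w) = P(w)P(z|w)P(x|z,w)$ and collect the inner sums over $x$ and $z$, obtaining
\begin{equation}
S(X|ZW) = \sum_{w} P(w)\sum_{z} P(z|w)\Big(-\sum_x P(x|z,w)\log P(x|z,w)\Big) = \sum_w P(w)\, S(X|Z,W=w),
\end{equation}
where $S(X|Z,W=w)$ denotes the entropy of $X$ conditioned on $Z$ within the post-selected ensemble $P(XZ|W=w)$.

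Finally I would subtract the two displays, pull out the common weight $P(w)$, and identify the bracketed difference with the post-selected mutual information,
\begin{equation}
S(X|W=w) - S(X|Z,W=w) = I(X:Z|W=w),
\end{equation}
which follows from $S(XZ|W=w) = S(Z|W=w) + S(X|Z,W=w)$ applied to the conditional distribution on $XZ$. This yields $I(X:Z|W) = \sum_w P(w)\, I(X:Z|W=w)$ as claimed. There is no serious obstacle, since the argument is entirely definitional; the only step requiring care will be the factorization of $S(X|ZW)$, where one must keep the averaging over $w$ outside while treating $Z$ as an additional conditioning variable inside each post-selected block. I emphasize this decomposition precisely because it is the feature exploited in the following subsections: it lets us analyze the classical CMI through ordinary mutual informations of Gibbs distributions pinned to fixed values on $Y$, which is what allows us to dispense with the unitality restriction required in the quantum case.
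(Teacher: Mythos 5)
Your proof is correct. The paper in fact states Proposition \ref{prop:post_select} without proof, treating it as a standard identity of classical information theory; your derivation---rewriting $I(X:Z|W)=S(X|W)-S(X|ZW)$, expanding each conditional entropy as a $P(w)$-weighted average over post-selected blocks, and identifying the blockwise difference with $I(X:Z|W=w)$---is precisely the standard argument that fills this gap, and every step checks out. Your closing remark also correctly identifies why this decomposition matters downstream: it is the step that converts the classical CMI into ordinary mutual informations of pinned Gibbs distributions, circumventing the unitality restriction of the quantum proof.
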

We will set $W=\mathcal{T}[Y]$ and try to bound each post-selected mutual information, denoted as $I(X:Z|\mathcal{T}[Y]=y)$. The key step, stated in the next proposition, is to realize that classical hidden Markov networks become Gibbs distributions after post-selections.

\begin{figure*}
\includegraphics[width=\linewidth]{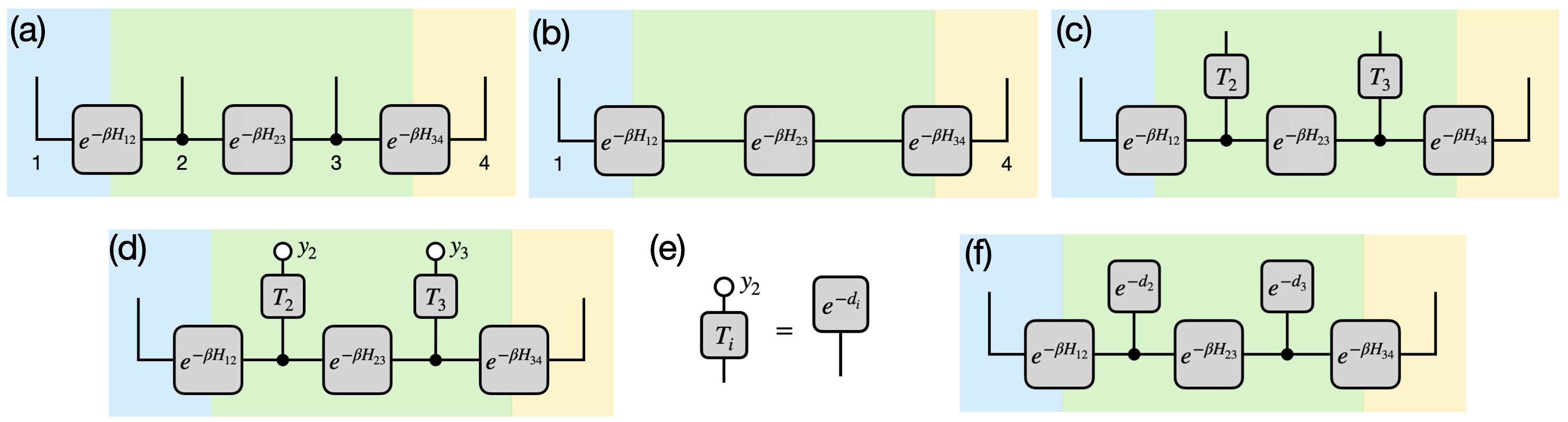}
\caption{\label{fig:factor_graph}(a) The factor graph of a Gibbs states with four bits in a line and nearest-neighbor interactions. We partition the four bits into $\textcolor{Cerulean}{X}$, $\textcolor{Green}{Y}$, and $\textcolor{Goldenrod}{Z}$.  (b) The marginal distribution on $\textcolor{Cerulean}{X}\textcolor{Goldenrod}{Z}$. (c) The Gibbs distribution is subject to two transition matrices $T_2$ and $T_3$ acting on the two bits in $\textcolor{Green}{Y}$. (d) The distribution on $\textcolor{Cerulean}{X}\textcolor{Goldenrod}{Z}$ conditioned on post-selecting on $y_2y_3$ in $\textcolor{Green}{Y}$. (e) Identifying the post-selected transition matrix as a new term in the factor graph. (f) The post-selected marginal distribution on $\textcolor{Cerulean}{X}\textcolor{Goldenrod}{Z}$ becomes the marginal distribution of a new Gibbs distribution, after applying the identification in (e).}
\end{figure*}

\begin{lemma}\label{lem:pinned_gibbs}
 Given a Hamiltonian $H=\sum_a \lambda_a h_a$, where each $h_a$ is diagonal in the computational basis. Let $\rho = \frac{1}{Z}e^{-\beta H}$ be the Gibbs distribution at temperature $\beta$. Let $\mathcal{T} = \otimes_i \mathcal{T}_i $ be a product of single-site transition matrix $\mathcal{T}_i$ acting on $Y$ such that all matrix elements are non-zero. $\mathcal{T}[\rho]$ denotes the distribution after applying the transition matrix. Let $\bar{Y}$ be the complementary region of $Y$. Let $\mathcal{T} [\rho]\bigr|_{y}$ be the distribution on $\bar{Y}$ conditioned on $Y=y$. In other words, the diagonal elements of $\mathcal{T} [\rho]\bigr|_{y}$ encode the probability distribution $P(\mathcal{T}[Y]=y,\bar{Y})$. Then,
 \begin{equation}
     \mathcal{T} [\rho]\bigr|_{y} \propto \text{Tr}_Y[e^{-H^{(y)}}]
 \end{equation}
 Where we have $H^{(y)} = \beta H + \sum_{i \in Y} d^{(i)}$. $d^{(i)}$ is some Hermitian matrix supported on site $i$ that is diagonal in the computational basis, determined by $\mathcal{T}$ and $y$.
\end{lemma}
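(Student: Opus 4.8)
The plan is to work entirely with the diagonal (classical) probability weights and to show that post-selecting on a fixed transition-matrix output $y$ reweights each configuration on $Y$ by a strictly positive factor, which can be absorbed into the Hamiltonian as a single-site diagonal field. First I would write a configuration of the full system as $(\mathbf{x},\mathbf{v})$, where $\mathbf{x}$ labels the computational basis state on $\bar Y$ and $\mathbf{v}$ labels it on $Y$, so that the Gibbs weight is $\langle \mathbf{x},\mathbf{v}|e^{-\beta H}|\mathbf{x},\mathbf{v}\rangle = e^{-\beta H(\mathbf{x},\mathbf{v})}$ with $H$ diagonal. Applying the product of transition matrices and reading off the joint weight of an output $y$ on $Y$ together with the configuration $\mathbf{x}$ on $\bar Y$ gives
\begin{equation}
P(\mathbf{x},y) \propto \sum_{\mathbf{v}} e^{-\beta H(\mathbf{x},\mathbf{v})} \prod_{i \in Y} \mathcal{T}_i(y_i|v_i),
\end{equation}
where $\mathcal{T}_i(y_i|v_i)$ is the matrix element of $\mathcal{T}_i$ mapping input $v_i$ to output $y_i$. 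The conditional distribution $\mathcal{T}[\rho]\bigr|_{y}$ is exactly this expression, up to normalization.

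The key step is that, because every matrix element of $\mathcal{T}_i$ is strictly positive, I can define $d^{(i)}(v_i) := -\log \mathcal{T}_i(y_i|v_i)$, a finite real number for each value $v_i$ and each fixed $y_i$. Interpreted as a diagonal operator on site $i$, $d^{(i)}$ is Hermitian and supported on that single site, and depends only on $\mathcal{T}_i$ and the pinned outcome $y$. Substituting $\mathcal{T}_i(y_i|v_i) = e^{-d^{(i)}(v_i)}$ converts the product of transition weights into $e^{-\sum_{i \in Y} d^{(i)}(v_i)}$. Since $H$ and all the $d^{(i)}$ are mutually commuting diagonal operators, their exponentials combine, so the summand equals $\langle \mathbf{x},\mathbf{v}|e^{-H^{(y)}}|\mathbf{x},\mathbf{v}\rangle$ with $H^{(y)} = \beta H + \sum_{i\in Y} d^{(i)}$.

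Finally, I would identify the remaining sum over configurations $\mathbf{v}$ on $Y$ with the partial trace over $Y$: summing the diagonal matrix elements of $e^{-H^{(y)}}$ over all $\mathbf{v}$ while holding $\mathbf{x}$ fixed gives $\langle \mathbf{x}|\text{Tr}_Y[e^{-H^{(y)}}]|\mathbf{x}\rangle$, which establishes $\mathcal{T}[\rho]\bigr|_{y} \propto \text{Tr}_Y[e^{-H^{(y)}}]$. I expect the only genuine subtlety to be the strict-positivity requirement on the transition-matrix entries: it is precisely what guarantees that each $d^{(i)}$ is finite, since a vanishing entry would correspond to an infinite pinning field. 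As noted in the main text, this is harmless in practice, because mixing $\mathcal{T}$ with an infinitesimal depolarizing channel restores positivity while perturbing the CMI only infinitesimally.
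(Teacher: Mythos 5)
Your proposal is correct and follows essentially the same route as the paper's proof: both write the post-selected weight as a sum over the dummy configuration on $Y$, absorb the transition-matrix entries via $e^{-d^{(i)}} = \mathcal{T}_i(y_i,\cdot)$ (with strict positivity guaranteeing finiteness of the pinning fields), and identify the remaining sum with $\text{Tr}_Y[e^{-H^{(y)}}]$. The only differences are notational (your $\mathbf{v}$ versus the paper's $y'$, and writing $e^{-\beta H(\mathbf{x},\mathbf{v})}$ directly rather than as a product over terms $h_a$), so nothing further is needed.
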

The above theorem states that the post-selected marginal distribution on $\bar{Y}$ is equal to the marginal distribution of another Gibbs distribution with a Hamiltonian $H^{(y)}$. In this paper, $d^{(i)}$ is always determined by $\mathcal{T}$ and $y$, so we suppress the dependence in notation and make the dependence implicit.
\begin{proof}
We will construct $d^{(i)}$ and show the equivalence explicitly. First, we write down the probability $P(Y,\bar{Y})$.
\begin{equation}
    P(Y=y,\bar{Y}=\bar{y}) \propto \prod_{a} e^{-\beta h_a(y,\bar{y})}
\end{equation}
We now apply the channel $\mathcal{T}$ and denote the resulting distribution as $P(\mathcal{T}[Y],\bar{Y})$, expressed as
\begin{equation}
    P(\mathcal{T}[Y]=y,\bar{Y}=\bar{y}) \propto \sum_{y'} \prod_{i \in Y} \mathcal{T}_i(y_i,y_i') \prod_{a} e^{-\beta h_a(y',\bar{y})}
\end{equation}
Where we introduce the dummy variable $y'$ that will be summed over. $y_i$ and $y_i'$ denote the value of $y$ and $y'$ on site $i$, and $\mathcal{T}_i(y_i,y_i')$ denotes the matrix element of $\mathcal{T}_i$ between $y_i$ and $y_i'$.

We will identify the dummy variable $y'$ to be the new $Y$ variable in $e^{-H^{(y)}}$ and identify $y_i$ as a parameter. We set $d^{(i)}$ so that $e^{-d^{(i)}(y_i')} = \mathcal{T}_i(y_i,y_i')$, where we use $d^{(i)}(y_i')$ to denote the value of $d^{(i)}$ under the specified $y_i'$. Since $\mathcal{T}_i(y_i,y_i')$ is non-zero, $e^{-d^{(i)}(y_i')}$ is always well-defined. Under the above notation, one can see that
\begin{equation}
    P(\mathcal{T}[Y]=y,\bar{Y}=\bar{y}) \propto \sum_{y'} \prod_{i \in Y} e^{-d^{(i)}(y_i')} \prod_{a} e^{-\beta h_a(y',\bar{y})}
\end{equation}
The left-hand side is $\mathcal{T} [\rho]\bigr|_{y}$ by definition, while the right-hand side is exactly the matrix element of $\text{Tr}_Y[e^{-H^{(y)}}]$.
\end{proof}
The above lemma states that the post-selected marginal distribution on $\bar{Y}$ is identical to the marginal distribution of a new Gibbs distribution where the Hamiltonian contains additional local ``pinning'' terms. This is a crucial step in our proof because while in general $\mathcal{T}[\rho]$ is not a Gibbs distribution of a local Hamiltonian, we are able to recover the locality through post-selection. This allows us to apply the generalized cluster expansion on $e^{-H^{(y)}}$, thereby bounding the post-selected mutual information between $X$ and $Z$.

The above Lemma can be visualized using factor graphs, a common tool in visualizing graphical models. Fig. \ref{fig:factor_graph}(a) shows the factor graph of a four-site Gibbs distribution with nearest-neighbor interactions $H_{12}$, $H_{23}$, and $H_{34}$ We partition site one as $X$, site four as $Z$, and site two and three as $Y$. To represent the marginal distribution on site one and four, we simply remove the delta tensors on site two and three, shown in Fig. \ref{fig:factor_graph}(b). Fig. \ref{fig:factor_graph}(c) shows distribution after apply channels on site two and three, that is $\mathcal{T}[\rho]$. Fig. \ref{fig:factor_graph}(d) shows the marginal distribution on site one and four after post-selecting site two and three, that is $\mathcal{T}[\rho]\bigr|_{y}$. In  Fig.\ref{fig:factor_graph}(e), we identify the channel tensor, after post-selection, with the pinning terms.  This is essentially the step where we identify $e^{-d^{(i)}(y_i')} = \mathcal{T}_i(y_i,y_i')$ in the proof. Finally, after the identification, the post-selected marginal distribution on site one and two becomes the marginal distribution of a new Gibbs distribution with pinning terms, shown in Fig.\ref{fig:factor_graph}(f).

We will denote $\rho^{(y)} \propto e^{-H^{(y)}} $. Given Lemma \ref{lem:pinned_gibbs}, we can compute $I(X:Z|\mathcal{T}[Y]=y)$ by computing the mutual formation between $X$ and $Z$ in $\rho^{(y)}$. To reuse some of the machinery we have developed for CMI, we will think of the mutual information $I_{\rho^{(y)}} (X:Z)$ as the CMI $I_{\mathcal{E}_{Y}^{Tr}[\rho^{(y)}]} (X:Z|Y)$ where $\mathcal{E}_{Y}^{Tr}$ fully depolarizes $Y$.
\begin{proposition}
    \label{prop:pinned_mi_cmi}
    the mutual information is equivalent to CMI after fully depolarizing $Y$.
    \begin{equation}
       I(X:Z|\mathcal{T}[Y]=y) =  I_{\rho^{(y)}} (X:Z) = I_{\mathcal{E}_{Y}^{Tr}[\rho^{(y)}]} (X:Z|Y)
    \end{equation}
    Where $I(X:Z|\mathcal{T}[Y]=y)$ is defined in Proposition \ref{prop:post_select}
\end{proposition}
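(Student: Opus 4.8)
The plan is to prove the two equalities separately, in each case reducing the claim to a statement about the marginal distribution on $XZ$.

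For the first equality $I(X:Z|\mathcal{T}[Y]=y) = I_{\rho^{(y)}}(X:Z)$, I would invoke Lemma \ref{lem:pinned_gibbs} directly. By definition, $I(X:Z|\mathcal{T}[Y]=y)$ is the mutual information of the distribution on $\bar{Y}=XZ$ obtained by conditioning on the event $\mathcal{T}[Y]=y$, i.e. the \emph{normalized} version of $P(\mathcal{T}[Y]=y,\bar{Y})$, which is the diagonal of $\mathcal{T}[\rho]\bigr|_y$. Lemma \ref{lem:pinned_gibbs} identifies $\mathcal{T}[\rho]\bigr|_y$ with $\text{Tr}_Y[e^{-H^{(y)}}]$ up to an overall scalar, and this is in turn proportional to the marginal $\text{Tr}_Y[\rho^{(y)}]$ of $\rho^{(y)}\propto e^{-H^{(y)}}$ on $XZ$. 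Since mutual information depends only on the normalized distribution on $XZ$ and is unchanged by an overall rescaling, the conditional distribution $P(\bar{Y}\mid \mathcal{T}[Y]=y)$ and the normalized $XZ$-marginal of $\rho^{(y)}$ are the \emph{same} distribution, so their mutual informations agree. This yields $I(X:Z|\mathcal{T}[Y]=y)=I_{\rho^{(y)}}(X:Z)$.

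For the second equality $I_{\rho^{(y)}}(X:Z) = I_{\mathcal{E}_{Y}^{Tr}[\rho^{(y)}]}(X:Z|Y)$, the key point is that the complete depolarizing channel sends $\rho^{(y)}$ to the product $\rho^{(y)}_{XZ}\otimes \pi_Y$, where $\rho^{(y)}_{XZ}=\text{Tr}_Y[\rho^{(y)}]$ and $\pi_Y=I_Y/d_Y$ is maximally mixed on $Y$. I would then expand the CMI via $I(X:Z|Y)=S(XY)+S(ZY)-S(Y)-S(XYZ)$. On a state where $Y$ sits in tensor product with $XZ$, each of these entropies splits into the corresponding $XZ$-entropy plus $\log d_Y$; the four copies of $\log d_Y$ cancel, leaving $S(X)+S(Z)-S(XZ)=I(X:Z)$ evaluated on $\rho^{(y)}_{XZ}$, which is precisely $I_{\rho^{(y)}}(X:Z)$.

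I expect no genuine obstacle: the content is bookkeeping with normalization and the additivity of entropy across a decoupled tensor factor. The single point deserving care is in the first step, where conditioning on $\mathcal{T}[Y]=y$ divides the joint by the scalar $P(\mathcal{T}[Y]=y)$; one must confirm this produces exactly the normalized $XZ$-marginal of $\rho^{(y)}$ rather than something off by a $y$-dependent factor. This is settled by the normalization-invariance of mutual information, so the scalar is irrelevant. The recasting as a CMI in the final equality is purely for convenience, so that the apparatus developed for bounding $H(X:Z|\mathcal{E}[Y])$ under products of fully-depolarizing channels can be applied to the pinned Gibbs state $\rho^{(y)}$.
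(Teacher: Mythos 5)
Your proof is correct, and it matches the route the paper intends: the paper in fact states Proposition \ref{prop:pinned_mi_cmi} without an explicit proof, treating as immediate exactly the two observations you spell out---that by Lemma \ref{lem:pinned_gibbs} the conditioned distribution on $XZ$ coincides (after normalization, which mutual information ignores) with the $XZ$-marginal of $\rho^{(y)}$, and that fully depolarizing $Y$ yields $\rho^{(y)}_{XZ}\otimes I_Y/d_Y$, on which the four entropies in $I(X{:}Z|Y)$ split so the $\log d_Y$ contributions cancel. Your handling of the $y$-dependent normalization scalar is the right point of care, and nothing further is needed.
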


We will upper-bound $I_{\mathcal{E}_{Y}^{Tr}[\rho^{(y)}]} (X:Z|Y)$, thereby upper-bounding $I(X:Z|\mathcal{T}[Y]=y)$. We use a similar technique from the last section to upper-bound $I_{\mathcal{E}_{Y}^{Tr}[\rho^{(y)}]} (X:Z|Y)$.

\begin{proposition}
\label{prop:pinned_mi_norm_bound}
$I_{\mathcal{E}^{Tr}[\rho^{(y)}]}(X:Z|Y)$ is upper-bounded by $\norm{H^{(y)}(X:Z|\mathcal{E}_Y^{Tr}[Y])}$.
    \begin{equation}
        I_{\mathcal{E}^{Tr}[\rho^{(y)}]}(X:Z|Y) \le \norm{H^{(y)}(X:Z|\mathcal{E}_Y^{Tr}[Y])}
    \end{equation}
\end{proposition}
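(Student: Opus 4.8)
The plan is to obtain this bound as the pinned, classical specialization of Proposition~\ref{prop:cmi_norm_bound}, taking the product channel there to be the full depolarization $\mathcal{E}_Y^{Tr} = \otimes_{i\in Y}\mathcal{E}_i^{Tr}$ and the Gibbs state to be $\rho^{(y)}\propto e^{-H^{(y)}}$, which by Lemma~\ref{lem:pinned_gibbs} is again the exponential of a local, diagonal Hamiltonian. Concretely, I would first define, in exact parallel with the quantum case,
\begin{equation}
    H^{(y)}(X:Z|\mathcal{E}_Y^{Tr}[Y]) = \log(\mathcal{E}_Y^{Tr}[\tilde{\rho}^{(y)}_{XY}]) + \log(\mathcal{E}_Y^{Tr}[\tilde{\rho}^{(y)}_{YZ}]) - \log(\mathcal{E}_Y^{Tr}[\tilde{\rho}^{(y)}_{Y}]) - \log(\mathcal{E}_Y^{Tr}[\tilde{\rho}^{(y)}_{XYZ}]),
\end{equation}
where $\tilde{\rho}^{(y)}_L = \text{Tr}_{L_c}(e^{-H^{(y)}})\otimes I_{L_c}$. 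Since $e^{-H^{(y)}}$ is diagonal with strictly positive entries, every reduced state is diagonal and full-rank, so each matrix logarithm is well-defined.

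The key step is to reproduce, for the channel $\mathcal{E}_Y^{Tr}$, the CMI-as-trace identity underlying Proposition~\ref{prop:cmi_norm_bound}, namely
\begin{equation}
    I_{\mathcal{E}_Y^{Tr}[\rho^{(y)}]}(X:Z|Y) = -\text{Tr}\,\left[\mathcal{E}_Y^{Tr}[\rho^{(y)}] \, H^{(y)}(X:Z|\mathcal{E}_Y^{Tr}[Y])\right].
\end{equation}
I would prove this by expanding the CMI into the four von Neumann entropies of $\sigma := \mathcal{E}_Y^{Tr}[\rho^{(y)}]$ and rewriting each reduced entropy $-\text{Tr}[\sigma_L\log\sigma_L]$ as $-\text{Tr}[\sigma\,(\log\sigma_L\otimes I_{L_c})]$, using that tracing out $L_c$ before pairing with $\log\sigma_L$ equals the full trace against $\log\sigma_L\otimes I_{L_c}$. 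The normalization constants of the reduced states cancel in the signed combination $+XY+YZ-Y-XYZ$, so the unnormalized $\tilde{\rho}^{(y)}$ and the normalized $\rho^{(y)}$ yield the same operator $H^{(y)}(X:Z|\mathcal{E}_Y^{Tr}[Y])$. This is precisely the argument already used in the quantum derivation in Appendix~\ref{high_temp_appn}, and nothing in it relies on off-diagonal structure, so it transfers verbatim.

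Finally, because $\sigma = \mathcal{E}_Y^{Tr}[\rho^{(y)}]$ is a bona fide density operator (depolarization preserves positivity and trace), the expectation of the Hermitian operator $H^{(y)}(X:Z|\mathcal{E}_Y^{Tr}[Y])$ in $\sigma$ lies between its smallest and largest eigenvalues, giving $-\text{Tr}[\sigma H^{(y)}] \le \norm{H^{(y)}(X:Z|\mathcal{E}_Y^{Tr}[Y])}$, which is the claim. I do not anticipate a genuine obstacle here: the statement is a corollary of the general product-channel identity of Appendix~\ref{high_temp_appn}, specialized to $\mathcal{E}_Y^{Tr}$. The only points requiring care are verifying that $\mathcal{E}_Y^{Tr}$ qualifies as a product of single-site channels (so the earlier lemma applies unchanged) and that strict positivity of the diagonal entries keeps every logarithm finite; once those are checked the bound is immediate.
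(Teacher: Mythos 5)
Your proposal is correct and matches the paper's intended argument: the paper states this proposition without a separate proof precisely because it is the pinned analog of Proposition~\ref{prop:cmi_norm_bound}, resting on the same trace identity $I = -\text{Tr}\,[\sigma\, H^{(y)}(X:Z|\mathcal{E}_Y^{Tr}[Y])]$ (with normalization constants cancelling in the signed combination of logarithms, as the paper itself remarks) followed by bounding the expectation of a Hermitian operator in a density matrix by its operator norm. Your added care about full-rank diagonal states keeping the logarithms finite, and about $\mathcal{E}_Y^{Tr}$ being a product of single-site channels commuting with the partial traces (since $Y \subseteq L$ for all four regions), is consistent with the paper's setup and introduces nothing divergent.
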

We will then upper-bound $\norm{H^{(y)}(X:Z|\mathcal{E}_Y^{Tr}[Y])}$ using the generalized cluster expansion.

\subsection{Generalized Cluster Expansions Under Pinning}
We now use the generalized cluster expansion to bound $I_{\mathcal{E}_{Y}^{Tr}[\rho^{(y)}]} (X:Z|Y)$. We will use the following convention to helps us determine the appropriate normalization. Let $d_i$ be the restriction of $d^{(i)}$ to the $q$-dimensional local Hilbert space, in other words $d_i$ is a $q$-by-$q$ Hermitian matrix. Let $Z_i = \text{Tr}[ e^{-d_i}]$ and  $Z_0=\prod_{i \in Y} Z_i$. To reuse some of the previous results, we will apply partial trace channels $\mathcal{E}_{Y}^{Tr}$ to $Y$ and define $H^{(y)}(X:Z|\mathcal{E}_Y^{Tr}[Y])$ as
\begin{equation}
     H^{(y)}(X:Z|\mathcal{E}_Y^{Tr}[Y]) = \log(\mathcal{E}_Y^{Tr}[\tilde{\rho}^{(y)}_{XY}]) + \log(\mathcal{E}_Y^{Tr}[\tilde{\rho}^{(y)}_{YZ}]) - \log(\mathcal{E}_Y^{Tr}[\tilde{\rho}^{(y)}_{Y}]) - \log(\mathcal{E}_Y^{Tr}[\tilde{\rho}^{(y)}_{XYZ}])
\end{equation}
Where we use the following normalization of `density. Let $\tilde{\rho}^{(y)} = e^{-H^{(y)}} \frac{q^{|Y|}}{Z_0} $ and let $\tilde{\rho}^{(y)}_{L}$ be the reduced density matrix on $L$, under the specified normalization. We choose the above normalization because at infinite temperature $\mathcal{E}^{Tr}[\tilde{\rho}^{(y)}] \bigr|_{\beta = \infty} = I$. This allows us to expand the matrix logarithm near identity later.

We will evaluate the cluster expansion of $\tilde{\rho}^{(y)}$ and $H^{(y)}(X:Z|\mathcal{E}_Y^{Tr}[Y])$ by writing down the multivariate Taylor expansions in $\lambda_a$.
\begin{align}
\mathcal{E}_Y^{Tr}[\tilde{\rho}^{(y)}_L] &= \sum_{\mathbf{W}} \frac{\lambda_{\mathbf{W}}}{\mathbf{W}!}\mathcal{D}_\mathbf{W} \mathcal{E}_Y^{Tr}[\tilde{\rho}^{(y)}_L]\\
    \log(\mathcal{E}_Y^{Tr}[\tilde{\rho}^{(y)}_L]) &= \sum_{\mathbf{W}} \frac{\lambda_{\mathbf{W}}}{\mathbf{W}!}\mathcal{D}_\mathbf{W} \log(\mathcal{E}_Y^{Tr}[\tilde{\rho}^{(y)}_L])
\end{align}
Crucially, the pinning terms $d^{(i)}$ are not variables to be expanded! In other words, the zeroth order term in the cluster expansion of $\mathcal{E}^{Tr}[\tilde{\rho}^{(y)}_L]$ is $\frac{q^{|Y|}}{Z_0} \mathcal{E}^{Tr}[e^{\sum_{i \in Y} d^{(i)}}]$. In fact, $d^{(i)}$ cannot participate in the cluster expansion because they have large operator norms in general, equivalently the pinning terms can be at low temperatures.

One may worry that the presence of the low-temperature pinning terms could destroy the Taylor expansion in the matrix logarithm, but since all sites that are pinned are also traced out by $\mathcal{E}_Y^{Tr}$, the zeroth order term $\frac{q^{|Y|}}{Z_0} \mathcal{E}^{Tr}[e^{\sum_{i \in Y} d^{(i)}}]$ becomes identity, so we are still Taylor expanding the matrix logarithm around the identity.

\subsection{Connected Clusters}
We will derive simplifications in connected clusters, similar to Lemma \ref{lem:conn_cluster} and Lemma \ref{lem:conn_log_cluster}. Crucially, since the pinning terms are all local, their presence do not interfere with the desired property of connected clusters.

\begin{lemma}
    \label{lem:pinned_conn_cluster}
 Let $\mathcal{E}_{\Gamma}^{Tr} = \otimes_{i \in \Gamma} \mathcal{E}^{Tr}_{i}$, where $\Gamma$ denotes a larger region that contains $Y$. $\mathcal{E}^{Tr}_{i}$ denotes the partial trace channel acting on site $i$ in $\Gamma$. Suppose $\mathbf{W}$ can be decomposed into two disconnected clusters $\mathbf{W}_1$ and $\mathbf{W}_2$. Then,
\begin{align}
    \mathcal{D}_\mathbf{W} \mathcal{E}_{\Gamma}^{Tr}[\tilde{\rho}^{(y)}]= \mathcal{D}_{\mathbf{W}_1}  \mathcal{E}_{\Gamma}^{Tr}[\tilde{\rho}^{(y)}] \times \mathcal{D}_\mathbf{W_2} \mathcal{E}_{\Gamma}^{Tr}[\tilde{\rho}^{(y)}]
\end{align}
\end{lemma}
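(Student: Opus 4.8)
The plan is to mirror the proof of Lemma~\ref{lem:conn_cluster}, exploiting that in the classical setting every $h_a$ and every pinning term $d^{(i)}$ is diagonal, so that all operators below commute and, for operators with disjoint support, matrix multiplication coincides with the tensor product. The only genuinely new ingredient compared with Lemma~\ref{lem:conn_cluster} is the family of single-site pinning terms $d^{(i)}$, $i\in Y$, which are held fixed (not differentiated) in the cluster expansion; the two things I must check are that these terms never couple $\mathbf{W}_1$ to $\mathbf{W}_2$, and that the normalization $q^{|Y|}/Z_0$ absorbs the partial traces of the leftover pinned sites without residue.

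First I would distribute the normalization site by site. Writing $q^{|Y|}/Z_0 = \prod_{i\in Y}(q/Z_i)$ and setting $\hat P_i = (q/Z_i)\,e^{-d^{(i)}}$, we have $\mathrm{Tr}[\hat P_i]=q$ and hence $\mathcal{E}^{Tr}_i[\hat P_i]=I$, so that $\tilde{\rho}^{(y)} = e^{-\sum_a \lambda_a h_a}\prod_{i\in Y}\hat P_i$. Since $\mathbf{W}=\mathbf{W}_1\cup\mathbf{W}_2$ is a disconnected (disjoint-support) union, $\mathcal{D}_\mathbf{W}=\mathcal{D}_{\mathbf{W}_1}\mathcal{D}_{\mathbf{W}_2}$, and I may set all coefficients outside $\mathbf{W}_1\cup\mathbf{W}_2$ to zero before differentiating. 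Writing $S_1,S_2$ for the disjoint supports of $\mathbf{W}_1,\mathbf{W}_2$ and $R=Y\setminus(S_1\cup S_2)$, diagonality together with disjointness makes the restricted $\tilde{\rho}^{(y)}$ factorize across $S_1$, $S_2$, and $R$, with each $\hat P_i$ attaching to the single region containing $i$:
\begin{equation}
\tilde{\rho}^{(y)}\big|_{\mathrm{restricted}} = \hat C_1 \otimes \hat C_2 \otimes \Big(\textstyle\prod_{i\in R}\hat P_i\Big)\otimes I,
\end{equation}
where $\hat C_1 = e^{-\sum_{a\in\mathbf{W}_1}\lambda_a h_a}\prod_{i\in Y\cap S_1}\hat P_i$ is supported on $S_1$ and $\hat C_2$ is defined analogously on $S_2$.

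Next I would apply $\mathcal{E}_{\Gamma}^{Tr}=\otimes_{i\in\Gamma}\mathcal{E}^{Tr}_i$, a product of local channels that factorizes across the disjoint regions. Since $R\subseteq Y\subseteq\Gamma$, every leftover pinned site is traced out, and the site-wise normalization gives $\mathcal{E}^{Tr}_R[\prod_{i\in R}\hat P_i]=\prod_{i\in R}\mathcal{E}^{Tr}_i[\hat P_i]=I$, while the untouched identity remains identity by unitality. Hence $\mathcal{E}_\Gamma^{Tr}[\tilde{\rho}^{(y)}]|_{\mathrm{restricted}} = \mathcal{E}^{Tr}_{\Gamma\cap S_1}[\hat C_1]\otimes \mathcal{E}^{Tr}_{\Gamma\cap S_2}[\hat C_2]$, in which only the first factor depends on $\mathbf{W}_1$ and only the second on $\mathbf{W}_2$; differentiating yields $\mathcal{D}_\mathbf{W}\mathcal{E}_\Gamma^{Tr}[\tilde{\rho}^{(y)}] = (\mathcal{D}_{\mathbf{W}_1}\mathcal{E}^{Tr}_{\Gamma\cap S_1}[\hat C_1])\otimes(\mathcal{D}_{\mathbf{W}_2}\mathcal{E}^{Tr}_{\Gamma\cap S_2}[\hat C_2])$. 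Running the identical computation with only $\mathbf{W}_1$ retained (so that the sites $Y\setminus S_1$ are all traced to identity) identifies $\mathcal{D}_{\mathbf{W}_1}\mathcal{E}_\Gamma^{Tr}[\tilde{\rho}^{(y)}] = \mathcal{D}_{\mathbf{W}_1}\mathcal{E}^{Tr}_{\Gamma\cap S_1}[\hat C_1]$, supported on $S_1$ with identity elsewhere, and symmetrically for $\mathbf{W}_2$; since these two operators have disjoint support their matrix product equals the tensor product just obtained, giving the claim.

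The step I expect to be the main obstacle is precisely this normalization bookkeeping: a naive accounting of the partial traces produces a scalar factor $Z_i/q$ for every traced pinned site, and one must verify that the prefactor $q^{|Y|}/Z_0$—absorbed above into the site-wise $\hat P_i$ with $\mathcal{E}^{Tr}_i[\hat P_i]=I$—cancels exactly the double-counting that appears when the single cluster derivative on the left is re-expressed as a product of two cluster derivatives on the right. Everything else is a faithful transcription of Lemma~\ref{lem:conn_cluster}, with the classical diagonal structure eliminating any need for the commutation-preserving hypothesis invoked there.
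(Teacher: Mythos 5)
Your proposal is correct and follows essentially the same route as the paper's own proof: the same restriction of couplings to $\mathbf{W}_1\cup\mathbf{W}_2$, the same factorization of $\tilde{\rho}^{(y)}$ with the pinning terms carrying their site-wise normalization $\frac{q}{Z_i}e^{-d^{(i)}}$ so that every leftover pinned site is traced to the identity (the paper's Eq.~(\ref{eq:barw_identity})), and the same identification of each tensor factor with the single-cluster derivatives $\mathcal{D}_{\mathbf{W}_1}\mathcal{E}_\Gamma^{Tr}[\tilde{\rho}^{(y)}]$ and $\mathcal{D}_{\mathbf{W}_2}\mathcal{E}_\Gamma^{Tr}[\tilde{\rho}^{(y)}]$ before converting the tensor product into a matrix product via disjoint supports. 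The normalization bookkeeping you flagged as the main obstacle is resolved exactly as you anticipated, and the only blemish is cosmetic (a dropped $\beta$ in $e^{-\beta\sum_a\lambda_a h_a}$).
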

\begin{proof}
Since $\mathbf{W}$ can be decomposed into two disconnected clusters $\mathbf{W}_1$ and $\mathbf{W}_2$. We have
    \begin{align}
        \mathcal{D}_\mathbf{W} \mathcal{E}_{\Gamma}^{Tr}[\tilde{\rho}^{(y)}] = \prod_{i \in  \mathbf{W}_1} (\frac{\partial }{\partial \lambda_i})^{\mu_i}  \prod_{j \in  \mathbf{W}_2} (\frac{\partial }{\partial \lambda_j})^{\mu_j}  \mathcal{E}_{\Gamma}^{Tr}[\tilde{\rho}^{(y)}]\Bigr|_{\lambda=0}
    \end{align}
 We first set the coefficients $\lambda$ that are not in $\mathbf{W}_1$ or $\mathbf{W}_2$ to zero. Because $\mathbf{W}_1$ and $\mathbf{W}_2$ are disconnected, the resulting density matrix factorizes.
      \begin{align}
        \mathcal{D}_\mathbf{W} \mathcal{E}_{\Gamma}^{Tr}[\tilde{\rho}^{(y)}] &= \prod_{i \in  \mathbf{W}_1} (\frac{\partial }{\partial \lambda_i})^{\mu_i}  \prod_{j \in  \mathbf{W}_2} (\frac{\partial }{\partial \lambda_j})^{\mu_j}\mathcal{E}_{\Gamma}^{Tr}\left[\tilde{\rho}^{(\mathbf{W}_1)}\times \tilde{\rho}^{(\mathbf{W}_2)}\times \tilde{\rho}^{(\bar{\mathbf{W}})}\right] \Bigr|_{\lambda_i=\lambda_j=0} \\
        \tilde{\rho}^{(\mathbf{W}_1)} &= e^{-\beta\sum_{a \in \mathbf{W}_1} \lambda_a h_a} \prod_{i \in \Gamma \cap \text{supp}(\mathbf{W}_1)} \left( \frac{ q}{Z_i} e^{- d^{(i)}} \right) \label{eq:w1_def}  \\
        \tilde{\rho}^{(\mathbf{W}_2)} &= e^{-\beta\sum_{a \in \mathbf{W}_2} \lambda_a h_a} \prod_{i \in \Gamma \cap \text{supp}(\mathbf{W}_1)} \left( \frac{ q}{Z_i} e^{- d^{(i)}} \right) \label{eq:w2_def} \\
        \tilde{\rho}^{(\bar{\mathbf{W}})} &=  \prod_{i \in \Gamma \cap \text{supp}(\bar{\mathbf{W}})} \left( \frac{ q}{Z_i} e^{- d^{(i)}} \right) \label{eq:barw_def}
    \end{align}
Where we use $\text{supp}(\bar{\mathbf{W}})$ to denote the complement to the support of $\mathbf{W}_1 \cup \mathbf{W}_2$. In short, we separate into three parts. The first part $\tilde{\rho}^{(\mathbf{W}_1)}$ contains all $h_a$ and $d_i$ supported on $\mathbf{W}_1$, the second part $\tilde{\rho}^{(\mathbf{W}_2)}$ contains all $h_a$ and $d_i$ supported on $\mathbf{W}_2$, and the last part $\tilde{\rho}^{(\bar{\mathbf{W}})}$ contains all $d_i$ not supported on $\mathbf{W}_1 \cup \mathbf{W}_2$. We use the following convention of $d^{(i)}$ to simplify the notation: $d^{(i)}=0$ and $Z_i = q$ if $i \in \Gamma$ but $i \notin Y$.

We decompose $\mathcal{E}_{\Gamma}^{Tr} = \mathcal{E}_{\mathbf{W}_1} \otimes \mathcal{E}_{\mathbf{W}_2} \otimes \mathcal{E}_{\bar{\mathbf{W}}}$, where $\mathcal{E}_{\mathbf{W}_1}$ includes partial trace channels acting on the intersection between $\Gamma$ and the support of $\mathbf{W}_1$, that is $\Gamma \cap \text{supp}(\mathbf{W}_1)$, $\mathcal{E}_{\mathbf{W}_2}$ includes partial trace channels acting on $\Gamma \cap \text{supp}(\mathbf{W}_2)$, and $\mathcal{E}_{\bar{\mathbf{W}}}$ includes partial trace channels acting on the remaining system. Let $\mathcal{E}^{({\mathbf{W}_1})} = \mathcal{E}_{\mathbf{W}_1} \otimes I_{\mathbf{W_2 \cup \bar{W}}}$ be the embedding of $\mathcal{E}_{\mathbf{W}_1}$ into the global Hilbert space by tensoring with the identity channel on the remaining system. $\mathcal{E}^{(\mathbf{W_2})}$ and $\mathcal{E}^{(\bar{\mathbf{W}})}$ are similarly defined. Under the above factorization, we have
    \begin{equation}
\mathcal{D}_\mathbf{W} \mathcal{E}_{\Gamma}^{Tr}\left[\tilde{\rho}^{(\mathbf{W}_1)}\times \tilde{\rho}^{(\mathbf{W}_2)}\times \tilde{\rho}^{(\bar{\mathbf{W}})}\right]  = \mathcal{D}_{\mathbf{W}}\left( \mathcal{E}^{(\mathbf{W}_1)} \left[\tilde{\rho}^{(\mathbf{W}_1)} \right] \times \mathcal{E}^{(\mathbf{W}_2)} \left[\tilde{\rho}^{(\mathbf{W}_2)} \right] \times \mathcal{E}^{(\bar{\mathbf{W}})} \left[\tilde{\rho}^{(\bar{\mathbf{W}})} \right] \right)
    \end{equation}
And one can see that only $\tilde{\rho}^{(\mathbf{W}_1)}$ depends on $\mathbf{W}_1$ and only $\tilde{\rho}^{(\mathbf{W}_2)}$ depends on $\mathbf{W}_2$, so the cluster derivative factorizes into
    \begin{equation} \label{eq:pinned_cluster_factorize}
\mathcal{D}_\mathbf{W} \mathcal{E}_{\Gamma}^{Tr}\left[\tilde{\rho}^{(\mathbf{W}_1)}\times \tilde{\rho}^{(\mathbf{W}_2)}\times \tilde{\rho}^{(\bar{\mathbf{W}})}\right]  = \mathcal{D}_{\mathbf{W}_1} \mathcal{E}^{(\mathbf{W}_1)} \left[\tilde{\rho}^{(\mathbf{W}_1)} \right] \times \mathcal{D}_{\mathbf{W}_2} \mathcal{E}^{(\mathbf{W}_2)} \left[\tilde{\rho}^{(\mathbf{W}_2)} \right] \times \mathcal{E}^{(\bar{\mathbf{W}})} \left[\tilde{\rho}^{(\bar{\mathbf{W}})} \right] 
    \end{equation}
Lastly, we show that the above equation is indeed a product of cluster derivatives $\mathcal{D}_{\mathbf{W}_1}  \mathcal{E}^{({\mathbf{W}_1})}[\tilde{\rho}^{(y)}] \times \mathcal{D}_\mathbf{W_2} \mathcal{E}^{(\mathbf{W_2})}[\tilde{\rho}^{(y)}]$ with the correct normalization. First, One can verify that $\mathcal{E}^{(\bar{\mathbf{W}})} \left[\tilde{\rho}^{(\bar{\mathbf{W}})} \right] = I$. This is shown by explicitly evaluating the partial trace channels
\begin{equation}\label{eq:barw_identity}
    \mathcal{E}^{(\bar{\mathbf{W}})} \left[\tilde{\rho}^{(\bar{\mathbf{W}})} \right] = \prod_{i \in \Gamma \cap \text{supp}(\bar{\mathbf{W}})} \left( \frac{ q}{Z_i} \mathcal{E}^{Tr}_{i}\left[e^{- d^{(i)}}\right] \right) = \prod_{i \in \Gamma \cap \text{supp}(\bar{\mathbf{W}})} \left( \frac{ q}{Z_i}  \frac{Z_i}{q} I \right) = I
\end{equation}
Next, we show that $\mathcal{D}_{\mathbf{W}_1}  \mathcal{E}_{\Gamma}^{Tr}[\tilde{\rho}^{(y)}] = \mathcal{E}^{(\mathbf{W}_1)} \left[\tilde{\rho}^{(\mathbf{W}_1)} \right]$. We write down $\mathcal{D}_{\mathbf{W}_1}  \mathcal{E}^{Tr}[\tilde{\rho}^{(y)}]$ explicitly.
\begin{align}
    \mathcal{D}_{\mathbf{W}_1} \mathcal{E}_{\Gamma}^{Tr}[\tilde{\rho}^{(y)}] &= \prod_{i \in  \mathbf{W}_1} (\frac{\partial }{\partial \lambda_i})^{\mu_i}  \mathcal{E}_{\Gamma}^{Tr}\left[\tilde{\rho}^{(\mathbf{W}_1)} \times \tilde{\rho}^{(\bar{\mathbf{W}}_1)}\right] \Bigr|_{\lambda_i=\lambda_j=0} \\
        \tilde{\rho}^{(\mathbf{W}_1)} &= e^{-\beta\sum_{a \in \mathbf{W}_1} \lambda_a h_a} \prod_{i \in \Gamma \cap \text{supp}(\mathbf{W}_1)} \left( \frac{ q}{Z_i} e^{- d^{(i)}} \right) \\
        \tilde{\rho}^{(\bar{\mathbf{W}}_1)} &=  \prod_{i \in \Gamma \cap \text{supp}(\bar{\mathbf{W}}_1)} \left( \frac{ q}{Z_i} e^{- d^{(i)}} \right)
\end{align}
Where we use $\text{supp}(\bar{\mathbf{W}}_1)$ to denote the complement to the support of $\mathbf{W}_1$. In short, we separate into two parts. The first part $\tilde{\rho}^{(\mathbf{W}_1)}$ contains all $h_a$ and $d_i$ supported on $\mathbf{W}_1$ and is identical to the $\tilde{\rho}^{(\mathbf{W}_1)}$ defined in Eq.(\ref{eq:w1_def}). The second part $\tilde{\rho}^{(\bar{\mathbf{W}}_1)}$ contains all $d_i$ not supported on $\mathbf{W}_1$. 

We decompose $\mathcal{E}_{\Gamma}^{Tr} = \mathcal{E}_{\mathbf{W}_1} \otimes  \mathcal{E}_{\bar{\mathbf{W}}_1}$, where $\mathcal{E}_{\mathbf{W}_1}$ includes partial trace channels acting on $\Gamma \cap \text{supp}(\mathbf{W}_1)$ and is the same as previously defined. $\mathcal{E}_{\bar{\mathbf{W}}}$ includes partial trace channels acting on the remaining system. Let $\mathcal{E}^{({\mathbf{W}_1})} = \mathcal{E}_{\mathbf{W}_1} \otimes I_{\mathbf{W_2 \cup \bar{W}}}$ be the embedding of $\mathcal{E}_{\mathbf{W}_1}$ into the global Hilbert space by tensoring with the identity channel on the remaining system. $\mathcal{E}^{(\mathbf{W_2})}$ and $\mathcal{E}^{(\bar{\mathbf{W}})}$ are similarly defined. Under the above factorization, we have
    \begin{equation}
\mathcal{E}_{\Gamma}^{Tr}\left[\tilde{\rho}^{(\mathbf{W}_1)}\times \tilde{\rho}^{(\bar{\mathbf{W}}_1)}\right]  = \mathcal{E}^{(\mathbf{W}_1)} \left[\tilde{\rho}^{(\mathbf{W}_1)} \right]  \times \mathcal{E}^{(\bar{\mathbf{W}}_1)} \left[\tilde{\rho}^{(\bar{\mathbf{W}}_1)} \right] 
    \end{equation}
Following the same math in Eq. (\ref{eq:barw_identity}), we can verify that $\mathcal{E}^{(\bar{\mathbf{W}}_1)} \left[\tilde{\rho}^{(\bar{\mathbf{W}}_1)} \right] = I$. Therefore, we have $\mathcal{D}_{\mathbf{W}_1}  \mathcal{E}_{\Gamma}^{Tr}[\tilde{\rho}^{(y)}] = \mathcal{E}^{(\mathbf{W}_1)} \left[\tilde{\rho}^{(\mathbf{W}_1)} \right]$. Similarly, one can show that $\mathcal{D}_{\mathbf{W}_2}  \mathcal{E}_{\Gamma}^{Tr}[\tilde{\rho}^{(y)}] = \mathcal{E}^{(\mathbf{W}_2)} \left[\tilde{\rho}^{(\mathbf{W}_2)} \right]$. Plugging the two equivalence and $\mathcal{E}^{(\bar{\mathbf{W}})} \left[\tilde{\rho}^{(\bar{\mathbf{W}})} \right] = I$ into Eq. (\ref{eq:pinned_cluster_factorize}), we arrive at the final result.
\end{proof}
Note that Lemma \ref{lem:pinned_conn_cluster} and \ref{lem:conn_cluster} are slightly different as in Lemma \ref{lem:pinned_conn_cluster} the complete partial trace channel shows up in the right-hand side, whereas only factorized channels show up in in the right-hand side of Lemma \ref{lem:conn_cluster}. Lemma \ref{lem:pinned_conn_cluster} will also not work for channels that do not trace out the entire $Y$.

\begin{lemma}
\label{lem:pinned_conn_log_cluster}
Let $\mathcal{E} = \mathcal{E}_1 \otimes \mathcal{E}_2 \otimes \ldots \otimes \mathcal{E}_n$ be a product of local channels. When $\mathbf{W}$ is disconnected, $\mathcal{D}_\mathbf{W} \log(\mathcal{E}[\tilde{\rho}^{(y)}])=0$
\end{lemma}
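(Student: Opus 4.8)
The plan is to mirror the proof of Lemma \ref{lem:conn_log_cluster}, replacing the unnormalized Gibbs operator $\tilde{\rho}$ with the pinned operator $\tilde{\rho}^{(y)}$ and reusing the factorization machinery already set up in Lemma \ref{lem:pinned_conn_cluster}. Suppose $\mathbf{W}$ decomposes into two clusters $\mathbf{W}_1$ and $\mathbf{W}_2$ with disjoint support, so that the cluster derivative splits as $\mathcal{D}_\mathbf{W} = \mathcal{D}_{\mathbf{W}_1}\mathcal{D}_{\mathbf{W}_2}$ (partial derivatives in disjoint sets of couplings commute). First I would set every coupling $\lambda_a$ with $a \notin \mathbf{W}_1 \cup \mathbf{W}_2$ to zero; by the identical computation leading to Eqs. (\ref{eq:w1_def})--(\ref{eq:barw_def}), the pinned operator factorizes as $\tilde{\rho}^{(y)} = \tilde{\rho}^{(\mathbf{W}_1)} \times \tilde{\rho}^{(\mathbf{W}_2)} \times \tilde{\rho}^{(\bar{\mathbf{W}})}$, three factors on disjoint regions, where each single-site pinning term $d^{(i)}$, being strictly local, is assigned to exactly one of the three pieces.

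Next I would factorize the channel as $\mathcal{E} = \mathcal{E}^{(\mathbf{W}_1)} \otimes \mathcal{E}^{(\mathbf{W}_2)} \otimes \mathcal{E}^{(\bar{\mathbf{W}})}$ along the same three disjoint supports. Because $\mathcal{E}$ is a product of single-site channels and the supports are disjoint, $\mathcal{E}[\tilde{\rho}^{(y)}]$ factorizes into $\mathcal{E}^{(\mathbf{W}_1)}[\tilde{\rho}^{(\mathbf{W}_1)}] \times \mathcal{E}^{(\mathbf{W}_2)}[\tilde{\rho}^{(\mathbf{W}_2)}] \times \mathcal{E}^{(\bar{\mathbf{W}})}[\tilde{\rho}^{(\bar{\mathbf{W}})}]$, a product of positive operators living on disjoint tensor factors. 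Since operators supported on disjoint factors commute and the logarithm of such a product splits additively, I obtain
\begin{align}
\log(\mathcal{E}[\tilde{\rho}^{(y)}]) = &\log(\mathcal{E}^{(\mathbf{W}_1)}[\tilde{\rho}^{(\mathbf{W}_1)}]) + \log(\mathcal{E}^{(\mathbf{W}_2)}[\tilde{\rho}^{(\mathbf{W}_2)}]) \notag \\
&+ \log(\mathcal{E}^{(\bar{\mathbf{W}})}[\tilde{\rho}^{(\bar{\mathbf{W}})}]).
\end{align}
To finish, I would note that the first summand depends only on the couplings in $\mathbf{W}_1$, the second only on those in $\mathbf{W}_2$, and the third on neither. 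Applying $\mathcal{D}_{\mathbf{W}_1}\mathcal{D}_{\mathbf{W}_2}$ therefore annihilates every term: the $\mathbf{W}_1$ summand is killed by $\mathcal{D}_{\mathbf{W}_2}$, the $\mathbf{W}_2$ summand by $\mathcal{D}_{\mathbf{W}_1}$, and the $\bar{\mathbf{W}}$ summand by either, so $\mathcal{D}_\mathbf{W}\log(\mathcal{E}[\tilde{\rho}^{(y)}]) = 0$.

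The only point requiring genuine care --- and the reason this lemma is stated separately rather than subsumed into Lemma \ref{lem:conn_log_cluster} --- is checking that the pinning terms $d^{(i)}$ do not spoil the additive splitting of the logarithm. The hard part is really just this bookkeeping: because each $d^{(i)}$ is single-site, it sits inside exactly one of $\tilde{\rho}^{(\mathbf{W}_1)}$, $\tilde{\rho}^{(\mathbf{W}_2)}$, or $\tilde{\rho}^{(\bar{\mathbf{W}})}$, so the three-way factorization by disjoint support survives their insertion, as does the normalization constant $q^{|Y|}/Z_0$, which distributes multiplicatively and only shifts each logarithm by an identity-proportional constant. Strict positivity of the three factors (needed for the logarithm to be well-defined) holds at high temperature exactly as in the unpinned case. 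Everything downstream is then identical to the argument for Lemma \ref{lem:conn_log_cluster}.
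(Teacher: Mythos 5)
Your proposal is correct and follows essentially the same route as the paper's proof: set the couplings outside $\mathbf{W}_1 \cup \mathbf{W}_2$ to zero, factorize $\tilde{\rho}^{(y)}$ into $\tilde{\rho}^{(\mathbf{W}_1)} \times \tilde{\rho}^{(\mathbf{W}_2)} \times \tilde{\rho}^{(\bar{\mathbf{W}})}$ with each single-site pinning term $d^{(i)}$ assigned to one factor by support, factorize the product channel accordingly, split the logarithm additively over the disjoint supports, and observe that each summand is killed by $\mathcal{D}_{\mathbf{W}_1}\mathcal{D}_{\mathbf{W}_2}$. Your additional bookkeeping remarks (positivity of the factors and the multiplicative distribution of the normalization) are consistent with the paper's conventions and do not change the argument.
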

\begin{proof}
Suppose $\mathbf{W}$ can be decomposed into two disconnected clusters $\mathbf{W}_1$ and $\mathbf{W}_2$. We have
\begin{align}
    \mathcal{D}_\mathbf{W} \log \left(\mathcal{E}[\tilde{\rho}^{(y)}]\right) = \prod_{i \in  \mathbf{W}_1} (\frac{\partial }{\partial \lambda_i})^{\mu_i}  \prod_{j \in  \mathbf{W}_2} (\frac{\partial }{\partial \lambda_j})^{\mu_j}  \log\left(\mathcal{E}[\tilde{\rho}^{(y)}]\right) \Bigr|_{\lambda=0}
\end{align}
We first set the coefficients $\lambda$ that are not in $\mathbf{W}_1$ or $\mathbf{W}_2$ to zero. Because $\mathbf{W}_1$ and $\mathbf{W}_2$ are disconnected, the resulting density matrix factorizes.
  \begin{equation}
    \mathcal{D}_\mathbf{W} \log \left(\mathcal{E}[\tilde{\rho}^{(y)}]\right) = \prod_{i \in  \mathbf{W}_1} (\frac{\partial }{\partial \lambda_i})^{\mu_i}  \prod_{j \in  \mathbf{W}_2} (\frac{\partial }{\partial \lambda_j})^{\mu_j}\log\left(\mathcal{E}\left[\tilde{\rho}^{(\mathbf{W}_1)}\times \tilde{\rho}^{(\mathbf{W}_2)}\times \tilde{\rho}^{(\bar{\mathbf{W}})}\right]\right) \Bigr|_{\lambda_i=\lambda_j=0}
\end{equation}

Where $\tilde{\rho}^{(\mathbf{W}_1)}$, $\tilde{\rho}^{(\mathbf{W}_2)}$, and $\tilde{\rho}^{(\bar{\mathbf{W}})}$ are defined in Eq. (\ref{eq:w1_def},\ref{eq:w2_def},\ref{eq:barw_def}), respectively. We factorize $\mathcal{E}^{Tr}$ into $\mathcal{E}^{Tr} = \mathcal{E}_{\mathbf{W}_1} \otimes \mathcal{E}_{\mathbf{W}_2} \otimes \mathcal{E}_{\bar{\mathbf{W}}}$ as done in Lemma \ref{lem:conn_log_cluster}. We also define $\mathcal{E}^{({\mathbf{W}_1})}$, $\mathcal{E}^{(\mathbf{W_2})}$ and $\mathcal{E}^{(\bar{\mathbf{W}})}$ accordingly. Under the factorization, the logarithm becomes
\begin{align}
    \log\left(\mathcal{E}\left[\tilde{\rho}^{(\mathbf{W}_1)}\times \tilde{\rho}^{(\mathbf{W}_2)}\times \tilde{\rho}^{(\bar{\mathbf{W}})}\right]\right) &= \log\left(\mathcal{E}^{(\mathbf{W}_1)}\left[\tilde{\rho}^{(\mathbf{W}_1)}\right]\times \mathcal{E}^{(\mathbf{W}_2)}\left[\tilde{\rho}^{(\mathbf{W}_2)}\right] \times \mathcal{E}^{(\bar{\mathbf{W}})}\left[\tilde{\rho}^{(\bar{\mathbf{W}})}\right]\right) \\
    &= \log\left(\mathcal{E}^{(\mathbf{W}_1)}\left[\tilde{\rho}^{(\mathbf{W}_1)}\right] \right) + \log \left( \mathcal{E}^{(\mathbf{W}_2)}\left[\tilde{\rho}^{(\mathbf{W}_2)}\right] \right) + \log \left( \mathcal{E}^{(\bar{\mathbf{W}})}\left[\tilde{\rho}^{(\bar{\mathbf{W}})}\right]\right)
\end{align}
After expressing the matrix logarithm as a sum of three terms, notice that the first terms only depend on $\mathbf{W}_1$, the second term only depends $\mathbf{W}_2$, and the last term is independent of $\mathbf{W}_1$ or $\mathbf{W}_2$. Therefore, all three terms become zero after taking the cluster derivative of $\mathbf{W} = \mathbf{W}_1 \cup \mathbf{W}_2$.
\end{proof}
When considering the cluster derivative of $H^{(y)}(X:Z|\mathcal{E}_Y^{Tr}[Y])$, we can show that only connected clusters connecting $X$ and $Z$ contribute.
\begin{lemma}\label{lem:pinned_conn_AC}
Let $\mathcal{E} = \mathcal{E}_1 \otimes \mathcal{E}_2 \otimes \ldots \otimes \mathcal{E}_n$ be a product of local channels. $\mathcal{D}_\mathbf{W} H^{(y)}(X:Z|\mathcal{E}_Y^{Tr}[Y])$ is non-trivial only when $\mathbf{W}$ is a connected cluster connecting $X$ and $Z$.
\end{lemma}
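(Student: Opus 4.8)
The plan is to mirror the proof of the quantum analogue, Lemma \ref{lem:conn_AC}, adapting it to the pinned unnormalized state $\tilde{\rho}^{(y)}$. The statement splits into two halves: a \emph{disconnected} cluster contributes nothing, and a connected cluster that fails to touch both $X$ and $Z$ also contributes nothing.

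The first half is immediate. Each of the four terms in $H^{(y)}(X:Z|\mathcal{E}_Y^{Tr}[Y])$ has the form $\log(\mathcal{E}'[\tilde{\rho}^{(y)}])$, where $\mathcal{E}'$ is a product of single-site channels, namely a composition of $\mathcal{E}_Y^{Tr}$ with the partial-trace channels on $X$ and/or $Z$ supplied by Proposition \ref{prop:trace_channel}. Lemma \ref{lem:pinned_conn_log_cluster} then gives $\mathcal{D}_\mathbf{W}\log(\mathcal{E}'[\tilde{\rho}^{(y)}]) = 0$ for every disconnected $\mathbf{W}$, so the signed combination vanishes term by term.

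For the second half, I would assume without loss of generality, using the $X\leftrightarrow Z$ symmetry of the expression, that $\mathbf{W}$ is connected but does not touch $Z$, and rewrite the four reduced matrices as depolarizations of $\tilde{\rho}^{(y)}$ via Proposition \ref{prop:trace_channel}: $\mathcal{E}_Y^{Tr}[\tilde{\rho}^{(y)}_{XY}] = (\mathcal{E}_Y^{Tr}\circ\mathcal{E}^{Tr}_{Z})[\tilde{\rho}^{(y)}]$ and $\mathcal{E}_Y^{Tr}[\tilde{\rho}^{(y)}_{XYZ}] = \mathcal{E}_Y^{Tr}[\tilde{\rho}^{(y)}]$, and likewise $\mathcal{E}_Y^{Tr}[\tilde{\rho}^{(y)}_{YZ}] = (\mathcal{E}_Y^{Tr}\circ\mathcal{E}^{Tr}_{X})[\tilde{\rho}^{(y)}]$ and $\mathcal{E}_Y^{Tr}[\tilde{\rho}^{(y)}_{Y}] = (\mathcal{E}_Y^{Tr}\circ\mathcal{E}^{Tr}_{X}\circ\mathcal{E}^{Tr}_{Z})[\tilde{\rho}^{(y)}]$. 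Within each $\pm$ pair the two terms then differ only by an extra $\mathcal{E}^{Tr}_{Z}$. Since $\mathbf{W}$ avoids $Z$, I may set the coefficients $\lambda_a$ of all $h_a$ touching $Z$ to zero before differentiating; after this $\tilde{\rho}^{(y)}$ factorizes as an operator on $XY$ tensored with $I_Z$ on $Z$. Because $\mathcal{E}^{Tr}_{Z}$ is unital, $\mathcal{E}^{Tr}_{Z}[I_Z]=I_Z$, so the extra channel acts trivially and the two members of each pair coincide. Hence both $\mathcal{D}_\mathbf{W}(\log\mathcal{E}_Y^{Tr}[\tilde{\rho}^{(y)}_{XY}] - \log\mathcal{E}_Y^{Tr}[\tilde{\rho}^{(y)}_{XYZ}])$ and $\mathcal{D}_\mathbf{W}(\log\mathcal{E}_Y^{Tr}[\tilde{\rho}^{(y)}_{YZ}] - \log\mathcal{E}_Y^{Tr}[\tilde{\rho}^{(y)}_{Y}])$ vanish, giving $\mathcal{D}_\mathbf{W} H^{(y)}(X:Z|\mathcal{E}_Y^{Tr}[Y]) = 0$.

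The only place demanding care relative to the quantum proof is the bookkeeping of the pinning terms and the normalization $q^{|Y|}/Z_0$ of $\tilde{\rho}^{(y)}$: I must verify that neither spoils the factorization on $Z$. This is the main, though mild, obstacle, and it is resolved by noting that the pinning terms $d^{(i)}$ are supported entirely on $Y$ and so do not obstruct the $XY$-versus-$Z$ split, while the normalization is a scalar contributing only a multiple of the identity to each logarithm, which cancels in the $(+,+,-,-)$ combination of the four terms. With those checks in place the argument is structurally identical to Lemma \ref{lem:conn_AC}.
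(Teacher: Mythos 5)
Your proposal is correct and follows essentially the same route as the paper's proof: disconnected clusters are killed term by term via Lemma \ref{lem:pinned_conn_log_cluster}, and for a connected $\mathbf{W}$ avoiding $Z$ you set the $Z$-supported couplings to zero so that each $\pm$ pair differs only by an $\mathcal{E}^{Tr}_{Z}$ acting on an identity factor, exactly as in the paper (which phrases this as $\tilde{\rho}^{(W)}$ being identity on $Z$). Your explicit checks that the pinning terms $d^{(i)}$, being supported on $Y$, and the scalar normalization $q^{|Y|}/Z_0$ do not spoil the factorization are handled implicitly in the paper but are worth making explicit, as you do.
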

\begin{proof}
    If $\mathbf{W}$ is disconnected, $\mathcal{D}_\mathbf{W} H^{(y)}(X:Z|\mathcal{E}_Y^{Tr}[Y]) = 0$ follows from Lemma \ref{lem:pinned_conn_log_cluster}. Without loss of generality, suppose that $\mathbf{W}$ is connected but does not connect to $Z$, then we first set all coefficients of terms supported on $Z$ to zero. We first consider $\mathcal{D}_\mathbf{W}(\log(\mathcal{E}[\tilde{\rho}_{XY}^{(y)}])$ and $\mathcal{D}_\mathbf{W}(\log(\mathcal{E}[\tilde{\rho}_{XYZ}^{(y)}])$. $\mathcal{D}_\mathbf{W}(\log(\mathcal{E}[\tilde{\rho}_{XY}^{(y)}])$ can be written as
\begin{align}
        \mathcal{D}_\mathbf{W} \log(\mathcal{E}[\tilde{\rho}_{XY}^{(y)}]) = \mathcal{D}_\mathbf{W} \log((\mathcal{E} \circ \mathcal{E}^{Tr}_{C}) [\tilde{\rho}_{XYZ}^{(y)}]) \\
        = \mathcal{D}_\mathbf{W} \log((\mathcal{E} \circ \mathcal{E}^{Tr}_{C}) [\tilde{\rho}^{(W)}])
\end{align}
Where $\tilde{\rho}^{(W)}$ is defined in Eq. (\ref{eq:w1_def}) but replacing $\mathbf{W}_1$ with $\mathbf{W}$. $\mathcal{E}^{Tr}_{Z}$ denotes the partial trace channels acting on $Z$ and $\circ$ denotes channel composition. Next, we recognize that $\tilde{\rho}^{(W)}$ is identity on $Z$, so the action of $\mathcal{E}^{Tr}_{Z}$ does not change $\tilde{\rho}^{(W)}$ at all. Therefore,
\begin{equation}
        \mathcal{D}_\mathbf{W} \log(\mathcal{E}[\tilde{\rho}_{XY}^{(y)}]) = \mathcal{D}_\mathbf{W} \log(\mathcal{E}  [\tilde{\rho}^{(W)}])
\end{equation}
Similarly, $\mathcal{D}_\mathbf{W}(\log(\mathcal{E}[\tilde{\rho}_{XYZ}^{(y)}])$ can be written as
\begin{align}
        \mathcal{D}_\mathbf{W} \log(\mathcal{E}[\tilde{\rho}_{XYZ}]) = \mathcal{D}_\mathbf{W} \log(\mathcal{E} [\tilde{\rho}^{(W)}])
\end{align}
Therefore, $\mathcal{D}_\mathbf{W}\left(\log(\mathcal{E}[\tilde{\rho}_{XY}^{(y)}]) - \log(\mathcal{E}[\tilde{\rho}_{XYZ}^{(y)}]) \right) = 0$. Next, we consider $\mathcal{D}_\mathbf{W}\log(\mathcal{E}[\tilde{\rho}_{YZ}^{(y)}])$ and $\mathcal{D}_\mathbf{W}\log(\mathcal{E}[\tilde{\rho}_{Y}^{(y)}])$. $\mathcal{D}_\mathbf{W}\log(\mathcal{E}[\tilde{\rho}_{YZ}^{(y)}])$ can be written as
\begin{align}
      \mathcal{D}_\mathbf{W}(\log(\mathcal{E}[\tilde{\rho}_{YZ}^{(y)}]) = \mathcal{D}_\mathbf{W} \log((\mathcal{E} \circ \mathcal{E}^{Tr}_{X}) [\tilde{\rho}_{YZ}^{(y)}]) \\
        = \mathcal{D}_\mathbf{W} \log((\mathcal{E} \circ \mathcal{E}^{Tr}_{X}) [\tilde{\rho}^{(W)}])
\end{align}
Similarly, $\mathcal{D}_\mathbf{W}\log(\mathcal{E}[\tilde{\rho}_{Y}^{(y)}])$ can be written as
\begin{align}
      \mathcal{D}_\mathbf{W}(\log(\mathcal{E}[\tilde{\rho}_{Y}^{(y)}]) = \mathcal{D}_\mathbf{W} \log((\mathcal{E} \circ \mathcal{E}^{Tr}_{X}\circ \mathcal{E}^{Tr}_{Z}) [\tilde{\rho}_{XYZ}^{(y)}]) \\
        = \mathcal{D}_\mathbf{W} \log((\mathcal{E} \circ \mathcal{E}^{Tr}_{X}\circ \mathcal{E}^{Tr}_{Z}) [\tilde{\rho}^{(W)}]) = \mathcal{D}_\mathbf{W} \log((\mathcal{E} \circ \mathcal{E}^{Tr}_{X}) [\tilde{\rho}^{(W)}])
\end{align}
Where in the last line we use again the fact that $\tilde{\rho}^{(W)}$ is identity on $Z$ and the action of $\mathcal{E}^{Tr}_{Z}$ does not change $\tilde{\rho}^{(W)}$. Therefore, $\mathcal{D}_\mathbf{W}\left(\log(\mathcal{E}[\tilde{\rho}_{YZ}^{(y)}]) - \log(\mathcal{E}[\tilde{\rho}_{Y}^{(y)}]) \right) = 0$. Together, we establish that $\mathcal{D}_\mathbf{W} H^{(y)}(X:Z|\mathcal{E}_Y^{Tr}[Y]) = 0$.
\end{proof}
We have seen that only connected clusters contribute to the generalized cluster expansion, even in the presence of local pinning terms. Therefore, we can apply Lemma \ref{lem:num_conn_cluster} to show that only exponentially many terms contribute to the cluster expansion. By showing the exponential decay of each term in the next subsection, we establish the convergence of the generalized cluster expansion.

\subsection{Bounding Cluster Derivative Under Pinning}
In this subsection, we upper-bound the operator norm of each term in the cluster expansion. The technique is mostly identical to the quantum case. Note that classical distributions automatically commute, so we do not have to worry about the ordering issue in connecting to the graph coloring.
\begin{lemma}
\label{lem:pinned_cluster_deriv_upper}
Let $\mathcal{E}_\Gamma^{Tr} = \otimes_{i \in \Gamma} \mathcal{E}^{Tr}_{i}$, where $\Gamma$ denotes a larger region that contains $Y$ and $\mathcal{E}^{Tr}_{i}$ denotes the partial trace channel acting on site $i$ in $\Gamma$.  Given any cluster $\mathbf{W}$ with order $|\mathbf{W}|=m$,  we have
\begin{align}
    \frac{1}{\mathbf{W}!}\norm{\mathcal{D}_{\mathbf{W}} \log(\mathcal{E}_\Gamma^{Tr}[\tilde{\rho}^{(y)}])} \le (2e(\mathfrak{d}+1)\beta)^{m+1}
\end{align}
\end{lemma}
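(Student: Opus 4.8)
The plan is to mirror the proof of Lemma~\ref{lem:cluster_deriv_upper} from the quantum case, replacing the generic unital channel by the partial-trace channel $\mathcal{E}_\Gamma^{Tr}$ and carrying the local pinning terms $d^{(i)}$ along as fixed, non-variational parameters. The first step is to establish a pinned analogue of Lemma~\ref{lem:graph_coloring}, namely
\begin{align}
\norm{\mathcal{D}_{\mathbf{W}} \log(\mathcal{E}_\Gamma^{Tr}[\tilde{\rho}^{(y)}])} \le \beta^{|\mathbf{W}|}\left( \sum_{B \in \text{PaC}(\text{Gra}(\mathbf{W}))} \sum_{n=1}^{|B|} \frac{(-1)^n}{n}\chi^*(n,\text{Gra}(B)) \right).
\end{align}
To obtain this I would cluster-expand $\mathcal{E}_\Gamma^{Tr}[\tilde{\rho}^{(y)}]$ in the variables $\{\lambda_a\}$ only, treating the pinning terms as fixed. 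By the chosen normalization $\tilde{\rho}^{(y)} = e^{-H^{(y)}} q^{|Y|}/Z_0$ together with the fact that every pinned site is traced out, the zeroth-order term of this expansion is the identity, so $\log(\mathcal{E}_\Gamma^{Tr}[\tilde{\rho}^{(y)}])$ may be expanded as $\log(I+A)=\sum_n \frac{(-1)^{n-1}}{n}A^n$. Using the pinned factorization of disconnected cluster derivatives (Lemma~\ref{lem:pinned_conn_cluster}) and the vanishing of disconnected log-derivatives (Lemma~\ref{lem:pinned_conn_log_cluster}), I would reorganize the resulting series into a genuine cluster expansion of $\log(\mathcal{E}_\Gamma^{Tr}[\tilde{\rho}^{(y)}])$, so that the counting of cluster partitions reduces to the graph-coloring combinatorics $\chi^*(n,\text{Gra}(B))$ exactly as before. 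Because all operators here are diagonal they commute, the reordering that required the commutation-preserving hypothesis in the quantum case is automatic and no such assumption is needed.

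The second ingredient is a pinned analogue of Lemma~\ref{lem:deriv_bound}, bounding $\norm{\mathcal{D}_\mathbf{V}\mathcal{E}^{(\mathbf{V})}[\tilde{\rho}^{(y)}]} \le \beta^{|\mathbf{V}|}$ for each connected block $\mathbf{V}$ appearing in a partition. As in the quantum proof, differentiating $e^{-\beta\sum_{a\in\mathbf{V}}\lambda_a h_a}$ and setting $\lambda=0$ isolates the top-order term, producing a factor $\beta^{|\mathbf{V}|}$ times a sum over orderings of products of the $h_a$, while the pinning factors $\tfrac{q}{Z_i}e^{-d^{(i)}}$ ride along unchanged. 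Since $\mathcal{E}^{(\mathbf{V})}$ is a product of partial-trace channels, which are unital and hence contractive (Lemma~\ref{lem:unital_channel_op_norm}), applying it cannot increase the operator norm, so the same symmetric-group bound as in Lemma~\ref{lem:deriv_bound} delivers $\beta^{|\mathbf{V}|}$. Taking the product over all blocks $\mathbf{V}$ of a partition yields $\prod_{\mathbf{V}}\beta^{|\mathbf{V}|}=\beta^{|\mathbf{W}|}$, which combined with the graph-coloring weights establishes the displayed inequality.

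Finally I would invoke the purely graph-theoretic combinatorial estimate Lemma~\ref{lem:combinatorial_estimate}, which depends only on $\text{Gra}(\mathbf{W})$ and its degree $\mathfrak{d}$ and therefore transfers verbatim, to conclude
\begin{align}
\frac{1}{\mathbf{W}!}\norm{\mathcal{D}_{\mathbf{W}} \log(\mathcal{E}_\Gamma^{Tr}[\tilde{\rho}^{(y)}])} \le (2e(\mathfrak{d}+1)\beta)^{|\mathbf{W}|+1}.
\end{align}
The main obstacle is the presence of the low-temperature pinning terms $d^{(i)}$, which have uncontrolled operator norm and so cannot be treated perturbatively. The key observation resolving this is that every pinned site lies in $\Gamma$ and is therefore completely depolarized: on those sites the combination $\tfrac{q}{Z_i}\mathcal{E}^{Tr}_i[e^{-d^{(i)}}]=I$, so the pinning terms never appear in any surviving cluster derivative and serve only to fix the normalization that makes the zeroth-order term the identity. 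Verifying carefully that this cancellation holds order by order—so that neither the logarithm expansion around the identity nor the per-block bound $\beta^{|\mathbf{V}|}$ is spoiled—is the crux of the argument; once it is in place the rest follows the quantum template.
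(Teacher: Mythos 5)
Your overall architecture matches the paper's exactly: cluster-expand $\mathcal{E}_\Gamma^{Tr}[\tilde{\rho}^{(y)}]$ in the $\lambda_a$ alone with the pinning terms held fixed, use the chosen normalization so the zeroth-order term is the identity, expand the matrix logarithm, reorganize into the graph-coloring combinatorics (with commutativity automatic in the diagonal setting, as you correctly note), and finish with Lemma~\ref{lem:combinatorial_estimate}. This is precisely the paper's route through Lemma~\ref{lem:pinned_graph_coloring}.

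However, your second ingredient---the pinned analogue of Lemma~\ref{lem:deriv_bound}, i.e., the paper's Lemma~\ref{lem:pinned_deriv_bound}---has a genuine gap. You invoke contractivity of unital channels (Lemma~\ref{lem:unital_channel_op_norm}) to conclude the per-block bound $\beta^{|\mathbf{V}|}$, but contractivity only bounds the output norm by the input norm, and here the input is \emph{not} norm-bounded by one: each pinning factor satisfies only $\norm{\tfrac{q}{Z_i}e^{-d^{(i)}}}\le q$, and this is essentially attained when the transition-matrix row is close to a point mass (then $Z_i\approx 1$ while the largest entry of $e^{-d^{(i)}}$ is $\approx 1$). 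Moreover, contrary to your closing claim that the pinning terms ``never appear in any surviving cluster derivative,'' they cancel to the identity only on sites \emph{outside} the support of the cluster (Eq.~(\ref{eq:barw_identity})); on pinned sites inside $\mathrm{supp}(\mathbf{V})$ they persist, so your argument yields at best $(\beta\, q^{O(1)})^{|\mathbf{V}|}$, which does not reproduce the stated constant $(2e(\mathfrak{d}+1)\beta)^{m+1}$. The paper's Lemma~\ref{lem:pinned_deriv_bound} avoids contractivity entirely: it evaluates the partial trace explicitly, writing $\mathcal{E}_\Gamma^{Tr}\left[h_\Gamma\prod_{i\in\Gamma}\tfrac{q}{Z_i}e^{-d^{(i)}}\right]=q^{-|\Gamma|}\,\text{Tr}_\Gamma\!\left[h_\Gamma\prod_{i\in\Gamma}\tfrac{q}{Z_i}e^{-d^{(i)}}\right]\otimes I_\Gamma$, and observes that each $\tfrac{q}{Z_i}e^{-d^{(i)}}$ is a positive diagonal matrix of trace $q$ while $h_\Gamma$ is diagonal with entries in $[-1,1]$, so the trace is a weighted average bounded in magnitude by $q^{|\Gamma|}$. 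This is an expectation-value argument resting on positivity, the trace normalization built into $Z_i$, and diagonality of the classical terms---not on operator-norm monotonicity. To repair your proof, replace the contractivity step with this explicit trace evaluation (equivalently, note that $A\mapsto\mathcal{E}^{Tr}_i\left[\tfrac{q}{Z_i}e^{-d^{(i)}}A\right]$ is itself a unital positive map on diagonal operators and apply contractivity to that \emph{composite} map rather than to $\mathcal{E}^{(\mathbf{V})}$ alone).
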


To prove Lemma \ref{lem:pinned_cluster_deriv_upper}, we first supply a short lemma that upper-bounds $\norm{\mathcal{D}_\mathbf{V}\mathcal{E}_\Gamma^{Tr}[\tilde{\rho}^{(y)}]}$, similar to Lemma \ref{lem:deriv_bound}.
\begin{lemma}\label{lem:pinned_deriv_bound}
Let $\mathcal{E}_\Gamma^{Tr} = \otimes_{i \in \Gamma} \mathcal{E}^{Tr}_{i}$, where $\Gamma$ denotes a larger region that contains $Y$. $\mathcal{E}^{Tr}_{i}$ denotes the partial trace channel acting on site $i$ in $\Gamma$. Suppose $|\mathbf{V}|=k$, then $\norm{\mathcal{D}_\mathbf{V}\mathcal{E}_\Gamma^{Tr}[\tilde{\rho}^{(y)}]} \le \beta^k$
\end{lemma}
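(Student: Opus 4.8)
The plan is to follow the same strategy as Lemma~\ref{lem:deriv_bound}, isolating the order-$k$ contribution of the Taylor expansion of the exponential, but I must first dispose of the pinning terms $d^{(i)}$, which carry large operator norm and therefore cannot be treated by naive contractivity. The crucial simplification is that we are in the classical setting: every $h_a$ and every $d^{(i)}$ is diagonal in the computational basis, so all of these operators commute. Consequently I can factor the exponential exactly,
\[
\tilde{\rho}^{(y)} = e^{-\beta \sum_a \lambda_a h_a}\, G, \qquad G := \prod_{i \in Y} \frac{q}{Z_i}\, e^{-d^{(i)}},
\]
where $G$ is a fixed, $\lambda$-independent, positive diagonal operator supported on $Y$. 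Because $G$ does not depend on the couplings, the cluster derivative acts only on the first factor, and since $\mathcal{E}_\Gamma^{Tr}$ is linear and $\lambda$-independent, $\mathcal{D}_\mathbf{V}\,\mathcal{E}_\Gamma^{Tr}[\tilde{\rho}^{(y)}] = \mathcal{E}_\Gamma^{Tr}\big[(\mathcal{D}_\mathbf{V}\, e^{-\beta \sum_a \lambda_a h_a})\,G\big]$.

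Next I would expand $e^{-\beta \sum_a \lambda_a h_a} = \sum_n \frac{(-\beta)^n}{n!}(\sum_a \lambda_a h_a)^n$ and observe, exactly as in Lemma~\ref{lem:deriv_bound}, that $\mathcal{D}_\mathbf{V}$ of weight $|\mathbf{V}| = k$ annihilates every term except $n=k$, leaving $\mathcal{D}_\mathbf{V}\, e^{-\beta \sum_a \lambda_a h_a} = \frac{(-\beta)^k}{k!}\sum_{\sigma \in S_k} h^{(V)}_{\sigma(1)}\cdots h^{(V)}_{\sigma(k)}$. Writing $P := h^{(V)}_{1}\cdots h^{(V)}_{k}$ (the ordering is immaterial since the terms commute), the task reduces to bounding $\norm{\mathcal{E}_\Gamma^{Tr}[P\,G]}$, after which the prefactors combine to $\frac{\beta^k}{k!}\cdot k! = \beta^k$.

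The main obstacle is precisely this operator-norm bound, and here the naive estimate $\norm{\mathcal{E}_\Gamma^{Tr}[PG]} \le \norm{P}\,\norm{G}$ is useless because $\norm{G}$ is not $O(1)$. The resolution exploits two facts: $Y \subseteq \Gamma$, so $\mathcal{E}_\Gamma^{Tr}$ completely depolarizes every site carrying pinning, and everything factorizes over sites. Concretely, since $P$ and $G$ are tensor products of diagonal single-site operators, so is $PG = \bigotimes_i Q_i$, with $Q_i = P_i g_i$ for $i \in Y$ and $Q_i = P_i$ otherwise, where $g_i := \frac{q}{Z_i}e^{-d^{(i)}}$ and $P_i$ is the restriction of $P$ to site $i$. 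The complete depolarizing channel replaces each factor on $\Gamma$ by its normalized trace, so $\norm{\mathcal{E}_\Gamma^{Tr}[PG]} = \prod_{i\in\Gamma}\big|\tfrac{1}{q}\mathrm{Tr}[Q_i]\big| \prod_{i\notin\Gamma}\norm{Q_i}$. For $i\notin\Gamma$, $\norm{Q_i}=\norm{P_i}\le 1$; for $i\in\Gamma\setminus Y$, $\big|\tfrac1q\mathrm{Tr}[P_i]\big|\le \norm{P_i}\le 1$. The decisive case is $i\in Y$: using $g_i\ge 0$, $\norm{P_i}\le 1$, and the normalization $\tfrac1q\mathrm{Tr}[g_i] = \tfrac1q\cdot\tfrac{q}{Z_i}\,Z_i = 1$, I get $\big|\tfrac1q\mathrm{Tr}[P_i g_i]\big| \le \tfrac1q\mathrm{Tr}[|P_i|\, g_i] \le \norm{P_i}\cdot\tfrac1q\mathrm{Tr}[g_i] = 1$. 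Hence every site contributes at most $1$, giving $\norm{\mathcal{E}_\Gamma^{Tr}[PG]}\le 1$ and therefore $\norm{\mathcal{D}_\mathbf{V}\,\mathcal{E}_\Gamma^{Tr}[\tilde{\rho}^{(y)}]}\le\beta^k$. The heart of the argument is thus the observation that the (possibly low-temperature) pinning is rendered harmless because it sits entirely inside the fully-depolarized region $\Gamma$ and is normalized to unit average weight per site.
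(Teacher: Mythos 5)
Your proposal is correct and takes essentially the same approach as the paper's proof: expand the exponential, note that $\mathcal{D}_\mathbf{V}$ isolates the order-$k$ term (with the $k!$ permutations cancelling the $1/k!$ by commutativity), and then render the large-norm pinning factors harmless by using that each $g_i = \frac{q}{Z_i}e^{-d^{(i)}}$ is a positive diagonal operator with $\frac{1}{q}\mathrm{Tr}[g_i]=1$ sitting entirely inside the fully depolarized region $\Gamma$. The only cosmetic difference is that you evaluate the depolarization site by site, whereas the paper splits the product of Hamiltonian terms as $h_\Gamma h_{\bar\Gamma}$ and bounds the single region-wide trace $\bigl|\mathrm{Tr}_\Gamma\bigl[h_\Gamma \prod_{i\in\Gamma}\frac{q}{Z_i}e^{-d^{(i)}}\bigr]\bigr| \le q^{|\Gamma|}$ in one step.
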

\begin{proof}
    First, expand $\tilde{\rho}^{(y)} = \sum_{n=1}^{\infty}\frac{(-\beta)^n}{n!} \left( H^n \prod_{i \in \Gamma} \frac{q}{Z_i} e^{- d^{(i)}} \right)$, where again we use the convention that $d^{(i)}=0$ and $Z_i = q$ if $i \in \Gamma$ but $i \notin Y$.
\begin{align}
    \norm{\mathcal{D}_\mathbf{V}\mathcal{E}_\Gamma^{Tr}[\tilde{\rho}]} = \norm{\mathcal{D}_\mathbf{V}\mathcal{E}_\Gamma^{Tr}\left[\sum_{n=1}^{\infty}\frac{(-\beta)^n}{n!} \left( H^n \prod_{i \in Y} \frac{q}{Z_i}  e^{- d^{(i)}} \right)\right]} \le \sum_{n=1}^{\infty} \frac{\beta^n}{n!} \norm{\mathcal{D}_\mathbf{V}\mathcal{E}_\Gamma^{Tr}\left[H^n \prod_{i \in Y} \frac{q}{Z_i}  e^{- d^{(i)}}\right]}
\end{align}
Suppose $\mathbf{V}$ contain terms $h^{(\mathbf{V})}_1, h^{(\mathbf{V})}_2, \ldots, h^{(\mathbf{V})}_k$, where one term can show up multiple times to account for its multiplicity. Only the term on the $k$-th order contributes, so we have
\begin{align}
    \norm{\mathcal{D}_\mathbf{V}\mathcal{E}_\Gamma^{Tr}[\tilde{\rho}]} \le \frac{\beta^k}{k!} \norm{  \mathcal{D}_\mathbf{V}\mathcal{E}_\Gamma^{Tr}\left[H^k \prod_{i \in Y} \frac{q}{Z_i}  e^{- d^{(i)}}\right]} = \beta^k \norm{ \mathcal{E}_\Gamma^{Tr}\left[h^{(V)}_{1}h^{(V)}_{2} \ldots h^{(V)}_{k}\prod_{i \in Y} \frac{q}{Z_i}  e^{- d^{(i)}}\right]}
\end{align}
Where the ordering of $h^{(\mathbf{V})}_1, h^{(\mathbf{V})}_2, \ldots, h^{(\mathbf{V})}_k$ does not matter because they all commute. To proceed, we factorize ${h^{(V)}_{1}h^{(V)}_{2} \ldots h^{(V)}_{k}}$, which is a product of local operators in $\{D_a\}$ with operator norms bounded by one, into the following form
\begin{equation}
    h^{(V)}_{1}h^{(V)}_{2} \ldots h^{(V)}_{k} = h_{\Gamma} h_{\bar{\Gamma}}
\end{equation}
Where $h_{\Gamma}$ is supported on $\Gamma$ and $h_{\bar{\Gamma}}$ is supported on the complement $\bar{\Gamma}$. In this way, the previous bound becomes
\begin{equation}
    \beta^k \norm{ \mathcal{E}_\Gamma^{Tr}\left[h^{(V)}_{1}h^{(V)}_{2} \ldots h^{(V)}_{k}\prod_{i \in \Gamma} \frac{q}{Z_i}  e^{- d^{(i)}}\right]} = \beta^k \norm{h_{\bar{\Gamma}}} \cdot \norm{\mathcal{E}_\Gamma^{Tr}\left[ h_{\Gamma} \prod_{i \in \Gamma} \frac{q}{Z_i}  e^{- d^{(i)}} \right]}
\end{equation}
Since $h_{\Gamma}$ is a product of operators with operator norm bounded by one, by the sub-multiplicity of operator norm, $\norm{h_{\Gamma}} \le 1$. On the other hand, the second operator can be directly evaluated.
\begin{equation}
    \mathcal{E}_\Gamma^{Tr}\left[ h_{\Gamma} \prod_{i \in \Gamma} \frac{q}{Z_i}  e^{- d^{(i)}} \right] = \frac{1}{q^{|\Gamma |}} \text{Tr}_{\Gamma}\left[ h_{\Gamma} \prod_{i \in \Gamma} \frac{q}{Z_i}  e^{- d^{(i)}} \right] \otimes I_{\Gamma}
\end{equation}
Where $|\Gamma|$ denotes the size of $\Gamma$, and $I_{\Gamma}$ denotes the $q^{|\Gamma |}$-by-$q^{|\Gamma |}$ dimensional identity operator on the local Hilbert space of $\Gamma$. One can quickly see that $|\text{Tr}_{\Gamma}\left[ h_{\Gamma} \prod_{i \in \Gamma} \frac{q}{Z_i}  e^{- d^{(i)}} \right]| \le q^{|\Gamma|}$ because $\frac{q}{Z_i}  e^{- d^{(i)}}$ is a positive diagonal matrix in the computational basis with a trace of $q^{|\Gamma|}$. On the other hand, $h_{\Gamma}$ is also diagonal in the computational basis with diagonal elements in the range of $[-1, +1]$. Therefore, $h_{\Gamma} \prod_{i \in \Gamma} \frac{q}{Z_i}  e^{- d^{(i)}}$ cannot have an absolute value of trace larger than $q^{|\Gamma|}$. Thus, $\norm{\mathcal{E}_\Gamma^{Tr}\left[ h_{\Gamma} \prod_{i \in \Gamma} \frac{q}{Z_i}  e^{- d^{(i)}} \right]} \le 1$, so we have $\norm{\mathcal{D}_\mathbf{V}\mathcal{E}_\Gamma^{Tr}[\tilde{\rho}^{(y)}]} \le \beta^k$ in the end.
\end{proof}

Similar to the quantum case, we first connect the cluster derivative to the graph coloring problem.
\begin{lemma}
\label{lem:pinned_graph_coloring}
Let $\mathcal{E}_{\Gamma}^{Tr} = \otimes_{i \in \Gamma} \mathcal{E}^{Tr}_{i}$, where $\Gamma$ denotes a larger region that contains $Y$ and $\mathcal{E}^{Tr}_{i}$ denotes the partial trace channel acting on site $i$ in $\Gamma$. Given any cluster $\mathbf{W}$ with order $|\mathcal{W}|=k$, we have
\begin{align}
        \norm{\mathcal{D}_{\mathbf{W}} \log(\mathcal{E}_{\Gamma}^{Tr}[\tilde{\rho}^{(y)}])} \le   \beta^{|\mathbf{W}|} \left( \sum_{B  \in \text{PaC}(\text{Gra}(\mathbf{W}))}   \sum_{n=1}^{|B|} \frac{(-1)^n}{n}\chi^*(n,\text{Gra}(B)) \right)
    \end{align}
\end{lemma}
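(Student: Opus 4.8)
The plan is to mirror the proof of Lemma~\ref{lem:graph_coloring} (the quantum counterpart), substituting the pinned unnormalized state $\tilde{\rho}^{(y)}$ and the complete partial-trace channel $\mathcal{E}_\Gamma^{Tr}$ for the general unital channel $\mathcal{E}$, and invoking the pinned versions of the connected-cluster lemmas. First I would cluster-expand $\mathcal{E}_\Gamma^{Tr}[\tilde{\rho}^{(y)}]$ in the coefficients $\lambda_a$, holding the pinning terms $d^{(i)}$ fixed (they are never expanded). The crucial structural fact is that the zeroth-order term equals the identity: with the normalization $\tilde{\rho}^{(y)} = e^{-H^{(y)}} q^{|Y|}/Z_0$, and since every pinned site in $\Gamma$ is also traced out, $\mathcal{E}_\Gamma^{Tr}[\prod_i e^{-d^{(i)}}] = (Z_0/q^{|Y|})\,I$, so $\mathcal{E}_\Gamma^{Tr}[\tilde{\rho}^{(y)}]\big|_{\lambda=0}=I$. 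This justifies writing $\mathcal{E}_\Gamma^{Tr}[\tilde{\rho}^{(y)}]=I+\sum_{k\ge1}\sum_{\mathbf{W}:|\mathbf{W}|=k}\tfrac{\lambda_{\mathbf{W}}}{\mathbf{W}!}\mathcal{D}_\mathbf{W}\mathcal{E}_\Gamma^{Tr}[\tilde{\rho}^{(y)}]$ and expanding the matrix logarithm about the identity.

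Next I would apply Lemma~\ref{lem:pinned_conn_cluster} to factorize each $\mathcal{D}_\mathbf{W}\mathcal{E}_\Gamma^{Tr}[\tilde{\rho}^{(y)}]$ over the maximally-connected components $P_{max}(\mathbf{W})$; note that the pinned lemma yields the clean product $\prod_{\mathbf{V}\in P_{max}(\mathbf{W})}\mathcal{D}_\mathbf{V}\mathcal{E}_\Gamma^{Tr}[\tilde{\rho}^{(y)}]$ with no leftover channel factor, because the complementary part evaluates to $I$. I would then insert the logarithm series $\log(I+A)=\sum_{n\ge1}\tfrac{(-1)^{n-1}}{n}A^n$, reorganize the nested sums by total cluster weight into a single cluster expansion of $\log(\mathcal{E}_\Gamma^{Tr}[\tilde{\rho}^{(y)}])$, and carry out the identical coefficient bookkeeping as in the quantum proof: each connected cluster partition $P$ of $\mathbf{W}$ corresponds to $P!$ graph colorings of $\text{Gra}(P)$, the requirement that pieces at the same power $n$ be mutually disconnected is exactly the proper-coloring condition, and summing over $n$ with the logarithm factor produces $C(P)=\tfrac{1}{P!}\sum_{n=1}^{|P|}\tfrac{(-1)^n}{n}\chi^*(n,\text{Gra}(P))$. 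Converting from cluster partitions to graph partitions via the factor $\mathbf{W}!/(P!\prod\mathbf{V}!)$ and then selecting the $\mathbf{W}$-term with $\mathcal{D}_\mathbf{W}$ gives $\mathcal{D}_\mathbf{W}\log(\mathcal{E}_\Gamma^{Tr}[\tilde{\rho}^{(y)}]) = \big(\sum_{B\in\text{PaC}(\text{Gra}(\mathbf{W}))}\sum_{n}\tfrac{(-1)^n}{n}\chi^*(n,\text{Gra}(B))\big)\prod_{\mathbf{V}\in P}\mathcal{D}_\mathbf{V}\mathcal{E}_\Gamma^{Tr}[\tilde{\rho}^{(y)}]$. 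Finally I would bound the operator norm by submultiplicativity and insert $\norm{\mathcal{D}_\mathbf{V}\mathcal{E}_\Gamma^{Tr}[\tilde{\rho}^{(y)}]}\le\beta^{|\mathbf{V}|}$ from Lemma~\ref{lem:pinned_deriv_bound}, so that $\prod_{\mathbf{V}\in P}\beta^{|\mathbf{V}|}=\beta^{|\mathbf{W}|}$, which yields the claimed inequality.

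The main obstacle—and the reason this is not a literal transcription of the quantum argument—is ensuring that the fixed, potentially large-norm pinning terms do not spoil the logarithm expansion. In the quantum case, unitality guaranteed the zeroth-order term is $I$ and that every factor has controlled norm; here $d^{(i)}$ can correspond to arbitrarily low effective temperature, so a priori the matrix $A$ in $\log(I+A)$ need not be small. The resolution, which I would emphasize, is that every pinned site is simultaneously traced out by $\mathcal{E}_\Gamma^{Tr}$, collapsing all pinning contributions into scalar normalization factors that are already absorbed into $\tilde{\rho}^{(y)}$; after this collapse the expansion is genuinely about the identity and is controlled entirely by the $\lambda_a$ cluster weights. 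This is precisely what Lemma~\ref{lem:pinned_conn_cluster} (via $\mathcal{E}^{(\bar{\mathbf{W}})}[\tilde{\rho}^{(\bar{\mathbf{W}})}]=I$) and Lemma~\ref{lem:pinned_deriv_bound} (via the $\beta^{|\mathbf{V}|}$ bound) encode, so I would lean on them at exactly these steps. Finally, the commutation hypothesis that the quantum proof required when reordering factors to match the logarithm expansion to the cluster expansion is automatic in the classical setting, since all $h_a$ and all $d^{(i)}$ are diagonal in the computational basis and hence mutually commute; the reshuffling of products is therefore trivially valid, and no analogue of the commutation-preserving condition is needed.
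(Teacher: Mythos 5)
Your proposal is correct and follows essentially the same route as the paper's proof: cluster-expand $\mathcal{E}_\Gamma^{Tr}[\tilde{\rho}^{(y)}]$ around the identity (which holds because every pinned site is traced out, via the normalization and Lemma \ref{lem:pinned_conn_cluster}), expand the matrix logarithm, reorganize via the graph-coloring bookkeeping to get $C(P)=\frac{1}{P!}\sum_{n=1}^{|P|}\frac{(-1)^n}{n}\chi^*(n,\text{Gra}(P))$, and bound each factor by $\beta^{|\mathbf{V}|}$ using Lemma \ref{lem:pinned_deriv_bound}. Your two emphasized points---that the trace channel collapses the pinning terms so the expansion is genuinely about the identity, and that the ordering issue is trivial since all matrices are diagonal---are exactly the observations the paper makes at the same steps.
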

\begin{proof}
  The proof is largely similar to the proof of Lemma \ref{lem:graph_coloring}.  We first cluster expand $\log(\mathcal{E}_{\Gamma}^{Tr}[\tilde{\rho}^{(y)}])$ 
\begin{align}
     \mathcal{E}_{\Gamma}^{Tr}[\tilde{\rho}^{(y)}] =\prod_{i \in \Gamma} \mathcal{E}_{i}^{Tr}\left[\frac{q}{Z_i} e^{- d^{(i)}}\right]+ \sum_{k=1}^{\infty}  \sum_{\mathbf{W}:|\mathbf{W}| =k} \frac{\lambda_{\mathbf{W}}}{\mathbf{W}!}\mathcal{D}_\mathbf{W}\mathcal{E}_{\Gamma}^{Tr}[\tilde{\rho}^{(y)}] = I+ \sum_{k=1}^{\infty}  \sum_{\mathbf{W}:|\mathbf{W}| =k} \frac{\lambda_{\mathbf{W}}}{\mathbf{W}!}\mathcal{D}_\mathbf{W}\mathcal{E}_{\Gamma}^{Tr}[\tilde{\rho}^{(y)}]
\end{align}
Where we use the convention that $d^{(i)}=0$ and $Z_i = q$ if $i \in \Gamma$ but $i \notin Y$. The second equality follows from Eq. (\ref{eq:barw_identity}).

In general, $\mathbf{W}$ may be disconnected. We use $P_{max}(\mathbf{W})$ to denote the maximally connected subset of $\mathbf{W}$, namely the minimal partition that separates $\mathbf{W}$ into connected subsets. Using Lemma \ref{lem:pinned_conn_cluster}, we have
\begin{align}
   \mathcal{E}_{\Gamma}^{Tr}[\tilde{\rho}^{(y)}] =  I+ \sum_{k=1}^{\infty}  \sum_{\mathbf{W}:|\mathbf{W}| =k} \prod_{\mathbf{V} \in P_{max}(\mathbf{W})} (\frac{\lambda_{\mathbf{V}}}{\mathbf{V}!}\mathcal{D}_\mathbf{V}\mathcal{E}_{\Gamma}^{Tr}[\tilde{\rho}^{(y)}])
\end{align}
Next, we apply the matrix logarithm expansion $\log(I+A)=\sum_{n=1}^{\infty} \frac{(-1)^{n-1}}{n}A^n$ to expand $\log(\mathcal{E}_{\Gamma}^{Tr}[\tilde{\rho}^{(y)}])$.
\begin{align}
    \log(\mathcal{E}_{\Gamma}^{Tr}[\tilde{\rho}^{(y)}]) &= \sum_{n=1}^{\infty} \frac{(-1)^{n-1}}{n} \left(\sum_{k=1}^{\infty}   \sum_{\mathbf{W}:|\mathbf{W}| =k} \frac{\lambda_{\mathbf{W}}}{\mathbf{W}!}\mathcal{D}_\mathbf{W}\mathcal{E}_{\Gamma}^{Tr}[\tilde{\rho}^{(y)}]\right)^n \\
    &=\sum_{n=1}^{\infty} \frac{(-1)^{n-1}}{n} \left(\sum_{k=1}^{\infty}   \sum_{\mathbf{W}:|\mathbf{W}| =k} \prod_{\mathbf{V} \in P_{max}(\mathbf{W})} \left(\frac{\lambda_{\mathbf{V}}}{\mathbf{V}!}\mathcal{D}_\mathbf{V}\mathcal{E}_{\Gamma}^{Tr}[\tilde{\rho}^{(y)}]\right)\right)^n \label{eq:pinned_log_expand}
\end{align}
To estimate $\log(\mathcal{E}_{\Gamma}^{Tr}[\tilde{\rho}^{(y)}])$, we need to reorganize the above equation into a cluster expansion of $\log(\mathcal{E}_{\Gamma}^{Tr}[\tilde{\rho}^{(y)}])$, formally shown below.
\begin{align}\label{eq:pinned_log_expand_reorder}
    \log(\mathcal{E}_{\Gamma}^{Tr}[\tilde{\rho}^{(y)}]) &= \sum_{k=1}^{\infty}  \sum_{\mathbf{W} \in \mathcal{G}_k} \sum_{P \, \text{partitioning} \, \mathbf{W}}  C(P) \prod_{\mathbf{V} \in P} \left(\frac{\lambda_{\mathbf{V}}}{\mathbf{V}!}\mathcal{D}_\mathbf{V}\mathcal{E}_{\Gamma}^{Tr}[\tilde{\rho}^{(y)}]\right)
\end{align}
Note that all $\mathcal{D}_\mathbf{V}\mathcal{E}_{\Gamma}^{Tr}[\tilde{\rho}^{(y)}]$ commute because they are all diagonal, so we do not have to worry about the ordering in the multiplication. Following the same reasoning in the proof of Lemma \ref{lem:graph_coloring}, we identify $C(P)$ to the graph-coloring combinatorics.
\begin{equation}
    C(P) = \frac{1}{P!} \sum_{n=1}^{|P|} \frac{(-1)^n}{n}\chi^*(n,\text{Gra}(P))
\end{equation}
Plugging it back to Eq. (\ref{eq:pinned_log_expand_reorder})
\begin{align}
    \log(\mathcal{E}_{\Gamma}^{Tr}[\tilde{\rho}^{(y)}]) &= \sum_{k=1}^{\infty}  \sum_{\mathbf{W}: |\mathbf{W}|=k} \sum_{P  \in \text{PaC}(\mathbf{W})}  \left(\frac{1}{P!} \sum_{n=1}^{|P|} \frac{(-1)^n}{n}\chi^*(n,\text{Gra}(P)\right) \prod_{\mathbf{V} \in P} \left(\frac{\lambda_{\mathbf{V}}}{\mathbf{V}!}\mathcal{D}_\mathbf{V}\mathcal{E}_{\Gamma}^{Tr}[\tilde{\rho}^{(y)}]\right) \\
    &= \sum_{k=1}^{\infty}  \sum_{\mathbf{W}: |\mathbf{W}|=k} \frac{1}{\mathbf{W}!} \left( \sum_{B  \in \text{PaC}(\text{Gra}(\mathbf{W}))}   \sum_{n=1}^{|B|} \frac{(-1)^n}{n}\chi^*(n,\text{Gra}(B))\right) \prod_{\mathbf{V} \in P} \left(\lambda_{\mathbf{V}}\mathcal{D}_\mathbf{V}\mathcal{E}_{\Gamma}^{Tr}[\tilde{\rho}^{(y)}]\right)
\end{align}
Taking the cluster derivative selects the corresponding term
\begin{align}
    \mathcal{D}_{\mathbf{W}}\log(\mathcal{E}_{\Gamma}^{Tr}[\tilde{\rho}^{(y)}]) &= \left( \sum_{B  \in \text{PaC}(\text{Gra}(\mathbf{W}))}   \sum_{n=1}^{|B|} \frac{(-1)^n}{n}\chi^*(n,\text{Gra}(B))\right) \prod_{\mathbf{V} \in P} \left(\mathcal{D}_\mathbf{V}\mathcal{E}_{\Gamma}^{Tr}[\tilde{\rho}^{(y)}]\right)
\end{align}
Now we can upper-bound the operator norm
\begin{align}
    \norm{\mathcal{D}_{\mathbf{W}}\log(\mathcal{E}_{\Gamma}^{Tr}[\tilde{\rho}^{(y)}])} &\le \left( \sum_{B  \in \text{PaC}(\text{Gra}(\mathbf{W}))}   \sum_{n=1}^{|B|} \frac{(-1)^n}{n}\chi^*(n,\text{Gra}(B)) \right) \norm{\prod_{\mathbf{V} \in P} \left(\mathcal{D}_\mathbf{V}\mathcal{E}_{\Gamma}^{Tr}[\tilde{\rho}^{(y)}]\right)}\\
    &\le \left( \sum_{B  \in \text{PaC}(\text{Gra}(\mathbf{W}))}   \sum_{n=1}^{|B|} \frac{(-1)^n}{n}\chi^*(n,\text{Gra}(B)) \right) \prod_{\mathbf{V} \in P} \norm{\mathcal{D}_\mathbf{V}\mathcal{E}_{\Gamma}^{Tr}[\tilde{\rho}^{(y)}]}
\end{align}
Where in the second line we use the sub-multiplicity of the operator norm. It remains to upper-bound $\norm{\mathcal{D}_\mathbf{V}\mathcal{E}_{\Gamma}^{Tr}[\tilde{\rho}^{(y)}]}$. This is established in Lemma \ref{lem:pinned_deriv_bound}. Therefore, we have in the end
\begin{align}
    \norm{\mathcal{D}_{\mathbf{W}}\log(\mathcal{E}_{\Gamma}^{Tr}[\tilde{\rho}^{(y)}])} &\le \left( \sum_{B  \in \text{PaC}(\text{Gra}(\mathbf{W}))}   \sum_{n=1}^{|B|} \frac{(-1)^n}{n}\chi^*(n,\text{Gra}(B)) \right) \prod_{\mathbf{V} \in P} \beta^{|\mathbf{V}|} \\
    &\le \beta^{|\mathbf{W}|} \left( \sum_{B  \in \text{PaC}(\text{Gra}(\mathbf{W}))}   \sum_{n=1}^{|B|} \frac{(-1)^n}{n}\chi^*(n,\text{Gra}(B)) \right)
\end{align}
\end{proof}

\begin{proof}[Proof of Lemma \ref{lem:pinned_cluster_deriv_upper}]
First apply Lemma \ref{lem:pinned_graph_coloring}, then apply Lemma \ref{lem:combinatorial_estimate}.
\end{proof}

\subsection{Proof of Theorem \ref{thm:classical_decay_cmi}}
Finally, we put together the ingredients and complete the proof of Theorem \ref{thm:classical decay_cmi_informal}, formally Theorem \ref{thm:classical_decay_cmi}.
\begin{proof}[Proof of Theorem \ref{thm:classical_decay_cmi}]
    First, use the post-selection trick in Proposition \ref{prop:post_select} to convert CMI to an average of post-selected mutual information.
  \begin{equation}
    I(X:Z|\mathcal{T}[Y]) = \sum_y P(\mathcal{T}[Y]=y) I(X:Z|\mathcal{T}[Y]=y)
\end{equation}
To upper-bound CMI, it suffices to upper-bound each $I(X:Z|\mathcal{T}[Y]=y)$. Next, we apply Lemma \ref{lem:pinned_gibbs} to identify the post-selected marginal distribution on $XZ$ to the marginal of the pinned Gibbs distribution $\rho^{(y)} \propto e^{H^{(y)}}$. Then, we invoke Proposition \ref{prop:pinned_mi_cmi} and \ref{prop:pinned_mi_norm_bound} to have
\begin{equation}
    I(X:Z|\mathcal{T}[Y]=y) = I_{\mathcal{E}_{Y}^{Tr}[\rho^{(y)}]} (X:Z|Y) \le \norm{H^{(y)}(X:Z|\mathcal{E}_Y^{Tr}[Y])}
\end{equation}
In the remaining part, we will use the generalized cluster expansion to upper-bound $\norm{H^{(y)}(X:Z|\mathcal{E}_Y^{Tr}[Y])}$. We evaluate the cluster expansion of $\norm{H^{(y)}(X:Z|\mathcal{E}_Y^{Tr}[Y])}$.
    \begin{equation}
    \norm{H^{(y)}(X:Z|\mathcal{E}_Y^{Tr}[Y])} = \sum_{\mathbf{W}} \frac{\lambda_{\mathbf{W}}}{\mathbf{W}!}\mathcal{D}_\mathbf{W}\norm{H^{(y)}(X:Z|\mathcal{E}_Y^{Tr}[Y])}
\end{equation}
Using Lemma \ref{lem:pinned_conn_AC}, the non-trivial clusters are those connecting $XZ$ and are themselves connected. Denote the set of connected clusters that connect $XZ$ by $\mathcal{G}^{XZ}_m$. The minimal order of $m$ is $d_{XZ}$
\begin{equation}
     \norm{H^{(y)}(X:Z|\mathcal{E}_Y^{Tr}[Y])} = \sum_{m=d_{XZ}}^{\infty} \sum_{\mathbf{W} \in \mathcal{G}^{XZ}_m} \frac{\lambda_{\mathbf{W}}}{\mathbf{W}!}\mathcal{D}_\mathbf{W} \norm{H^{(y)}(X:Z|\mathcal{E}_Y^{Tr}[Y])}
\end{equation}
$H^{(y)}(X:Z|\mathcal{E}_Y^{Tr}[Y])$ contains four terms that are all of the form $\mathcal{E}_{\Gamma}^{Tr}[\tilde{\rho}^{(y)}]$, where $\mathcal{E}_{\Gamma}^{Tr}$ is the partial trace channel acting on $\Gamma$ and $\Gamma$ contains $Y$. Thus, we apply Lemma \ref{lem:pinned_cluster_deriv_upper} to upper-bound the norm of each term in the summation
\begin{align}
     \norm{H^{(y)}(X:Z|\mathcal{E}_Y^{Tr}[Y])} &\le 4 \sum_{m=d_{XZ}}^{\infty} \sum_{\mathbf{W} \in \mathcal{G}^{XZ}_m} (2e(\mathfrak{d}+1)\beta)^{m+1}
\end{align}
Following Lemma \ref{lem:num_conn_cluster}, the number of clusters in $\mathcal{G}^{XZ}_m$ is upper-bounded by $|\mathcal{G}^{XZ}_m| \le c \, \text{min}(|\partial X|, |\partial Z|) e\mathfrak{d}(1 + e(\mathfrak{d} - 1))^{m-1} $.\begin{align}
     \norm{H^{(y)}(X:Z|\mathcal{E}_Y^{Tr}[Y])} &\le c \, \text{min}(|\partial X|, |\partial Z|) \sum_{m=d_{XZ}}^{\infty}  (1 + e(\mathfrak{d} - 1))^{m-1} (2e(\mathfrak{d}+1)\beta)^{m+1}
\end{align}
Where we absorb the factor of four into $c$. The above series converges absolutely when $\beta \le \beta_c = 1/(2 e (\mathfrak{d}+1)(1+e(\mathfrak{d}-1)))$. Furthermore, the leading order term is $O((\frac{\beta}{\beta_c})^{d_{XZ}})$, so we have in the end,
\begin{align}
I(X:Z|\mathcal{T}[Y]) \le c \, \text{min}(|\partial X|, |\partial Z|) e^{\frac{d_{XZ}}{\xi}}
\end{align}
Where $\xi=O(\frac{\beta}{\beta_c})$ gives the Markov length.
\end{proof}

\section{Long-Range CMI at Low Temperature Through Fault-Tolerant MBQC}\label{low_temp_appn}
\subsection{Preliminaries}
Given a cluster state $\ket{\psi}_{ABC}$ encoding some fault-tolerant MBQC circuit with $k$ logical qubits. $B$ denotes the region to be measured in some local basis and $AC$ denotes the boundary region to be teleported (See Fig. \ref{fig:cluster}(a)). We apply the complete bit-flip channel $\mathcal{B} = \otimes_{i \in B} \mathcal{B}_i$ to $B$, and the resulting state $\mathcal{B}[\ketbra{\psi}{\psi}_{ABC}]$ has the following form.
\begin{equation}
   \mathcal{B}[\ketbra{\psi}{\psi}_{ABC}] = \sum_b p_b \ketbra{\psi_b}{\psi_b}_{AC} \otimes \ketbra{b}{b}_B
\end{equation}
Where we label the measurement outcome by $b$ and denote the post-measurement states on $AC$ to be $\ket{\psi_b}_{AC}$ and the corresponding probability to be $p_b$. The existance of $k$ logical qubits can be rephrased as follows: the Hilbert spaces $\mathcal{H}_A$ and $\mathcal{H}_C$ on $A$ and $C$ factorize into the code spaces and their complement
\begin{align}
    \mathcal{H}_A = \mathcal{H}_{A}^{(c)} \otimes \mathcal{H}_{A}^{(c\perp)}, \mathcal{H}_C = \mathcal{H}_{C}^{(c)} \otimes \mathcal{H}_{C}^{(c\perp)}
\end{align}
Where $\mathcal{H}_{A}^{(c)}$ and $\mathcal{H}_{C}^{(c)}$ are $2^k$ dimensional code spaces. to have $k$ logical qubits, $\ket{\psi_b}_{AC}$ has to be maximally entangled in the subspace $\mathcal{H}_{A}^{(c)} \otimes \mathcal{H}_{C}^{(c)}$. Specifically, let $\mathcal{E}^{Tr}_{AC,c\perp}$ be the channel that traces out $\mathcal{H}_{A}^{(c\perp)}$ and $\mathcal{H}_{C}^{(c\perp)}$, then $\mathcal{E}^{Tr}_{AC,c\perp}[\ketbra{\psi_b}{\psi_b}_{AC}]$ is maximally entangled on $\mathcal{H}_{A}^{(c)} \otimes \mathcal{H}_{C}^{(c)}$. This means that there exists a local unitary rotation $U_{b,C}$ acting on $C$ such that 
\begin{equation}
    (I_A \otimes U_{b,C})\mathcal{E}^{Tr}_{AC,c\perp}[\ketbra{\psi_b}{\psi_b}_{AC}](I_A \otimes U_{b,C}^\dag) = \ketbra{\Phi}{\Phi}_{AC}
\end{equation}
Where $\ket{\Phi} \frac{1}{2^k} \sum_x \ket{x}_{A^{(c)}}\ket{x}_{C^{(c)}} $ denotes the Bell state on $\mathcal{H}_{A}^{(c)} \otimes \mathcal{H}_{C}^{(c)}$. Equivalently, the post-selected mutual information on the code space saturates to $2k$ for all measurement outcome.
\begin{align}
    I_{\mathcal{E}^{Tr}_{AC,c\perp}[\ketbra{\psi_b}{\psi_b}_{AC}]}(A:C) = 2k, \forall b
\end{align}

Now we heat the Now cluster state to a finite temperature by applying local dephasing channels $\mathcal{N}_i$ acting on site $i$ with rate $p$, defined in Proposition \ref{prop:cluster_gibbs_noise}. These channel heats the cluster state to a finite temperature. After applying the noise channel $\mathcal{N} = \otimes_{i \in ABC} \, \mathcal{N}_i$ to $\ketbra{\psi}{\psi}_{ABC}$, the post-selected states become mixed.
\begin{equation}\label{eq:noisy_cluster_post_select}
   (\mathcal{N} \circ \mathcal{B}) [\ketbra{\psi}{\psi}_{ABC}] \sum_b p_b \rho_{AC|b} \otimes \ketbra{b}{b}_B
\end{equation}
Here we use $\rho_{AC|b}$ to denote the post-selected states. We now define the logical error rate below.
\begin{definition}
    \label{def:logical} Under the setup of Eq. (\ref{eq:noisy_cluster_post_select}), the logical error $q$ is defined as follows. Taking any $\rho_{AC|b}$, there exists a decoding channel $\mathcal{D}_{b} = \mathcal{D}_{b,A} \otimes \mathcal{D}_{b,C}$, where $\mathcal{D}_{b,A}$ and $\mathcal{D}_{b,C}$ act on $A$ and $C$, respectively. $\mathcal{D}_{b}$ depends on the measurement outcome $b$ in general. After applying the channel, $\mathcal{D}_b[\rho_{AC|b}]$ has fidelity $1-q$ to the Bell state $\ket{\Phi}$ on the code space on average. In other words,
\begin{equation}\label{eq:average_fidelity_bound}
    \sum_b p_b F((\mathcal{E}^{Tr}_{AC,c\perp} \circ \mathcal{D}_b)[\rho_{AC|b}],\ketbra{\Phi}{\Phi}) = 1-q
\end{equation}
\end{definition}
We note that the above definition lower-bounds the logical error rate extracted in the following type of Clifford simulations \cite{gidney2021stim}: simulating each $\mathcal{N}_i$ by applying a $Z$ gate on site $i$ with probability $p$, and average over multiple trials. In this way, for each trial we always keep track of a pure Clifford state which is computationally tractable. We label the patterns of $Z$ gates applied with $m$ and denote the state on $AC$ under the measurement outcome $b$ and pattern $m$ as $\ket{\psi_{b,m}}_{AC}$. $\rho_{AC|b}$ is the same as the average of all $\ketbra{\psi_{b,m}}{\psi_{b,m}}_{AC}$ over $m$.
\begin{equation}
    \rho_{AC|b} = \sum_{m,b} p_m p_{b|m} \ketbra{\psi_{b,m}}{\psi_{b,m}}_{AC}
\end{equation}
Where $p_m$ denotes the probability that pattern $m$ occurs in the simulation and $p_{b|m}$ denotes the probability of measurement outcome $b$ conditioned on the pattern $m$. After obtaining $\ket{\psi_{b,m}}_{AC}$ in each trial, the simulation applies $\mathcal{D}_{b}$ and evaluates the fidelity to the Bell state in the logical subspace, that is, $F((\mathcal{E}^{Tr}_{AC,c\perp} \circ \mathcal{D}_b)[\ketbra{\psi_{b,m}}{\psi_{b,m}}_{AC}],\ketbra{\Phi}{\Phi})$. The logical error rate in such Clifford simulations, denoted as $q_c$, is then defined as the average of individual fidelity over all patterns and measurement outcomes.
\begin{equation}
    1-q_c = \sum_{b,m} p_m p_{b|m} F((\mathcal{E}^{Tr}_{AC,c\perp} \circ \mathcal{D}_b)[\ketbra{\psi_{b,m}}{\psi_{b,m}}_{AC}],\ketbra{\Phi}{\Phi})
\end{equation}
We now show that $q_c$ is always greater than $q$, so that having an exponentially small $q_c$ implies an exponentially small $q$
\begin{lemma}
    The logical error rate $q_c$ from Clifford simulations upper-bounds the logical error rate $q$.
    \begin{equation}
    q \le q_c
\end{equation}
\end{lemma}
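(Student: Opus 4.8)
The plan is to exhibit the decoder $\mathcal{D}_b$ used in the Clifford simulation as a legitimate competitor for the optimal decoder defining $q$, and to show that scoring this decoder trajectory-by-trajectory (as $q_c$ does) can only underestimate its fidelity on the physical mixed state $\rho_{AC|b}$. First I would fix an outcome $b$ and use the convex decomposition $\rho_{AC|b}=\tfrac{1}{p_b}\sum_m p_m p_{b|m}\ketbra{\psi_{b,m}}{\psi_{b,m}}_{AC}$ with $p_b=\sum_m p_m p_{b|m}$, which is exactly the mixture of Clifford trajectories recorded above. Since $\mathcal{E}^{Tr}_{AC,c\perp}\circ\mathcal{D}_b$ is a linear CPTP map it passes through the convex sum, giving $(\mathcal{E}^{Tr}_{AC,c\perp}\circ\mathcal{D}_b)[\rho_{AC|b}]=\tfrac{1}{p_b}\sum_m p_m p_{b|m}(\mathcal{E}^{Tr}_{AC,c\perp}\circ\mathcal{D}_b)[\ketbra{\psi_{b,m}}{\psi_{b,m}}_{AC}]$.

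The second ingredient is the behaviour of the fidelity against the fixed pure target $\ketbra{\Phi}{\Phi}$ under mixing. Because $F(\sigma,\ketbra{\Phi}{\Phi})$ is concave in $\sigma$ --- indeed it is linear, $F(\sigma,\ketbra{\Phi}{\Phi})=\bra{\Phi}\sigma\ket{\Phi}$, in the squared convention, and concave, $\sqrt{\bra{\Phi}\sigma\ket{\Phi}}$, in the root convention --- any ensemble obeys $F\big(\sum_i\lambda_i\sigma_i,\ketbra{\Phi}{\Phi}\big)\ge\sum_i\lambda_i F(\sigma_i,\ketbra{\Phi}{\Phi})$. Applying this to the decomposition above, multiplying by $p_b$, and summing over $b$ yields $\sum_b p_b\,F\big((\mathcal{E}^{Tr}_{AC,c\perp}\circ\mathcal{D}_b)[\rho_{AC|b}],\ketbra{\Phi}{\Phi}\big)\ge\sum_{b,m}p_m p_{b|m}\,F\big((\mathcal{E}^{Tr}_{AC,c\perp}\circ\mathcal{D}_b)[\ketbra{\psi_{b,m}}{\psi_{b,m}}_{AC}],\ketbra{\Phi}{\Phi}\big)=1-q_c$, where the equality is the definition of $q_c$. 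The left-hand side is the average fidelity that the single product decoder $\mathcal{D}_b$ achieves on the mixed states $\{\rho_{AC|b}\}$, which is at most $1-q$ since Definition \ref{def:logical} takes $1-q$ to be the best fidelity achievable by any product decoder. Chaining the two bounds gives $1-q\ge 1-q_c$, i.e.\ $q\le q_c$.

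The main point to handle carefully --- rather than a real obstacle --- is the bookkeeping matching the two definitions: one must check that the Clifford decoder genuinely has the allowed factorized form $\mathcal{D}_{b,A}\otimes\mathcal{D}_{b,C}$ so it qualifies as a competitor for the optimum defining $q$, and one must keep the normalization $p_b=\sum_m p_m p_{b|m}$ straight when decomposing $\rho_{AC|b}$. No quantitative estimate is needed: the entire content is that (i) passing from the per-trajectory pure-state fidelities to the physical mixed state cannot decrease the fidelity (concavity/linearity of $F$ against a fixed pure state), and (ii) optimizing the decoder cannot increase the error. If one instead reads Definition \ref{def:logical} as fixing the same decoder $\mathcal{D}_b$ as the Clifford simulation, step (ii) is vacuous and the concavity step alone already delivers $q\le q_c$.
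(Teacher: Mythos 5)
Your proposal is correct and takes essentially the same route as the paper: the paper's iterated application of joint concavity of fidelity with both first arguments fixed to $\ket{\Phi}\bra{\Phi}$ is exactly your concavity (indeed linearity, in the squared convention) of $F(\,\cdot\,,\ket{\Phi}\bra{\Phi})$ applied to the trajectory decomposition $\rho_{AC|b}=\tfrac{1}{p_b}\sum_m p_m p_{b|m}\ket{\psi_{b,m}}\bra{\psi_{b,m}}_{AC}$. Your closing observation is also the correct reading of the paper's Definition \ref{def:logical}: the same decoder $\mathcal{D}_b$ appears in both definitions, so no decoder-optimization step is needed and the concavity step alone delivers $q\le q_c$.
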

\begin{proof}
    The above Lemma is equivalent to show the following relation in fidelity.
    \begin{equation}\label{eq:fidelity_logical_clifford}
   1-q = F((\mathcal{E}^{Tr}_{AC,c\perp} \circ \mathcal{D}_b)[\rho_{AC|b}],\ketbra{\Phi}{\Phi}) \ge \sum_m p_m p_{b|m} F((\mathcal{E}^{Tr}_{AC,c\perp} \circ \mathcal{D}_b)[\ketbra{\psi_{b,m}}{\psi_{b,m}}_{AC}],\ketbra{\Phi}{\Phi}) = 1-q_c
\end{equation}
This can shown easily shown via the joint concavity of fidelity: given $\rho_1$, $\rho_2$, $\sigma_1$, and $\sigma_2$ and a $\lambda \in [0,1]$,
\begin{equation}
    F(\lambda \rho_1 + (1-\lambda) \rho_2, \lambda \sigma_1 + (1-\lambda) \sigma_2) \ge \lambda F(\rho_1,\sigma_1) + (1-\lambda) F(\rho_2,\sigma_2)
\end{equation}
We set $\rho_1=\rho_2=\ketbra{\Phi}{\Phi}$ and set different $\sigma_i$ to $\ketbra{\psi_{b,m}}{\psi_{b,m}}_{AC}$. By applying the joint concavity iteratively, we obtain Eq. (\ref{eq:fidelity_logical_clifford}).
\end{proof}
In some sense, the logical error rate defined in Definition \ref{def:logical} is the ``intrinsic'' logical error rate associated with the noise model and the decoder. On the other hand, the Clifford simulations decompose the noisy dynamics into Clifford trajectories and only provides an upper-bound on the intrinsic logical error rate.

\subsection{Proof of Theorem \ref{thm:long_range_cmi_informal} (formally Theorem \ref{thm:long_range_cmi})}
We now prove Theorem \ref{thm:long_range_cmi_informal}, formalized below.
\begin{theorem}\label{thm:long_range_cmi}
    Given a cluster state $\ket{\psi}_{ABC}$ that encodes some fault-tolerant MBQC circuit with $k$ logical qubits. $B$ denotes the region to be measured. Under the noisy measurement channel with measurement error rate $p$, suppose that $\ket{\psi}_{ABC}$ has logical error rate $q$.
    
Consider the corresponding finite-temperature cluster state $\rho^{(\beta)}_{ABC}$ with $\beta = \frac{1}{2} \log(p^{-1}-1)$ and apply product of local bit-flip channel acting on $B$ $\mathcal{B} = \mathcal{B} _1 \otimes \mathcal{B} _2 \otimes \ldots \otimes \mathcal{B} _{|B|}$. After applying the channel, the quantum hidden Markov network $\mathcal{B}[\rho^{(\beta)}_{ABC}]$ satisfies the following lower-bound on its long-range CMI:
    \begin{align}
        I_{\mathcal{B}[\rho^{(\beta)}_{ABC}]}(A:C|B) \ge 2k - O(k \sqrt{q} + q^{\frac{1}{6}})
    \end{align}
\end{theorem}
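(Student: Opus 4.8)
The plan is to reduce the conditional mutual information of the decohered Gibbs state to an average, over measurement outcomes, of the mutual information of \emph{decoded} code states, and then to apply a continuity (Fannes--Audenaert) bound on the fixed-dimensional code space so that the estimate never picks up an extensive factor. First I would fix the noise picture. By Proposition \ref{prop:cluster_gibbs_noise}, $\rho^{(\beta)}_{ABC} = \mathcal{N}[\ketbra{\psi}{\psi}_{ABC}]$ with $\mathcal{N}=\otimes_i\mathcal{N}_i$ a product of $Z$-dephasing channels at rate $1/(e^{2\beta}+1)$; the choice $\beta=\tfrac12\log(p^{-1}-1)$ makes this rate equal to $p$, matching the measurement error rate in the fault-tolerance hypothesis. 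Since the complete bit-flip $\mathcal{B}_i$ is a complete $X$-dephasing and the $\mathcal{N}_i$ are $Z$-dephasings, the two are commuting Pauli channels, so $\mathcal{B}[\rho^{(\beta)}_{ABC}]=(\mathcal{N}\circ\mathcal{B})[\ketbra{\psi}{\psi}_{ABC}]$. The $X$-dephasing on $B$ records the $X$-basis (MBQC) outcome as a classical register, the $Z$-dephasing preceding it on $B$ acts as a measurement error at rate $p$, and the $Z$-dephasing on $A,C$ is the boundary data noise; this reproduces exactly the classical--quantum state of Eq. (\ref{eq:noisy_cluster_post_select}), $\mathcal{B}[\rho^{(\beta)}_{ABC}]=\sum_b p_b\,\rho_{AC|b}\otimes\ketbra{b}{b}_B$.

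Next I would exploit the block-diagonal structure to decompose the CMI exactly. Because $B$ is classical, the quantum counterpart of the classical identity in Proposition \ref{prop:post_select} gives
\begin{equation}
I_{\mathcal{B}[\rho^{(\beta)}_{ABC}]}(A:C|B) = \sum_b p_b\, I_{\rho_{AC|b}}(A:C).
\end{equation}
This step is the crux of avoiding extensivity: a direct continuity bound on the full CMI would cost $\log\dim B$, which is extensive, whereas the decomposition isolates per-outcome mutual informations on subsystems whose relevant dimension is only that of the code space. For each outcome $b$, Definition \ref{def:logical} supplies a local decoder $\mathcal{D}_b=\mathcal{D}_{b,A}\otimes\mathcal{D}_{b,C}$ and the code-space projection $\mathcal{E}^{Tr}_{AC,c\perp}$; as these are products of channels on $A$ and on $C$, the data-processing inequality for mutual information yields $I_{\rho_{AC|b}}(A:C)\ge I_{\sigma_b}(A:C)$, where $\sigma_b=(\mathcal{E}^{Tr}_{AC,c\perp}\circ\mathcal{D}_b)[\rho_{AC|b}]$ lives on $\mathcal{H}^{(c)}_A\otimes\mathcal{H}^{(c)}_C$ of dimension $4^k$ and satisfies $F(\sigma_b,\ketbra{\Phi}{\Phi})=1-q_b$ with $\sum_b p_b q_b = q$.

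Finally I would run the continuity argument on the code space and average. Fuchs--van de Graaf converts fidelity to trace distance, $\tfrac12\norm{\sigma_b-\ketbra{\Phi}{\Phi}}_1\le\sqrt{q_b}$; since $\ket{\Phi}$ is maximally entangled on the $2^k$-dimensional code spaces, $I_{\ketbra{\Phi}{\Phi}}(A:C)=2k$, and applying Fannes--Audenaert to each of $S(A),S(C),S(AC)$ (dimensions $2^k,2^k,4^k$) gives $I_{\sigma_b}(A:C)\ge 2k-O(k\sqrt{q_b}+h(\sqrt{q_b}))$, with $h$ the binary entropy. Averaging these per-outcome lower bounds against $p_b$ and invoking Jensen (concavity of $\sqrt{\cdot}$ and of $h$) together with $\sum_b p_b q_b=q$ controls every term by a function of $q$, producing a bound of the stated order $2k-O(k\sqrt q+q^{1/6})$. (In fact the per-block estimate averages cleanly to $2k-O(k\sqrt q+\sqrt q\log(1/q))$; the precise power of $q$ is immaterial here since $q=\exp(-\omega(n))$, so any of these forms gives $\epsilon=\exp(-\omega(n))$.)

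The main obstacle, and the only genuinely quantum subtlety, is the averaging over measurement outcomes: the decoder and the Pauli byproduct frame are outcome-dependent, so one cannot average the $\sigma_b$ before bounding—the entanglement would wash out—and the mutual information must instead be bounded outcome by outcome and only then averaged, with the low-fidelity outcomes prevented from spoiling the estimate (here by the concavity/Jensen step, or if preferred by a Markov-type split of the outcomes into high- and low-fidelity sets at an optimized threshold, using $I\ge 0$ on the bad set). Supporting this are the two structural facts established along the way: the exact CMI decomposition that removes the extensive $\log\dim B$, and the commuting-Pauli-channel identity that turns the decohered Gibbs state into the classical--quantum form to which Definition \ref{def:logical} directly applies.
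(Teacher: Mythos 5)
Your proposal is correct and follows essentially the same skeleton as the paper's proof: Proposition \ref{prop:cluster_gibbs_noise} to identify $\rho^{(\beta)}_{ABC}$ with the dephased cluster state, the exact classical-register decomposition $I_{\mathcal{B}[\rho^{(\beta)}_{ABC}]}(A:C|B)=\sum_b p_b\, I_{\rho_{AC|b}}(A:C)$, data processing through the outcome-dependent channel $\mathcal{E}^{Tr}_{AC,c\perp}\circ\mathcal{D}_b$ onto the $4^k$-dimensional code space, the fidelity-to-trace-distance conversion, and Fannes--Audenaert. The one place you genuinely diverge is the final averaging of the entropy correction term: the paper bounds $\sum_b p_b H(D_b)$ via Markov's inequality, splitting outcomes at an optimized threshold $d'=(2/3)^{2/3}q^{1/3}$, which is precisely where the $O(q^{1/6})$ in the statement comes from; you instead invoke concavity of the binary entropy, $\sum_b p_b H(D_b)\le H\bigl(\sum_b p_b D_b\bigr)\le H(\sqrt{q})=O(\sqrt{q}\log(1/q))$, which is both simpler and sharper and implies the stated bound (the paper itself remarks that the exponent $1/6$ is an artifact of its technique). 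One small caution: your intermediate per-outcome inequality $H(D_b)\le h(\sqrt{q_b})$ can fail for outcomes with large $q_b$, since $h$ is decreasing past $1/2$ while $D_b$ may sit near $1/2$; the fix is exactly the Jensen step you describe, applied directly to $H(D_b)$ using $\sum_b p_b D_b\le\sqrt{q}\le 1/2$ (or, as you note, $I\ge 0$ on the bad outcomes), so the argument closes. A minor merit of your write-up is making explicit the commutation of the $X$- and $Z$-dephasing Pauli channels needed to write $\mathcal{B}[\rho^{(\beta)}_{ABC}]=(\mathcal{N}\circ\mathcal{B})[\ketbra{\psi}{\psi}_{ABC}]$ in the form of Eq.~(\ref{eq:noisy_cluster_post_select}), a step the paper uses implicitly.
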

When $q=e^{(-\Omega(n))}$, the last two terms are exponentially small in $n$, so we arrive at Theorem \ref{thm:long_range_cmi_informal} stated in the main text. Note that the power of $\frac{1}{6}$ is an artifact originating from the proof technique and has no physical significance. In fact, we can choose some other exponents, but when $q=e^{(-\Omega(n))}$, these small numbers always decay exponentially.
\begin{proof}
We first use Proposition \ref{prop:cluster_gibbs_noise} to establish the equivalence between the finite-temperature cluster state $\rho^{(\beta)}_{ABC}$ and the noisy cluster state $\mathcal{N}[\ket{\psi}\bra{\psi}_{ABC}]$. After taking the decomposition in Eq. (\ref{eq:noisy_cluster_post_select}), realize that CMI becomes a classical average of post-measurement mutual information because $B$ is now classical
\begin{equation}
        I_{\mathcal{B}[\rho^{(\beta)}_{ABC}]}(A:C|B) = \sum_b p_b I_{\rho_{AC|b}}(A:C)
\end{equation}
We will lower-bound the averaged $I_{\rho_{AC|b}}(A:C)$ using the fidelity bound in Eq. (\ref{eq:average_fidelity_bound}). Using the relation between trace distance $D(\rho, \sigma)$ and fidelity $F(\rho, \sigma)$: $D(\rho, \sigma) \le \sqrt{1-F(\rho, \sigma)}$, we have
\begin{equation}\label{eq:average_distance_bound}
    \sum_b p_b D((\mathcal{E}^{Tr}_{AC,c\perp} \circ \mathcal{D}_b)[\rho_{AC|b}],\ketbra{\Phi}{\Phi}) \le \sqrt{q}
\end{equation}
Next, we invoke the Fannes–Audenaert inequality which relates the trace distance to the entropy difference:
\begin{equation}
    |S(\rho) - S(\sigma)| \le D(\rho, \sigma) \log(d) + H(D(\rho, \sigma))
\end{equation}
Where $d$ denotes the Hilbert space dimension and $H(D(\rho, \sigma))$ is the Shannon entropy associated with the trace distance.
\begin{equation}
  H(D(\rho, \sigma)) =  -D(\rho, \sigma) \log_2(D(\rho, \sigma)) - (1-D(\rho, \sigma)) \log_2((1-D(\rho, \sigma)))
\end{equation}
Applying the Fannes–Audenaert inequality to the mutual information, we have
\begin{equation}\label{eq:cmi_fa_bound}
    \sum_b p_b I_{(\mathcal{E}^{Tr}_{AC,c\perp} \circ \mathcal{D}_b)[\rho_{AC|b}]}(A^{(c)}:C^{(c)}) \ge 2k - \sqrt{q} 2 |A^{(c)}C^{(c)}| - \sum_b p_b H(D_b) =  2k - 4k \sqrt{q} - \sum_b p_b H(D_b)
\end{equation}
Where $|A^{(c)}C^{(c)}|$ denotes the logarithm of the Hilbert space dimension of $A^{(c)}C^{(c)}$ which is $2k$. We use the short-handed notation $D_b$ to denote $D((\mathcal{E}^{Tr}_{AC,c\perp} \circ \mathcal{D}_b)[\rho_{AC|b}],\ketbra{\Phi}{\Phi})$ and use $H(D_b)$ to denote the Shannon entropy associated with $D_b$. To upper-bound the averaged $H(D_b)$, first notice that $D_b$ is always non-negative, so we apply the Markov's inequality
\begin{equation}
    P(D_b \ge d') \le \frac{\sqrt{q}}{d'}
\end{equation}
Where $d'$ is a small parameter we will choose to make the entropy bound as tight as possible (for this problem we will focus on the regime where $d' \le 0.5$). Therefore, we can upper-bound the averaged $H(D_b)$ by the following expression
\begin{equation}
    \sum_{b} p_b H(D_b) \le \frac{\sqrt{q}}{d'} + (1-\frac{\sqrt{q}}{d'}) H(d')
\end{equation}
Here, the first term corresponds to setting all $D_b \ge d'$ to be $0.5$ so $H(D_b)=1$, and setting all $D_b \le d'$ to be exactly $d'$ (remember $H(d')$ grows monotonically when $d' \le 0.5$). This maximizes the Shannon entropy in both cases.

When $d' \le 0.5$, we have upper-bound $H(d')$ by $-2 d' \log_2(d')$
\begin{equation}
    \sum_{b} p_b H(D_b) \le \frac{\sqrt{q}}{d'} - 2 (1-\frac{\sqrt{q}}{d'}) d' \log_2(d') \le \frac{\sqrt{q}}{d'} - 2 d' \log_2(d') 
\end{equation}
$-2 d' \log_2(d')$ grows slower then any power law $x^\alpha$, where $\alpha \le 1$, when $d'$ is small, so we choose an arbitrary power-law bound: $-2 d' \log_2(d') \le 3 \sqrt{d'}$.
\begin{equation}
    \sum_{b} p_b H(D_b) \le \frac{\sqrt{q}}{d'} + 3 \sqrt{d'}
\end{equation}
Now we can choose a $d'$ to minimize the above upper bound. This happens at $d'= (\frac{2}{3})^{2/3} q^{1/3}$. Correspondingly,
\begin{equation}
    \sum_{b} p_b H(D_b) \le 3 (\frac{3}{2})^{\frac{2}{3}}q^{\frac{1}{6}}= O(q^{\frac{1}{6}})
\end{equation}
Plugging this back to the CMI upper bound (Eq. (\ref{eq:cmi_fa_bound})),
\begin{equation}
    \sum_b p_b I_{(\mathcal{E}^{Tr}_{AC,c\perp} \circ \mathcal{D}_b)[\rho_{AC|b}]}(A^{(c)}:C^{(c)}) \ge  2k - O(k \sqrt{q} + q^{\frac{1}{6}})
\end{equation}
Finally, given that $\mathcal{E}^{Tr}_{AC,c\perp} \circ \mathcal{D}_b$ is a product of local channels acting on $A$ and $C$ separately, by data processing inequality $I_{\mathcal{B}[\rho^{(\beta)}_{ABC}]}(A:C|B)$ can only be bigger.
\begin{equation}
    I_{\mathcal{B}[\rho^{(\beta)}_{ABC}]}(A:C|B) \ge \sum_b p_b I_{(\mathcal{E}^{Tr}_{AC,c\perp} \circ \mathcal{D}_b)[\rho_{AC|b}]}(A^{(c)}:C^{(c)}) \ge 2k - O(k \sqrt{q} + q^{\frac{1}{6}})
\end{equation}
\end{proof}Lastly, we point out that Ref \cite{raussendorf2005long} has already argued for the existence of long-range localizable entanglement in noisy three-dimensional cluster state. Our result essentially generalizes their arguments to arbitrary fault-tolerant MBQC protocols encoded in finite-temperature cluster states. Also, we note that the choice of the cluster state Hamiltonian (Eq. \ref{eq:cluster_hamiltonian}) is essential here to connect the finite-temperature cluster states to the local noise model in MBQC. In general, one can choose an arbitrary set of local stabilizer generator of the cluster state as its parent Hamiltonian, but a generic choice does not have the property, and the finite-temperature cluster states can correspond to the non-local noise models in MBQC.

\subsection{Comparison with Ref. \cite{negari2024spacetime}}\label{comp_negari}
In this subsection, we compare our above result with Ref. \cite{negari2024spacetime}. The authors there also consider a MBQC circuit that foliates a (classical) error-correcting code. They show that the error-correcting threshold, after translated into a temperature using Proposition \ref{prop:cluster_gibbs_noise}, is identical to the critical point where the Markov length diverges. However, we note that while both works construct the same model and probe the same information-theoretic transition, the tripartition used for evaluating CMI is different. In their setup, $\textcolor{Cerulean}{A}\textcolor{Green}{B}\textcolor{Goldenrod}{C}$ are arranged in an annulus geometry (Fig. \ref{fig:setup}(a,d)). This setup probes the \emph{local} CMI, where the Markov length is finite at both high and low temperature and only diverges at the critical point. Meanwhile, in our setup, $\textcolor{Cerulean}{A}\textcolor{Green}{B}\textcolor{Goldenrod}{C}$ globally partitions the system into the bulk and boundaries and probes the \emph{global} CMI. In the low-temperature phase, this CMI is large, reflecting the teleportation of logical information; in the high-temperature phase, this CMI decays with a finite Markov length.

The distinctive behaviors between the two partitions can be understood from their operational meanings. In \cite{negari2024spacetime}, they show that finite Markov length under the annulus geometry implies the existence of a quasi-local decoder for the MQBC, and the divergence of the Markov length reflects the failure in decoding. This is also why the temperature where Markov length diverges has to coincide with the error threshold.

On the other hand, CMI under our global partition directly reflects the teleportation of logical information, and the existence of long-range CMI at low temperatures reflects the survival of logical information. Therefore, the two CMI quantities under different partitions have different operational meaning. While both quantity signifies the same information-theoretic phase transitions, they reflect fundamentally different properties.

\section{Mixed-State Phases at High Temperatures}\label{recovery}
In this section, we discuss the implication of finite Markov length in the context of mixed-state phases of matter. Specifically, we will show that all previously described high-temperature Markov networks and hidden Markov networks are in the trivial phase.

We will invoke the notion of Lindbladian in defining mixed-state phases of matter. A Lindbladian describes the dynamics of a mixed quantum state in contact with a Markovian environment.
\begin{equation}
    \frac{d}{d t} \rho = \mathcal{L}[\rho]
\end{equation}
Where the Lindbladian $\mathcal{L}$ cane be written as
\begin{equation}
    \mathcal{L}[\rho] = -i [B, \rho] + \sum_a  A^{a} \rho  A^{a \dag} - \frac{1}{2} \{A^{a\dag} A^{a}, \rho \}
\end{equation}
Here, $B$ denotes a Hermitian matrix and $A^{a}$ are called jump operators. 

Following Ref. \cite{coser2019classification}, we define two states $\rho_1$ and $\rho_2$ to be in the same phase if there exists two Lindbladians $\mathcal{L}_{12}$ and $\mathcal{L}_{21}$, in both of which all $A^a$ have $O(\text{polylog}(n))$ support and $B$ is a sum of terms with $O(\text{polylog}(n))$ support, such that
\begin{align*}
    \norm{ e^{ t \mathcal{L}_{12}}[\rho_1] - \rho_2 }_1 &= \epsilon_{12} \\
    \norm{ e^{ t \mathcal{L}_{21}}[\rho_2] - \rho_1 }_1 &= \epsilon_{21}
\end{align*}
Where $t=O(\text{polylog}(n))$ and $\epsilon_{12},\epsilon_{21}$ scale to zero as $n$ scales to infinity. $\norm{ \cdot}_1$ denotes the Schatten-one norm. In other words, two states are in the same phase if they are connected by Lindbladians with quasi-local operators and evolving for poly-logarithmic time at most. 

Another way is to replace the Lindbladian with a circuit of quasi-local quantum channels, shown in Fig. \ref{fig:phase}(a). Each grey box represents a quantum channel with $O(\text{polylog}(n))$ support and the circuit depth is $O(\text{polylog}(n))$. The previous Lindbladian construction can be converted to this channel construction via trotterization. This is essentially the mixed-state equivalent of finite-depth unitary circuits in classifying phases of matter in ground states. The connection has to be two-way, as any state is connected to the maximally mixed state through applying single-qubit depolarization channels on all sites.

Theorem \ref{thm:decay_cmi_informal} and \ref{thm:classical decay_cmi_informal} directly implies the existence of quasi-local quantum channels connecting high-temperature hidden Markov networks. 

\begin{corollary}
Consider a high-temperature commuting Gibbs state $\rho_\beta$. Suppose we have a channel $\mathcal{E}_p = \otimes_i \mathcal{E}_{i,p}$ parametrized by some parameter $p \in [0,1]$. For example, $\mathcal{E}_{i,p}$ could be the dephasing or depolarizing channel on site $i$ with rate $p$. We demand that $\mathcal{E}_p$ to be unital and commutation-preserving for all $p$ if $\rho_\beta$ is not diagonal in the computational basis. 

Under the above condition, Theorem \ref{thm:decay_cmi_informal} and \ref{thm:classical decay_cmi_informal} state that $\mathcal{E}_p[\rho_\beta]$ has a finite Markov length for all $p$ and $\beta \le \beta_c$. Applying the main conclusion in Ref. \cite{sang2024stability}, there exists a quasi-local quantum channel $\mathcal{R}_E$ where each channel has $O(\log(n))$ support and the circuit has $O(\log(n))$ depth. $\mathcal{R}$ brings $\mathcal{E}_p[\rho_\beta]$ back to $\rho_\beta$ up to inverse polynomial precision.
\begin{equation}
   \norm{(\mathcal{R}_E \circ \mathcal{E}_p)[\rho_\beta] - \rho_\beta}_1 = O(\frac{1}{\text{poly}(n)}), \forall p, \beta \le \beta_c
\end{equation}
\end{corollary}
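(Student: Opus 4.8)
The plan is to derive the corollary by feeding the uniform finite-Markov-length bounds of Theorem~\ref{thm:decay_cmi} and Theorem~\ref{thm:classical_decay_cmi} into the stability result of Ref.~\cite{sang2024stability}. This is genuinely a corollary: the analytic work is already done by the cluster-expansion estimates, and what remains is to check that their conclusions match the hypotheses required by the recoverability machinery of \cite{sang2024stability}, and then to invoke it. Concretely, I would construct an interpolating family of decohered states connecting $\rho_\beta$ to $\mathcal{E}_p[\rho_\beta]$, show that every state on this family has a Markov length bounded uniformly by $\xi=O(\beta/\beta_c)$, and apply the stability theorem to this family to produce the quasi-local recovery channel $\mathcal{R}_E$.

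First I would set up the path. Define $\rho_s := \mathcal{E}_s[\rho_\beta]$ for $s\in[0,p]$, so that $\rho_0=\rho_\beta$ (the rate-zero channel $\mathcal{E}_0$ being the identity) and $\rho_p=\mathcal{E}_p[\rho_\beta]$. By hypothesis every $\mathcal{E}_s$ along the path is unital and commutation-preserving (in the diagonal case this is automatic and the classical theorem applies with no unitality restriction). I would then apply Theorem~\ref{thm:decay_cmi}, or Theorem~\ref{thm:classical_decay_cmi} when $\rho_\beta$ is diagonal, to each $\rho_s$: for any tripartition $ABC$ one gets $I_{\rho_s}(A:C|B)\le c\,\min(|\partial A|,|\partial C|)\,e^{-d_{AC}/\xi}$. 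The key observation is that $c$ and $\xi=O(\beta/\beta_c)$ depend only on $\beta$ and the locality data $\mathfrak{d}$ of $H$, \emph{not} on $s$, because the cluster-expansion estimates used only unitality (contractivity, Lemma~\ref{lem:unital_channel_op_norm}) and commutation-preservation, both of which hold uniformly along the path. Moreover, although the theorems phrase the channel as acting on $B$, the data-processing inequality lets channels on $A$ and $C$ only decrease the CMI, so the bound applies to the fully decohered $\rho_s$. This yields a Markov length bounded uniformly in $s$ and in system size $n$.

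Next I would invoke the stability result of \cite{sang2024stability}. That result converts a uniform bound on the Markov length along a noise path into a quasi-local recovery channel: one reverses the decoherence incrementally, step by step in $s$, using at each step a Petz-type recovery map whose locality is set by the Markov length and whose per-step infidelity is controlled by the small CMI through a Fawzi--Renner-type recoverability bound. Choosing each recovery region of radius $\sim\xi\log n$ drives each step's error below $1/\mathrm{poly}(n)$; summing the $\mathrm{poly}(n)$ contributions and collecting the incremental maps into a circuit of depth $O(\log n)$ with per-channel support $O(\log n)$ gives a single channel $\mathcal{R}_E$ with
\begin{equation}
\norm{(\mathcal{R}_E\circ\mathcal{E}_p)[\rho_\beta]-\rho_\beta}_1 = O\!\left(\tfrac{1}{\mathrm{poly}(n)}\right),
\end{equation}
uniformly for all $p$ and all $\beta\le\beta_c$, which is the claim.

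The hard part will be verifying that the hypotheses of \cite{sang2024stability} are actually met, rather than the recovery construction itself (which is imported wholesale). Two points need care. The stability machinery requires finite Markov length for \emph{all} geometric tripartitions entering the incremental Petz recovery --- in particular the thickened, annular regions around each recovered site --- so I must confirm that the ``arbitrary $ABC$'' covered by Theorem~\ref{thm:decay_cmi} genuinely includes every partition the recovery uses, and that $\min(|\partial A|,|\partial C|)$ stays polynomially bounded for those regions. Second, I must keep $c$ and $\xi$ strictly uniform in $s$: this is where the assumption that $\mathcal{E}_p$ is unital and commutation-preserving \emph{for all} $p$ is essential, since any degradation of these properties along the path would let the Markov length grow and break the telescoping error estimate. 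Once these are in place, the accumulation of errors and the assembly into an $O(\log n)$-depth channel follow routinely from \cite{sang2024stability}.
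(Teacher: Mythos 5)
Your proposal matches the paper's argument: the paper likewise treats the corollary as an immediate consequence of Theorems \ref{thm:decay_cmi} and \ref{thm:classical_decay_cmi} giving a finite Markov length uniformly over the noise-parameter path $p\in[0,1]$ (which is exactly why it demands unitality and commutation-preservation \emph{for all} $p$), and then imports the recovery construction of Ref.~\cite{sang2024stability} wholesale to assemble the $O(\log n)$-support, $O(\log n)$-depth channel $\mathcal{R}_E$. Your added diligence about data processing on $A$ and $C$ and uniformity of $c$ and $\xi$ in $s$ is consistent with, and slightly more explicit than, what the paper writes.
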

The channel $\mathcal{R}_E$ is constructed from the recovery map \cite{fawzi2015quantum,wilde2015recoverability,sutter2016universal,junge2018universal}. The condition that $\mathcal{E}_p$ being unital and commutation-preserving for all $p$ is true for depolarization channel $\mathcal{D}_p = \otimes_i \mathcal{D}_{i,p}$, defined as
\begin{equation}
    \mathcal{D}_{i,p}[\rho] = (1-p) \rho + p \frac{I}{q}
\end{equation}
Given that $\mathcal{D}_1 [\rho_\beta] \propto I$ gives the infinite-temperature Gibbs state, the above corollary implies a quasi-local channel $\mathcal{R}_D$ that brings the infinite-temperature Gibbs state back to $\rho_\beta$. Thus, we show that all high-temperature commuting Gibbs states are in the same phase as the infinite-temperature Gibbs state. In addition, the above argument also shows that high-temperature hidden Markov network, under the channel restrictions, are in the same phase are the infinite-temperature Gibbs state. On the other hand, we know there are low-temperature Gibbs states, e.g. low-temperature Ising models that possess long-range correlations, are in different phases, so the Markov length has to diverge when crossing the phase boundaries. We depict the phase diagram and the quasi-local quantum channels interpolating different states in Fig. \ref{fig:phase}(b).

However, another more common way to preprate Gibbs states starting from the maximally-mixed state is via Gibbs sampling, so we discuss under what circumstances do the Gibbs samplers satisfy the quasi-locality condition that defines a phase. 

As a quickly review, we aim to construct a Lindbladian satisfying detailed balance and ergodicity in Gibbs sampling. Detailed balance ensures that the Gibbs state $\rho_\beta \propto e^{-\beta H}$ is a steady state of $\mathcal{L}$, that is $\mathcal{L}[\rho_\beta]=0$. Ergodicity ensures that the steady state of $\mathcal{L}$ is unique. Detailed balance and ergodicity together establishes that $\mathcal{L}$ generates $\rho_\beta$ in the steady state.

We will apply $\mathcal{L}$ to the maximally mixed state and discuss whether the space and time cost satisfies the requirement of mixed-state phases. Remarkably, Ref. \cite{chen2023efficient,chen2023quantum} show that operators in $\mathcal{L}$ can be chosen to have $O(\log(n))$ support while still satisfying the detailed balance condition. Ergodicity is guaranteed by choosing $A^{a}$ to generate the complete set of operators. Therefore,  it remains to upper-bound the time it takes for the Lindbladian to approach the steady state, often called the mixing time. 

To satisfy the the definition of mixed-state phases, we would need the mixing time to be $O(\text{polylog}(n))$, a property called rapid-mixing. Rapid mixing has been established for classical Gibbs states at high-temperature, in the regime where correlation decays exponentially \cite{martinelli1999lectures,guionnet2003lectures}. This is also the regime where cluster expansion converges, so our result does not improve over the previous result. In the case of commuting Hamiltonian, rapid mixing has only been established for 2-local Hamiltonian \cite{kochanowski2024rapid}, whereas our result is not sensitive to the degree of the Hamiltonian. Recently, Ref. \cite{rouze2024optimal} establishes the rapid mixing of non-commuting Gibbs states at high temperature. However, while their temperature bound is $O(1)$, it is significantly higher than our temperature bound. To sum up, our result neither contradicts nor is contained in previous works but is rather complementary.

Lastly, we point out that high-temperature Gibbs state can be written as a mixed product state \cite{bakshi2024high}. Moreover, the authors there gives a classical polynomial-time algorithm to sample the product state, thus giving an efficient algorithm to prepare high-temperature Gibbs states. Ref. \cite{yin2023polynomial} also gives a classical algorithm to sample from the high-temperature Gibbs state directly. Both results apply in the temperature range where cluster expansion converges, rendering the quasi-local recovery map unfavorable in practice. Nevertheless, the above two algorithms are spatially non-local, and therefore irrelevant to the analysis of mixed-state phases. 

\begin{figure}
\includegraphics[width=0.8\linewidth]{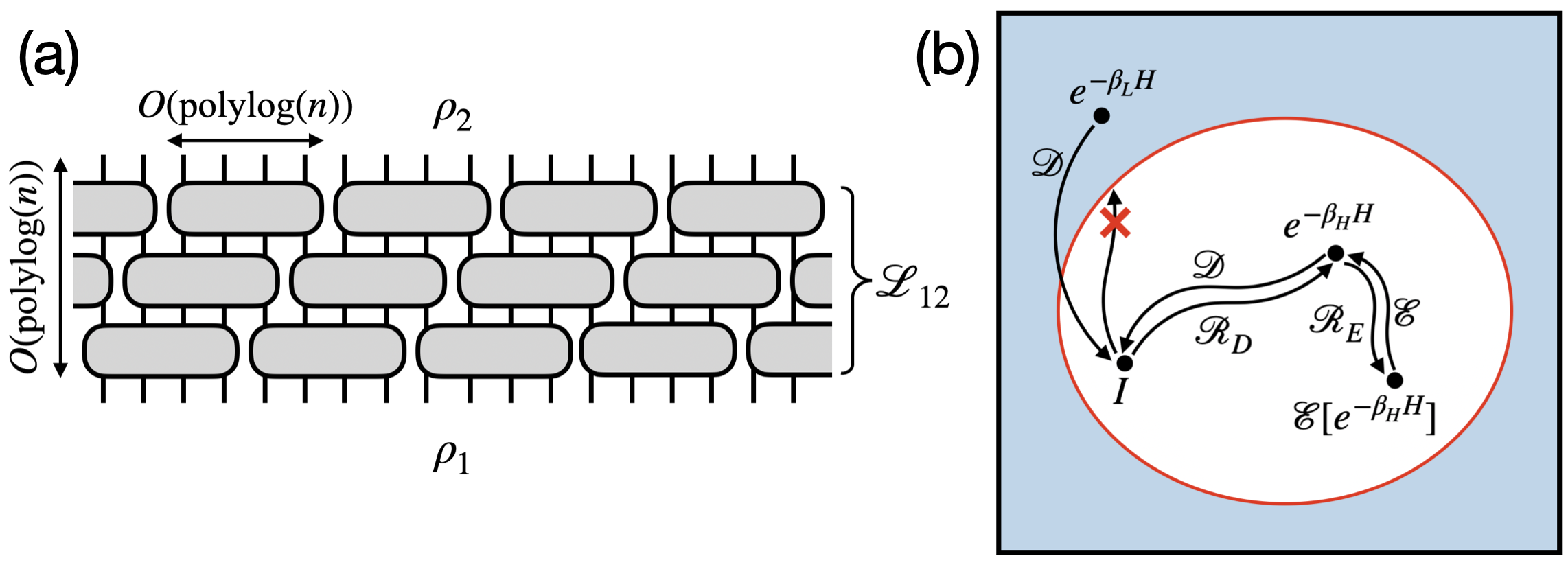}
\caption{\label{fig:phase}(a) A Quasi-local channel that connects two states. (b) A schematics of mixed-state phase diagram}
\end{figure}

\section{Finite Markov Length Implies Efficient Neural Quantum State Representations}\label{neural}
In this section, we show that high-temperature hidden Markov networks admit efficient neural quantum state representations. In Ref. \cite{yang2024can}, the author shows that for any state $\rho$, if the measurement outcome distribution has a finite Markov length (they call it CMI length), then it can be encoded into a quasi-polynomial-sized feedforward and recurrent neural network. Since measurements in the computational basis are equivalent to applying dephasing channels, our results implies efficient neural quantum state representations when $\rho$ high-temperature hidden Markov networks, under the commutation-preserving condition.
\begin{corollary}
Given a high-temperature commuting Gibbs state $\rho_\beta$ and local channels $\mathcal{E} = \otimes_i \mathcal{E}_i $. Denote $\mathcal{D} = \otimes_i \mathcal{D}_i $ as the channel that completely dephases all qubits. Under the condition of Theorem \ref{thm:decay_cmi} and \ref{thm:classical_decay_cmi}, suppose $\{\mathcal{D}_i \circ \mathcal{E}_i\}$ is also commutation-preserving, then the measurement outcome distribution $(\mathcal{D} \circ \mathcal{E})[\rho_\beta]$ can be represented as a quasi-polynomial-sized feedforward and recurrent neural network, following Theorem B.3 in Ref. \cite{yang2024can}.  
\end{corollary}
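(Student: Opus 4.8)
The plan is to reduce the statement to a direct composition of Theorem~\ref{thm:decay_cmi} (or its classical counterpart Theorem~\ref{thm:classical_decay_cmi}) with Theorem~B.3 of \cite{yang2024can}. First I would observe that sampling the computational-basis measurement outcomes of $\mathcal{E}[\rho_\beta]$ produces exactly the classical distribution whose probabilities are the diagonal entries $\bra{x}\mathcal{E}[\rho_\beta]\ket{x}$. Since the complete dephasing channel $\mathcal{D}$ annihilates all off-diagonal matrix elements while leaving the diagonal untouched, this distribution is encoded precisely in the state $(\mathcal{D}\circ\mathcal{E})[\rho_\beta]$, which is diagonal in the computational basis. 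It therefore suffices to show that this decohered Gibbs state has a finite Markov length and then invoke the neural-network representation result.

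Next I would verify that $\mathcal{D}\circ\mathcal{E} = \otimes_i(\mathcal{D}_i\circ\mathcal{E}_i)$ satisfies the hypotheses of Theorem~\ref{thm:decay_cmi}. Each $\mathcal{D}_i$ is unital because complete dephasing fixes the identity, and each $\mathcal{E}_i$ is unital by the assumption of Theorem~\ref{thm:decay_cmi}; since the composition of unital channels is unital, every $\mathcal{D}_i\circ\mathcal{E}_i$ is unital. The remaining hypothesis, that $\{\mathcal{D}_i\circ\mathcal{E}_i\}$ is commutation-preserving, is assumed in the statement of the corollary. Applying Theorem~\ref{thm:decay_cmi} to $\rho_\beta$ decohered by $\mathcal{D}\circ\mathcal{E}$ then gives, for $\beta\le\beta_c$,
\begin{equation}
    I_{(\mathcal{D}\circ\mathcal{E})[\rho_\beta]}(A:C|B) \le c\,\min(|\partial A|,|\partial C|)\, e^{-d_{AC}/\xi},\qquad \xi=O(1).
\end{equation}
When $\rho_\beta$ is instead diagonal, the single-site channels reduce to transition matrices, $\mathcal{D}_i\circ\mathcal{E}_i$ remains a valid transition matrix, commutation is automatic, and the identical exponential decay follows from Theorem~\ref{thm:classical_decay_cmi}.

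I would then translate this operator bound into the input required by \cite{yang2024can}. Because $(\mathcal{D}\circ\mathcal{E})[\rho_\beta]$ is diagonal, its von~Neumann conditional mutual information equals the Shannon CMI of the measurement distribution, so the bound above is exactly the statement that this distribution obeys the Markov condition with a uniform, $O(1)$ Markov length $\xi$. Feeding a finite CMI length into Theorem~B.3 of \cite{yang2024can} yields the claimed quasi-polynomial-sized feedforward and recurrent neural-network representation, completing the argument.

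The main obstacle I anticipate is matching the precise geometric form of our decay to the notion of \emph{CMI length} required by \cite{yang2024can}. Our bound is governed by $d_{AC}$, the minimal weight of a connected cluster linking $A$ and $C$, and carries a boundary-size prefactor $\min(|\partial A|,|\partial C|)$; I would need to check that, in the separating geometry where $B$ sits between $A$ and $C$, $d_{AC}$ grows with the width of $B$ so as to reproduce the $e^{-|Y|/\xi}$ scaling in the definition of Markov length, and that the polynomial boundary prefactor is tolerated by their construction. Confirming this alignment of definitions, together with checking that the composed dephasing neither disturbs unitality nor the cluster-counting bookkeeping, is the only step requiring genuine care; the remainder is a routine composition of results already established.
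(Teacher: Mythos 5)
Your proposal matches the paper's argument: the paper likewise identifies computational-basis measurement with the complete dephasing channel, notes that $\mathcal{D}_i\circ\mathcal{E}_i$ remains single-site unital (with commutation preservation assumed in the hypothesis), applies Theorem~\ref{thm:decay_cmi} (or Theorem~\ref{thm:classical_decay_cmi} in the diagonal case) to conclude a finite Markov length for the outcome distribution, and then invokes Theorem~B.3 of \cite{yang2024can}, together with the observation that that construction depends only on the finite Markov length and not on purity. Your additional care about matching $d_{AC}$ to the $e^{-|Y|/\xi}$ convention is a reasonable sanity check but does not change the route; the proposal is correct and essentially identical to the paper's proof.
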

We note that while the result in Ref. \cite{yang2024can} is stated for pure states, the result can be generalized to mixed states as well since the construction only depends on the finite Markov length of the measurement outcome distribution, not on the purity of the underlying quantum states.
\end{widetext}


\bibliography{apssamp}

\end{document}
%